\newcommand{\I}{\mathcal{I}}
\newcommand{\A}{\mathcal{A}}
\newcommand{\V}{\mathcal{V}}
\newcommand{\Si}{\mathcal{S}}
\newcommand{\Ocal}{\mathcal{O}}
\newcommand{\G}{\mathsf{G}}
\newcommand{\s}{\textsf{S}}
\newcommand{\rec}{\textsf{R}}
\newcommand{\NPHARD}{\mathsf{NP}\textnormal{-hard}}
\newcommand{\Rcal}{\mathcal{R}}
\newcommand{\D}{\mathcal{D}}
\newcommand{\Hi}{\mathcal{H}}
\newcommand{\Pical}{\mathcal{P}}
\newcommand{\Reals}{\mathbb{R}}
\newcommand{\X}{\mathcal{X}}
\newcommand{\argmax}{\operatornamewithlimits{argmax}}
\newcommand{\OPT}{\mathsf{OPT}}
\newcommand{\T}{\mathcal{T}}
\newtheorem{theorem}{Theorem}
\newtheorem{lemma}{Lemma}
\newtheorem{definition}{Definition}
\definecolor{mygreen}{rgb}{0.0, 0.5, 0.0}
\definecolor{myorange}{rgb}{0.55, 0.62, 1}
\newcommand{\LP}{\text{LP}}
\theoremstyle{plain}
\title{Persuading Farsighted Receivers in MDPs:\\the Power of Honesty}
\author{
	Martino Bernasconi\\
	Politecnico di Milano\\
	\texttt{martino.bernasconideluca@polimi.it}
	\And
	Matteo Castiglioni\\
	Politecnico di Milano\\
	\texttt{matteo.castiglioni@polimi.it}
	 \And
	Alberto Marchesi\\
	Politecnico di Milano\\
	\texttt{alberto.marchesi@polimi.it}
	\And
	Mirco Mutti\\
	Politecnico di Milano\\
	\texttt{mirco.mutti@polimi.it}
}
\begin{document}
	
\maketitle


	
	

\begin{abstract}
	\emph{Bayesian persuasion} studies the problem faced by an informed sender who strategically discloses information to influence the behavior of an uninformed receiver.
	Recently, a growing attention has been devoted to settings where the sender and the receiver interact \emph{sequentially},
	%
	in which the receiver's decision-making problem is usually modeled as a \emph{Markov decision process} (MDP).
	%
	%
	However, previous works focused on computing optimal information-revelation policies (a.k.a.~\emph{signaling schemes}) under the restrictive assumption that the receiver acts \emph{myopically}, selecting actions to maximize the one-step utility and disregarding future rewards.
	This is justified by the fact that, when the receiver is \emph{farsighted} and thus considers future rewards, finding an optimal Markovian signaling scheme is $\NPHARD$.
	%
	%
	In this paper, we show that Markovian signaling schemes do \emph{not} constitute the ``right'' class of policies.
	Indeed, differently from most of the MDPs settings, we prove that Markovian signaling schemes are \emph{not} optimal, and general \emph{history-dependent} signaling schemes should be considered.
	Moreover, we also show that history-dependent signaling schemes circumvent the negative complexity results affecting Markovian signaling schemes.
	%
	%
	Formally, we design an algorithm that computes an optimal and $\epsilon$-persuasive history-dependent signaling scheme in time polynomial in ${1}/{\epsilon}$ and in the instance size.
	%
	The crucial challenge is that general history-dependent signaling schemes cannot be represented in polynomial space.
	Nevertheless, we introduce a convenient subclass of history-dependent signaling schemes, called \emph{promise-form}, which are as powerful as general history-dependent ones and efficiently representable.
	Intuitively, promise-form signaling schemes compactly encode histories in the form of \emph{honest} promises on future receiver's rewards.
	%
	%
\end{abstract}

\section{Introduction}\label{sec:intro}

\emph{Bayesian persuasion}~\citep{kamenica2011bayesian} is the problem faced by an informed \emph{sender} who wants to influence the behavior of an uninformed, self-interested \emph{receiver} through the provision of payoff relevant information.
Bayesian persuasion captures many fundamental problems arising in real-world applications, \emph{e.g.}, online advertising~\citep{bro2012send},
voting~\citep{alonso2016persuading,castiglioni2019persuading,semipublic}, traffic routing~\citep{bhaskar2016hardness,castiglioni2020signaling}, recommendation systems~\citep{mansour2016bayesian},
security~\citep{rabinovich2015information,xu2016signaling}, marketing~\citep{babichenko2017algorithmic,candogan2019persuasion},
medical research~\citep{kolotilin2015experimental}, 
and financial regulation~\citep{goldstein2018stress}.

Most of the previous works study the classical, one-shot version of the Bayesian persuasion problem.
%
However, in many application scenarios it is natural to assume that the sender and the receiver interact multiple times in a sequential manner.
%
In spite of this, only a few very recent works addressed \emph{sequential} versions of the Bayesian persuasion problem~\citep{wu2022sequential,bernasconi2022,gan2022bayesian,gan2022sequential}.
In particular,~\citet{wu2022sequential}~and~\citet{gan2022bayesian,gan2022sequential} study settings where the sender and the receiver interact sequentially in a \emph{Markov decision process} (MDP).

In Bayesian persuasion problems in MDPs, at each step of the interaction both the sender and the receiver know the current state of the MDP, and the former has also access to some \emph{private observation} drawn according to a commonly-known, state-dependent distribution. 
%
%
%
The sender commits beforehand to an information-revelation policy, which is implemented by means of a \emph{signaling scheme} that sends randomized \emph{action recommendations} to the receiver, conditioned on the current (public) state and the sender's private observation.
%
%
Specifically, the sender commits to a \emph{persuasive} signaling scheme, meaning that the receiver is always incentivized to follow recommendations. 
%
%
%
%
%
At the end of each step, the next state of the MDP and the agents' rewards are determined as a function of the current state, the action actually played by the receiver, and the sender's private observation in the current step.
%
%
%
%
%

\citet{wu2022sequential}~and~\citet{gan2022bayesian,gan2022sequential} provide algorithms that compute an optimal (\emph{i.e.}, reward-maximizing) persuasive signaling scheme under the restrictive assumption that the receiver acts \emph{myopically}, selecting actions to maximize the one-step reward and disregarding future ones.
This is justified by the fact that, when the receiver is \emph{farsighted} and thus considers future rewards, finding an optimal Markovian signaling scheme is $\NPHARD$ to approximate, as shown by~\citet{gan2022bayesian} for infinite-horizon MDPs.
An analogous result also holds in finite-horizon MDPs for non-stationary Markovian signaling schemes, as we prove in this work as a preliminary result.
%

In this paper, we show that \citet{wu2022sequential}~and~\citet{gan2022bayesian,gan2022sequential} failed to provide positive results with farsighted receivers since Markovian signaling schemes do not constitute the ``right'' class of policies to consider. 
%
%
This is in stark contrast with most of the MDPs settings in which Markovian policies are optimal. Indeed, we prove that Markovian signaling schemes are \emph{not} optimal, and general \emph{history-dependent} signaling schemes should be considered.
As a result, we focus on the problem of computing an optimal persuasive history-dependent signaling scheme.
Surprisingly, we show that taking history into account allows to circumvent the negative result affecting Markovian signaling schemes.
We do that by providing an approximation scheme that finds an optimal $\epsilon$-persuasive (\emph{i.e.}, one approximately incentivizing the receiver to follow action recommendations) history-dependent signaling scheme in time polynomial in ${1}/{\epsilon}$ and the size of the problem instance.

The crucial challenge in designing our approximation scheme is that general, history-dependent signaling schemes cannot be represented in polynomial space.
Our algorithm overcomes such an issue by using a convenient subclass of history-dependent signaling schemes, which we call \emph{promise-form} signaling schemes.
The core idea of such signaling schemes is to compactly encode all the relevant information contained in an history into a \emph{promise} on future receiver's rewards.
At each step of the process, a promise-form signaling scheme does \emph{not} only determines an action recommendation for the receiver, but it also makes a promise to them. 
First, we prove that promise-form signaling schemes are as powerful as general history-dependent ones.
Then, we show how an optimal $\epsilon$-persuasive promise-form signaling scheme can be computed in polynomial time by means of a recursive procedure.
To do that, we rely on a crucial result showing that, for signaling schemes that \emph{honestly} keep promises made to the receiver, persuasiveness constraints can be expresses as conditions defined locally at each step of the MDP, since the receiver only cares about sender's promises on their future rewards.
\footnote{The complete proofs of all the results in the paper can be found in Appendices~\ref{sec:app_promise},~\ref{sec:app_dp},~and~\ref{sec:app_oracle}.}

%
\section{Preliminaries}\label{sec:preliminaries}

In this work, we study Bayesian persuasion problems where a \emph{farsighted} receiver takes actions in a \emph{time-inhomogeneous finite-horizon} MDP~\citep{puterman2014markov}.
Formally, a problem instance is a tuple
%
\[
	\left(\Si,\A,\Hi ,\Theta,\{r^\s_h\}_{h\in \Hi}, \{r^\rec_h\}_{h\in \Hi}, \{p_h\}_{h\in \Hi}, \{\mu_{h}\}_{h \in \Hi}, \beta\right), \text{where:}
\]
$\Si$ is a finite set of states, $\A$ is a finite set of receiver's actions available in each state, $\Hi\coloneqq[1,\ldots, H]$ is a set of time steps with $H$ being the time horizon, $\Theta$ is a finite set of sender's private observations, $r^\s_h, r^\rec_h:\Si\times\A\times\Theta\to[0,1]$ are reward functions for the sender and the receiver, respectively, $p_h:\Si\times\A\times\Theta\to\Delta(\Si)$ is a transition function, $\mu_{h}: \Si \to \Delta(\Theta)$ is a function defining probabilities of sender's private observations at each state, 
while $\beta\in\Delta(\Si)$ is the initial state distribution.\footnote{In this paper, we let $\Delta(X)$ be the set of all the probability distributions over a finite set $X$, with $d(x)$ denoting the probability assigned to $x \in X$ by a distribution $d \in \Delta(X)$. Moreover, given a function $f: X \to \Delta(Y)$ with $X,Y$ any two finite sets, for every $x \in X$ we denote by $f(y|x)$ the probability that $f(x)$ assigns to $y \in Y$.}

%
%

We consider the most general setting in which the sender commits to a \emph{non-stationary} and \emph{non-Markovian signaling scheme} (henceforth called \emph{history-dependent} signaling scheme for short).
For every step $h \in \Hi$ and state $s_h \in \Si$ reached at that step, a history-dependent signaling scheme defines a randomized mapping from sender's private observations to action recommendations for the receiver, based on the whole history of states and receiver's actions observed up to step $h$.\footnote{In this work, we use \emph{direct} signaling schemes which send signals in the form of action recommendations for the receiver. This is w.l.o.g. by well-known revelation principle arguments~\citep{kamenica2011bayesian}.}
Formally, in the following we let $\T_h$ be the set of all the possible \emph{histories} up to step $h$, which is defined as
\[
	\T_h \coloneqq \{ \tau \mid \tau = (s_1, a_1, \ldots, s_{h-1}, a_{h-1}, s_h ) : s_i\in\Si, a_i\in\A \},
\] 
while we let $\T := \T_1 \cup \ldots \cup \T_H$ be the set of all the possible histories (of any length).\footnote{By a simple revelation-principle-style argument, we can focus w.l.o.g.~on signaling schemes which depend on histories that do \emph{not} include the sequence of private observations observed by the sender.}
Then, a history-dependent signaling scheme is defined as a set $\phi := \{ \phi_\tau \}_{\tau \in \T}$ of functions $\phi_\tau:  \Theta \to \Delta (\A)$, which define a mapping from sender's private observations to probability distributions over action recommendations for every possible history.
%

\begin{wrapfigure}[19]{R}{0.5\textwidth}
	\vspace{-0.5cm}
\begin{minipage}{0.5\textwidth}
\begin{algorithm}[H]
    \caption{Sender-receiver interaction}
    \label{alg:interaction_process}
    \begin{algorithmic}[1]
        \State Sender publicly commits to $\phi := \{ \phi_\tau \}_{\tau \in \T}$\label{line:commit}
        \State $s_1 \sim \beta$, $\tau_1 \gets (s_1)$, $deviated \gets \texttt{False}$
        \For{each step $h = 1, \ldots, H$}
        		\State Sender observes $\theta_h \sim \mu_{h} (s_h)$
        		\If{$deviated$ is \texttt{False}}
        			\State Sender recommends $a_h \sim \phi_{\tau_h} (\theta_h)$\label{line:sampling_recc}
        			\State Receiver plays $\widehat a_h \in \A$
        		\Else
        			\State Receiver plays $\widehat a_h \in \A$
        		\EndIf
        		\State Sender collects reward $r^\s_h (s_h, \widehat a_h, \theta_h)$
        		\State Receiver collects reward $r^\rec_h (s_h, \widehat a_h, \theta_h)$
        		\State Next state: $s_{h + 1} \sim p_h ( s_h, \widehat a_h, \theta_h)$
        		\State Update history: $\tau_{h + 1} \gets \tau_h \oplus (\widehat a_h, s_{h+1})$
        		\If{$a_h \neq \widehat a_h$}
        			\State $deviated \gets\texttt{True}$
        		\EndIf
        \EndFor
    \end{algorithmic}
\end{algorithm}
\end{minipage}
\end{wrapfigure}
The interaction between the sender and the receiver goes as follows (Algorithm~\ref{alg:interaction_process}).
\textbf{(i)} The sender publicly commits to a history-dependent signaling scheme $\phi := \{ \phi_\tau \}_{\tau \in \T}$.
\textbf{(ii)} An initial state $s_1 \sim \beta$ is drawn.
\textbf{(iii)} At each step $h \in \Hi$, both agents observe the current state $s_h \in\Si$ and the sender also gets a private observation $\theta_h \in\Theta$ drawn according to $\mu_{h}(s_h)$, with the function $\mu_h$ being known to both the sender and the receiver.
\textbf{(iv)} The sender communicates to the receiver an action recommendation $a_h \in \A$ sampled according to $\phi_{\tau_h}(\theta_h)$, where $\tau_h \in \T_h$ is the history up to step $h$.
\textbf{(v)} The receiver plays an action $\widehat a_h \in \A$ (possibly different from $a_h$) which maximizes their future expected rewards given a posterior on $\theta_h$ computed according to the recommended action. 
%
%
\textbf{(vi)} The sender and the receiver get rewards $r^\rec_h(s_h,a_h,\theta_h)$ and $r^\s_h(s_h,a_h,\theta_h)$, respectively.
\textbf{(vii)} If $h = H$, the interaction ends, otherwise the next state $s_{h+1} \sim p_h(s_h,\widehat a_h,\theta_h)$ is drawn and the interaction continues to step $h+1$ starting from the third point.
As customary in the literature (see, \emph{e.g.},~\citep{bernasconi2022}), we assume that, if the receiver does \emph{not} follow recommendations at some step $h \in \Hi$ by playing an action $\widehat a_h \neq a_h$, then the sender stops issuing future recommendations to the receiver.
%

For ease of presentation, we introduce the sender's value function $V_h^{\s,\phi}: \T_h \to \mathbb{R}$ to encode sender's expected rewards by using a history-dependent signaling scheme $\phi := \{ \phi_\tau \}_{\tau \in \T}$ from step $h \in \Hi$ onwards, assuming the receiver always follows recommendations.
Given a history $\tau=(s_1,a_1,\ldots,s_{h-1}, a_{h-1}, s_h) \in \T_h$ up to step $h$, such a value function is recursively defined as:
%
%
\begin{equation*}
	V_h^{\s, \phi} (\tau) = \sum\limits_{\theta\in\Theta}\sum\limits_{a\in\A}\mu_h(\theta| s_h)\phi_{\tau}(a| \theta)\left(r^\s_h(a, s_h, \theta)+\sum\limits_{s^\prime\in\Si}p_{h}(s^\prime|s_h,a,\theta)V_{h+1}^{\s,\phi}(\tau\oplus (a, s^\prime))\right).
\end{equation*}
Similarly, we introduce the receiver's action-value function $V^{\rec,\phi}_h : \A\times\T_h \to \mathbb{R}$ to encode the receiver's expected rewards by following sender's action recommendations from step $h \in \Hi$ onwards.
Formally, given a history $\tau=(s_1,a_1,\ldots,s_{h-1}, a_{h-1}, s_h) \in \T_h$ up to step $h$, the receiver's expected reward by following the recommendation to play $a\in\A$ is recursively defined as follows:
%
\begin{equation*}
	V_h^{\rec,\phi} (a, \tau) = \sum\limits_{\theta\in\Theta}\mu_h(\theta| s_h)\phi_{\tau}(a| \theta)\left(r^\rec_h(a, s_h, \theta)+\sum\limits_{s^\prime\in\Si}p_{h}(s^\prime|s_h,a,\theta)V_{h+1}^{\rec,\phi}(\tau\oplus (a, s^\prime))\right),
\end{equation*}
where $V^{\rec,\phi}_h : \T_h \to \mathbb{R}$ is such that $V^{\rec,\phi}_h(\tau)=\sum_{a\in\A}V^{\rec,\phi}_h(a, \tau)$ for every $h \in \Hi$ and $\tau \in \T_h$.

Finally, we introduce an additional receiver's value function, denoted by $\widehat{V}^\rec_h: \Si\to\mathbb{R}$, to encode receiver's expected rewards from step $h \in \Hi$ onwards \emph{after having deviated} from recommendations.
Formally, for every state $s \in \Si$, such a value function is recursively defined as follows:
\begin{equation*}
	\widehat{V}_h^\rec (s) = \max_{a \in \A} \, \sum\limits_{\theta\in\Theta}\mu_h(\theta| s)  \left(r^\rec_h(a, s, \theta)+\sum\limits_{s^\prime\in\Si}p_{h}(s^\prime|s,a,\theta)\widehat{V}_{h+1}^\rec(s^\prime)\right),
\end{equation*}
where the maximum operator encodes the fact that the receiver plays so as to maximize future rewards without knowledge of realized sender's private observations after having deviated.\footnote{Notice that, in all the steps reached after having deviated from recommendations, the receiver does \emph{not} get any clue about the sender's private information, and, thus, their expected reward only depends on the current state (\emph{not} on the history). In other words, after deviating the receiver is playing a new MDP in which, by taking expectations with respect to $\mu_h$, rewards and transition probabilities are defined as $\tilde{r}_h (s, a) := \sum_{\theta\in\Theta} \mu_h (\theta | s) r^\rec_h (s, a, \theta)$ and $\tilde{p}_h (s' | s, a) := \sum_{\theta\in\Theta} \mu_h (\theta | s) p_h (s' | s, a, \theta)$, respectively, for all $h \in \Hi$, $s \in \Si$, $a \in \A$, and $s' \in \Si$.}
%

By the revelation principle~\citep{kamenica2011bayesian}, it is well known that in order to find an optimal signaling scheme it is possible to focus w.l.o.g. on (direct) history-dependent signaling schemes under which the receiver is always incentivized to follow sender's action recommendations.
These are called \emph{persuasive} signaling scheme, and they are formally defined as follows:
%
%
\begin{definition}[$\epsilon$-persuasiveness]\label{def:eps_persuasive}
	Let $\epsilon \geq 0$.
	A history-dependent signaling scheme $\phi:=\{\phi_\tau\}_{\tau\in \T}$ is said to be $\epsilon$-persuasive if, for every step $h \in \Hi$, history $\tau=(s_1,a_1,\ldots,s_{h-1}, a_{h-1}, s_h) \in \T_h$ up to step $h$, and pair of actions $a,a^\prime \in \A$, the following holds:
	\textnormal{
	\begin{align*}
		 V_h^{\rec,\phi}(a,\tau)\ge  \sum_{\theta\in\Theta}  \mu_h (\theta | s_h) \phi_{\tau} (a |\theta)\left(r^\rec_{h}(s_h,a^\prime,\theta)+\sum\limits_{s^\prime\in\Si} p_h(s^\prime|s_h, a^\prime, \theta)\widehat{V}_{h + 1}^{\rec} (s^\prime) \right) -\epsilon.
	\end{align*}}%
	Moreover, we say that the signaling scheme is \emph{persuasive} if the conditions above hold for $\epsilon=0$.
	We denote the set of all the persuasive signaling schemes by $\Phi$.
\end{definition}

In conclusion, the goal is to find an \emph{optimal} signaling scheme for the sender, which is one achieving a sender's expected reward (from step one) $V^{\s,\phi}$ greater than  or equal to $\OPT$, defined as follows:
\begin{align*}
	\OPT:=\max_{\phi\in \Phi} \, V^{\s,\phi}, \quad \text{where we let } V^{\s,\phi}:=\sum_{s\in\Si} \beta(s) V_1^{\s,\phi} ((s)).
\end{align*}
%
%
%
\section{History-dependent signaling schemes are necessary} \label{sec:necessary}

Previous works studying Bayesian persuasion in MDPs~\citep{wu2022sequential, gan2022bayesian} focus on Markovian signaling schemes, in which the action recommendation at step $h $ only depends on the current state $s_h $ and private observation $\theta_h$. Indeed, considering this class of signaling schemes is sufficient to optimize the utility of a sender facing a myopic receiver.
Here, we show that this is \emph{not} the case when the receiver is farsighted. In particular, we show that non-stationary Markovian signaling scheme are suboptimal.
To do so, we  show that there exists an MDP (see Appendix~\ref{apx:example}) in which the optimal persuasive signaling scheme is history-dependent.\footnote{We defer to Appendix~\ref{apx:example} the complete description of the instance, together with calculations of the optimal signaling schemes (for each class) and their corresponding value functions.} 
%
Intuitively, an history-dependent signaling scheme can adjust action recommendations depending on the choices available to the receiver in previous steps.
Thus, if the receiver had profitable opportunities in the past, the sender must provide a larger expected reward to the receiver in order to be persuasive.
Otherwise, the sender can aggressively maximize their expected rewards irrespective of receiver's ones.
%
%
Formally:
\begin{theorem}\label{th:historyisnecessary}
	There exist instances in which a persuasive history-dependent signaling scheme guarantees sender's expected reward strictly greater than that obtained by an optimal persuasive non-stationary Markovian signaling scheme.
\end{theorem}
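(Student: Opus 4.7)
The plan is to prove the separation by exhibiting an explicit small MDP instance, computing the value of an optimal non-stationary Markovian signaling scheme and of a specific history-dependent signaling scheme on that instance, and showing that the latter is strictly larger while being persuasive. The conceptual idea, following the intuition stated before the theorem, is to engineer an MDP where a single state is reachable at some step $h$ via two different histories $\tau, \tau' \in \T_h$ along which the receiver had different past opportunities. A Markovian scheme must use the same conditional distribution $\phi_h(\cdot \mid s_h, \theta_h)$ on both histories and is therefore forced to satisfy the more stringent of the two incentive constraints uniformly, whereas a history-dependent scheme may use an aggressive, sender-favoring distribution at $\tau$ and a more generous one at $\tau'$, striking a strictly better average.

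Concretely, I would design a horizon-$H$ instance (with $H = 2$ or $H = 3$ small enough to compute everything by hand) with tiny $\Si, \A, \Theta$. At step $1$ the receiver's private-observation-induced posterior creates a choice between two action recommendations whose \emph{immediate} sender and receiver rewards trade off; the step-$1$ transitions would be designed so that the step-$2$ state distribution does not distinguish the step-$1$ recommendation, so any Markovian scheme collapses the two histories at step $2$ into a single local constraint. At step $2$ I would place an action pair where one action is very good for the sender but only mediocre for the receiver, calibrated so that the sender would like to recommend it with high probability, but can do so only if the step-$2$ persuasiveness inequality of Definition~\ref{def:eps_persuasive} (which compares $V_h^{\rec,\phi}(a,\tau)$ against the deviation benchmark involving $\widehat{V}^\rec_{h+1}$) is slack enough, and this slack depends on which step-$1$ recommendation is already on the trajectory.

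The main technical step is then a direct computation. First, I would write out the linear program encoding optimal \emph{persuasive} non-stationary Markovian signaling (variables $\phi_h(a \mid s_h, \theta_h)$, with the persuasiveness conditions of Definition~\ref{def:eps_persuasive} specialized so that $\phi_{\tau}$ depends on $\tau$ only through $s_h$) and solve it in closed form, reading off the value $V^{\s, \phi^{\mathrm{Mar}}}$. Second, I would exhibit a history-dependent scheme $\phi^\star$ that agrees with a specific persuasive Markovian scheme at step $1$, is aggressive (high probability on the sender-favored action) at step $2$ along the history where the receiver's step-$1$ incentive constraint was strictly slack, and is conservative along the other history, so that both step-$1$ and step-$2$ constraints of Definition~\ref{def:eps_persuasive} are satisfied along every $\tau \in \T$. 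Finally, unrolling the recursion that defines $V_h^{\s,\phi^\star}$ and comparing to $V^{\s,\phi^{\mathrm{Mar}}}$ yields a strictly positive gap, hence the claim.

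The main obstacle is parameter calibration: rewards, observation distributions, and transitions must be chosen so that, simultaneously, (i) the step-$2$ Markovian persuasiveness constraint is strictly binding when averaged over both step-$1$ histories, (ii) it is strictly slack along one of them taken alone (so that a history-dependent refinement gains something), and (iii) the deviation values $\widehat{V}^\rec_h$ computed as in the excerpt do not make any off-path deviation from $\phi^\star$ attractive. Because the theorem only asserts existence, once the parameters are pinned down (as done in Appendix~\ref{apx:example}) the separation reduces to verifying two small linear systems and a numerical comparison.
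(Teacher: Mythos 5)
Your plan follows exactly the route the paper takes: the proof of this theorem is nothing more than the explicit three-step instance of Appendix~\ref{apx:example}, in which one state ($s_3$ there) is reachable through two histories along which the receiver's foregone opportunities differ, the optimal Markovian scheme is worth $25$, and a history-dependent scheme that tailors its recommendations at the merged state to the path taken is persuasive and worth $30$. So the conceptual idea is the right one and matches the paper's.

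As a proof, however, there are genuine gaps. First, for an existence statement the explicit instance together with the verified numbers \emph{is} the entire proof, and this is precisely the part you defer as ``parameter calibration''; nothing is established until a concrete instance is written down and all constraints of Definition~\ref{def:eps_persuasive} are checked along every history. Second, your proposed certification of the Markovian benchmark does not work as stated: the set of persuasive non-stationary Markovian schemes is \emph{not} described by a linear program in the variables $\phi_h(a \mid s,\theta)$, since both $V_h^{\rec,\phi}$ and the sender's value are multilinear across steps --- this nonconvexity is exactly why the paper's next theorem shows that optimizing over this class is $\NPHARD$. In a tiny instance one can still upper-bound the value of \emph{every} persuasive Markovian scheme, but this requires a direct case analysis (either the single policy at the merged state is generous enough to sustain persuasion on the demanding path, which sacrifices sender value on the other path, or it is not, which forfeits that path), not a closed-form LP solution. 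A related imprecision: under Definition~\ref{def:eps_persuasive}, a Markovian scheme induces \emph{identical} obedience constraints at the merged state for both histories (both sides depend on the history only through the current state and the future scheme), so there is no ``more stringent of the two'' constraints there; the gain from history-dependence comes from path-dependent continuation values relaxing the \emph{upstream} persuasiveness constraints, and the calibration of your instance has to be driven by that constraint rather than by history-dependent slack at the merged step.
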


Moreover, we show that optimal non-stationary Markovian signaling schemes are $\NPHARD$ to approximate in polynomial time, even when the persuasiveness requirement is relaxed. This further motivates the use of history-dependent signaling schemes when addressing Bayesian persuasion problems in finite-horizon MDPs with a farsighted receiver.
Formally, we prove the following.:
\begin{restatable}{theorem}{hardness}
	There exist two constants $\alpha<1$ and $\epsilon>0$ such that computing an $\epsilon$-persuasive non-stationary Markovian signaling that provides the sender with at least a fraction $\alpha$ of the optimal sender's expected reward $\OPT$ is $\NPHARD$.
%
%
\end{restatable}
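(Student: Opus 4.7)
The plan is to reduce from a gap version of \textsc{Max-3Sat} (via the PCP theorem) to the promised approximation problem, in the spirit of the infinite-horizon stationary hardness of~\citet{gan2022bayesian} but now adapted to the finite-horizon, non-stationary Markovian setting. The key feature we exploit is that, even though a non-stationary Markovian scheme is allowed to vary with $h$, it must still commit to a single distribution $\phi_h(\cdot\mid s)$ at each pair $(h,s)$; persuasiveness at $(h,s)$ then couples this distribution to all downstream distributions through the value function $V^{\rec,\phi}_h$, yielding a coupled bilinear feasibility problem into which the combinatorial structure of \textsc{Max-3Sat} can be embedded.

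Concretely, I would build an MDP of horizon $H=\Theta(n)$ for an $n$-variable, $m$-clause formula $\varphi$. The first $n$ steps form a \emph{variable phase}: at step $i$ the state is a variable-state $s_{x_i}$, the sender's private observation $\theta_h\in\{0,1\}$ is drawn uniformly, and the marginal $\phi_i(\cdot\mid s_{x_i})$ is interpreted as an encoded assignment to $x_i$. Transitions along this phase carry a compact polynomial-size summary of which clauses of $\varphi$ have already been satisfied by the assignment in progress, and a subsequent \emph{clause phase} pays the sender a bonus equal to the number of satisfied clauses. The rewards on the deviation branch and the values $\widehat{V}^{\rec}$ are calibrated so that the per-$(h,s)$ persuasiveness inequality tightens precisely when the sender would like to report an assignment incompatible with the clause subsequently tested. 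In the YES case, the signaling scheme that deterministically reads out a satisfying assignment is persuasive and attains sender value $V_{\textnormal{YES}}$; in the NO case, where at most $(1-\delta)m$ clauses can be satisfied, the PCP gap propagates through the value function to give sender value at most $V_{\textnormal{NO}}\leq \alpha\,V_{\textnormal{YES}}$ for an absolute $\alpha<1$, and this bounds any $\epsilon$-persuasive Markovian scheme below $\alpha\,\OPT$.

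The hard part will be making the gap robust to the additive $\epsilon$-slack of Definition~\ref{def:eps_persuasive}, which requires controlling two effects at once. First, a Markovian scheme could try to evade the reduction by \emph{mixing} over encoded assignments at each $s_{x_i}$; to rule this out, the clause-bonus structure should be strictly concave in the agreement between the mixed assignment and each clause, so that non-deterministic assignments strictly lose a fixed per-clause amount of sender utility rather than interpolating between YES and NO. Second, the $\epsilon$-slack enters the persuasiveness inequality at each of the $n$ variable-states and could, in principle, accumulate through the bilinear recursion of $V^{\rec,\phi}$; I would therefore rescale receiver rewards so that one unit of $\epsilon$-slack at a single state is worth strictly less than $\delta/n$ units of sender utility, after which $\epsilon$ may be fixed as a small absolute constant relative to the PCP gap $\delta$. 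Verifying that these calibrations are simultaneously compatible, and that the constructed MDP has polynomial size with all rewards in $[0,1]$, is the central technical burden of the argument.
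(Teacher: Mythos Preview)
Your proposal has a genuine gap at its most load-bearing step. The phrase ``transitions along this phase carry a compact polynomial-size summary of which clauses have already been satisfied'' is doing all the work and, as written, cannot be made precise: to pay the sender a clause bonus at the end, the state reached after the variable phase must determine which clauses are satisfied, and that depends on the entire assignment $(x_1,\dots,x_n)$, so any faithful summary needs exponentially many states. If instead the state records only the current index $i$, the clause phase cannot see the assignment at all, and the sender's reward cannot be made to depend on it. The alternative you may have in mind---select a random clause, then route to a state indexed only by one of its variables---does not obviously create hardness either, because encoding both literal signs in a way that forces $x_i$ and $\neg x_i$ to be answered consistently requires an additional gadget that you have not described. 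Your second concern, $\epsilon$-slack accumulating across $\Theta(n)$ persuasiveness constraints, is also not resolved by the proposed rescaling: shrinking receiver rewards so that one unit of slack is ``worth $\delta/n$ of sender utility'' makes the additive constant $\epsilon$ dominate the receiver's value scale, so the persuasiveness inequalities become nearly vacuous rather than tighter; getting fixed $(\alpha,\epsilon)$ out of a horizon growing with $n$ is genuinely delicate.

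The paper sidesteps both issues with a very different construction: a reduction from gap vertex cover on cubic graphs with \emph{constant} horizon $H=3$. All the combinatorics sits in the state space (one state $s_v$ per vertex and $s_e$ per edge), not in time. From the start there are two branches: one passes through a uniformly random edge $s_e$ and then to one of its endpoints, the other passes (via two dummy steps) uniformly to a random vertex $s_v$. Persuading the receiver to enter the edge branch forces the continuation value at the endpoint reached to be close to $1$, which happens only if the sender fully reveals at that vertex; hence the set of ``revealing'' vertices must cover almost all edges. The sender's utility on the second branch is $\tfrac12$ exactly at non-revealing vertices, so maximizing it is minimizing the cover. Because the Markovian constraint forces the same signal at $s_v$ on both branches, and because $H$ is constant, the $\epsilon$-slack enters at $O(1)$ states and can be fixed to an absolute constant tied to the APX gap of cubic vertex cover.
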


\section{A sufficient subclass of efficiently-representable signaling schemes}\label{sec:promise}


Working with history-dependent signaling schemes begets unavoidable computational issues.
These are due to the fact that explicitly representing such signaling schemes requires a number of bits growing exponentially in the size of the problem instance, since the number of possible histories is exponential in the time horizon $H$. 
In this section, we show how to circumvent such an issue by introducing a convenient subclass of signaling schemes---called \emph{promise-form} signaling schemes---which are efficiently representable while being as good as history-dependent ones.
In particular, our main result in this section (Theorem~\ref{th:promise_are_sufficient}) shows that there always exists a promise-form signaling scheme which results in a sender's expected rewards equal to its optimal value $\OPT$.
%
%

\subsection{Promise-form signaling schemes}

A promise-form signaling scheme is defined by a set $\sigma :=\{(I_h, \varphi_h, g_h)\}_{h\in\Hi}$ of triplets, where:
\begin{itemize}[leftmargin=8mm]
	\item $I_h:\Si\to 2^{[0,H]}$ is a function defining, for every state $s \in \Si$, a finite set $I_{h}(s) \subseteq [0, H]$ of \emph{promises} for step $h \in \Hi$.
	%
	We add the additional requirement that $0\in I_1(s)$ for all $s \in \Si$, and, for ease of notation, we set $I_{H+1}(s) := \{0\}$ for every $s \in \Si$ and $\I := \bigcup_{h \in \Hi} \bigcup_{s \in \Si} I_{h}(s)$.
	%
	%
	%
	\item $\varphi_h : \Si \times   \I \times \Theta  \to \Delta(\A)$ is an \emph{action-recommendation strategy} to be employed at step $h \in \Hi$, where $\varphi_h (a | s, \iota, \theta)$ is the probability of recommending action $a \in \A$ in state $s \in \Si$ when the promise is $\iota \in I_{h}(s)$ and the sender's private observation is $\theta \in \Theta$.
	%
	%
	\item $g_h : \Si \times \A   \times \I \times \Si \to \I $ is a \emph{promise function} for step $h \in \Hi$ such that, whenever $h \le H$ and $\iota \in I_{h}(s)$, $g_h (s, a, \iota, s') \in I_{h + 1}(s')$ represents the promise for step $h+1$ if the next state is $s' \in \Si$ and, at the current step $h$, action $a \in \A$ is recommended in state $s \in \Si$.\footnote{Notice that, in order to completely specify a promise-form signaling scheme $\sigma :=\{(I_h, \varphi_h, g_h)\}_{h\in\Hi}$, it is sufficient to specify the functions $\varphi_h$ and $g_h$, since the functions $I_h$ can always be inferred by looking at the images of the functions $g_h$. However, we included $I_h$ in the definition of promise-form signaling schemes since this will considerably ease notation when dealing with them in Sections~\ref{sec:DP}~and~\ref{sec:oracle}.}
	%
	%
	%
\end{itemize}

Intuitively, the rationale behind promise-form signaling schemes is that, when reaching a state $s_h \in \Si$ at step $h \in \Hi$, the sender ``promises'' a value $\iota\in I_h(s_h)$ to the receiver, representing a lower bound on future rewards obtained by following recommendations.
Moreover, sender's action recommendations only depend on the current state $s_h \in \Si$, the sender's private observation $\theta_h \in \Theta$, and the current promise $\iota \in I_h (s_h)$, through the distribution $\varphi_h(s_h,\iota_h,\theta_h)$.
Notice that it is always possible to infer the current promise by looking at the history of past states and action recommendations, by ``composing'' the functions $g_{h'}$ for the steps $h' < h$.
This crucially avoids having to specify an explicit dependency on the full history of past states and action recommendations.
%

Notice that a promise-form signaling scheme as defined above does \emph{not} automatically guarantee that the sender \emph{honestly} keeps their promises.
Indeed, in order to ensure that this is the case, we need to enforce additional constraints on the components of the signaling scheme, as we show in Section~\ref{sec:keep_promises}.

%
%
%
Let us remark that representing promise-form signaling schemes requires a number of bits polynomial in the size of the problem instance and in $|\I|$, which is the cardinality of the set of promises.
While $|\I|$ could be arbitrarily large in general, the algorithm that we will present in the following Section~\ref{sec:DP} guarantees that $|\I|$ has ``small'' size, by means of a clever choice of the functions $I_h$.
%
%

\subsection{From promise-form to history-dependent signaling schemes}\label{sec:from_promise_to_history}

In the following, we show how the sender can implement promise-form signaling schemes, proving that they represent a subclass of history-dependent ones. 

%

%
The sender can implement a promise-form signaling scheme $\sigma :=\{(I_h, \varphi_h, g_h)\}_{h\in\Hi}$ as follows.
After committing to $\sigma$ (Line~\ref{line:commit} of Algorithm~\ref{alg:interaction_process}), at each step $h \in \Hi$, in Line~\ref{line:sampling_recc} of Algorithm~\ref{alg:interaction_process} they select which action-recommendation strategy to use by reconstructing the current promise on the basis of the history $\tau_h$.
Such a reconstruction is done by recursively ``composing'' functions $g_{h'}$ for the preceding steps $h' < h$, by means of the procedure in Algorithm~\ref{alg:convert_promise_to_hist} with $\sigma$ and $\tau_h$ as inputs.
By letting $\iota \in I_h(s_h)$ be the continuation value obtained by running Algorithm~\ref{alg:convert_promise_to_hist}, the action recommendation $a_h$ in Line~\ref{line:sampling_recc} is then sampled from $\varphi_h(s_h, \iota, \theta_h)$.
Algorithm~\ref{alg:convert_promise_to_hist} clearly runs in time polynomial in the instance size, and, thus, the sender can implement a promise-form signaling scheme efficiently.
%
%

\begin{wrapfigure}[10]{R}{0.5\textwidth}
\begin{minipage}{0.5\textwidth}
\begin{algorithm}[H]
	\caption{From histories to promises}
	\label{alg:convert_promise_to_hist}
	\begin{algorithmic}[1]
		\Require $\sigma := \{ (I_h,\varphi_h,g_h)\}_{h \in \Hi}$,\phantom{aaaaaaaaaaaaaaaaaaa} $\tau=(s_1, a_1, \ldots, s_{h-1}, a_{h-1}, s_{h} )\in\T_{h}$
		\State Initialize $\iota \gets 0\in I_1(s_1)$
		\For{each step $h' = 1, \ldots, h-1$}
		\State $\iota\gets g_{h'}(s_{h'},a_{h'},\iota, s_{h'+1})$
		\EndFor
		\State \Return $ \iota$
	\end{algorithmic}
\end{algorithm}
\end{minipage}
\end{wrapfigure}

%
In the rest of this section, given a promise-form signaling scheme $\sigma := \{(I_h,\varphi_h,g_h)\}_{h\in\Hi}$, we denote by $\phi^\sigma := \{\phi_\tau^\sigma\}_{\tau\in \T}$ the history-dependent signaling scheme \emph{induced} by $\sigma$ thorough the implementation procedure described above.
Formally, for every history $\tau=(s_1,a_1,\ldots, s_{h-1}, a_{h-1}, s_h)\in \T_h$ up to step $h \in \Hi$, the function $\phi_\tau^\sigma : \Theta \to \Delta(\A)$ is defined so that $\phi_\tau^\sigma (\theta) = \varphi(s_h,\iota_\tau^\sigma,\theta)$ for every $\theta \in \Theta$, where $\iota_\tau^\sigma\in I_h(s_h)$ denotes the promise value corresponding to history $\tau$, as computed by Algorithm~\ref{alg:convert_promise_to_hist} with $\sigma$ and $\tau$ as inputs.
As it is easy to see, implementing the promise-form signaling scheme $\sigma$ as described above is equivalent to using $\phi^\sigma$ in Algorithm~\ref{alg:interaction_process}.

Next, we show that the value functions of the sender and the receiver associated with the induced history-dependent signaling scheme $\phi^\sigma $ can be efficiently computed by only accessing the components of the promise-form signaling scheme $\sigma$. 
In the following, given any $\sigma := \{(I_h,\varphi_h,g_h)\}_{h\in\Hi}$, for every step $h \in \Hi$ we introduce the functions $\V_h^{\rec, \sigma}: \A \times \Si \times \I \to \mathbb{R}$ and $\V_{h}^{\rec, \sigma} : \Si \times \I \to \mathbb{R}$, which are jointly recursively defined so that, for every $a \in \A$, $s\in\Si$, and $\iota \in I_h(s)$, it holds:
\[
	\V_h^{\rec, \sigma}(a, s, \iota) = \hspace{-0.5mm} \sum_{\theta \in \Theta} \mu_h(\theta|s) \varphi_h(a|s,\iota,\theta) \hspace{-0.5mm} \left( \hspace{-0.5mm} r_h^\rec (s,a,\theta) + \hspace{-0.5mm}\sum_{s' \in \Si} p_h(s' | s,a,\theta) \V_{h+1}^{\rec, \sigma}(s', g_h(s,a,\iota,s'))\right) 
\]
and $\V^{\rec,\sigma}_h(s,\iota) = \sum_{a\in\A}\V^{\rec,\sigma}_h(a,s,\iota)$.
Similarly, $\V_{h}^{\s, \sigma} : \Si \times \I \to \mathbb{R}$ is such that, for $s\in\Si, \iota \in I_h(s)$:
\[
\V_h^{\s, \sigma}(s, \iota) \hspace{-0.5mm} = \hspace{-0.5mm} \sum_{a \in \A}\sum_{\theta \in \Theta} \mu_h(\theta|s) \varphi_h(a|s,\iota,\theta) \hspace{-0.5mm} \left( \hspace{-0.5mm} r_h^\s (s,a,\theta) \hspace{-0.5mm} + \hspace{-0.5mm} \sum_{s' \in \Si} p_h(s' | s,a,\theta) \V_{h+1}^{\s, \sigma}(s', g_h(s,a,\iota,s')) \hspace{-0.5mm} \right) \hspace{-0.5mm} .
\]
%
%
%
Then, we can prove the following lemma:
%
%
\begin{restatable}{lemma}{lemmauno}\label{lem:correctness_promise}
	Given a promise-form signaling scheme $\sigma :=\{(I_h,\varphi_h,g_h)\}_{h\in\Hi}$, for every $h \in \Hi$ and history $\tau=(s_1,a_1,\ldots, s_{h-1}, a_{h-1}, s_h)\in \T_h$ up to step $h$, the following holds:
	\textnormal{
	\[
		V_h^{\rec,\phi^\sigma}(a,\tau)=\V_h^{\rec,\sigma}(a, s_h,\iota_\tau^\sigma), \,\, V_h^{\rec,\phi^\sigma}( \tau)=\V_h^{\rec,\sigma}(s_h,\iota_\tau^\sigma), \, \text{and} \,\, V_h^{\s,\phi^\sigma}(\tau)=\V_h^{\s,\sigma}(s_h, \iota_\tau^\sigma).
	\]}
	%
	%
\end{restatable}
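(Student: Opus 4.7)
The plan is to prove the three identities simultaneously by backward induction on the step $h$, from $h=H+1$ down to $h=1$. The base case is $h=H+1$: by the convention $I_{H+1}(s)=\{0\}$ and by the usual convention that value functions at step $H+1$ vanish (the MDP has horizon $H$), both $V_{H+1}^{\rec,\phi^\sigma}$, $V_{H+1}^{\s,\phi^\sigma}$ and their promise-form counterparts $\V_{H+1}^{\rec,\sigma}, \V_{H+1}^{\s,\sigma}$ are identically zero, so the identities hold trivially.

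Before carrying out the inductive step, the key bookkeeping fact I would isolate as a short preliminary claim is a compatibility property of Algorithm~\ref{alg:convert_promise_to_hist} with history extension: for any $\tau=(s_1,a_1,\ldots,s_h)\in\T_h$ and any $(a,s')\in\A\times\Si$, if we let $\tau'=\tau\oplus(a,s')\in\T_{h+1}$, then
\[
\iota^{\sigma}_{\tau'} \;=\; g_h\!\left(s_h,\,a,\,\iota^\sigma_\tau,\,s'\right).
\]
This follows immediately by inspecting Algorithm~\ref{alg:convert_promise_to_hist}: running it on $\tau'$ executes one more iteration of the update $\iota\gets g_{h'}(s_{h'},a_{h'},\iota,s_{h'+1})$ than running it on $\tau$, and that additional iteration is precisely the application of $g_h$ to the value $\iota^\sigma_\tau$ returned for $\tau$.

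For the inductive step at step $h$, assume the three identities hold for every history in $\T_{h+1}$. Given $\tau\in\T_h$ ending at $s_h$, I would substitute the definition $\phi^\sigma_\tau(a|\theta)=\varphi_h(a|s_h,\iota^\sigma_\tau,\theta)$ into the recursive formula for $V_h^{\rec,\phi^\sigma}(a,\tau)$ in the preliminaries, then replace each term $V_{h+1}^{\rec,\phi^\sigma}(\tau\oplus(a,s'))$ by $\V_{h+1}^{\rec,\sigma}(s',\iota^\sigma_{\tau\oplus(a,s')})$ using the inductive hypothesis, and finally apply the preliminary claim to rewrite $\iota^\sigma_{\tau\oplus(a,s')}=g_h(s_h,a,\iota^\sigma_\tau,s')$. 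The resulting expression is syntactically identical to the recursive definition of $\V_h^{\rec,\sigma}(a,s_h,\iota^\sigma_\tau)$, closing the induction for the action-value identity. The sender identity is proved by the exact same manipulation applied to $V_h^{\s,\phi^\sigma}$. The identity $V_h^{\rec,\phi^\sigma}(\tau)=\V_h^{\rec,\sigma}(s_h,\iota^\sigma_\tau)$ then follows by summing the action-value identity over $a\in\A$ and using the definitions $V_h^{\rec,\phi}(\tau)=\sum_a V_h^{\rec,\phi}(a,\tau)$ and $\V_h^{\rec,\sigma}(s,\iota)=\sum_a\V_h^{\rec,\sigma}(a,s,\iota)$.

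There is no real obstacle: the proof is essentially a matching of recursions, and the only subtlety to watch is the preliminary claim about how the promise produced by Algorithm~\ref{alg:convert_promise_to_hist} behaves under appending one transition to the history. Once that compatibility is stated cleanly, the inductive step is a one-line substitution for each of the three value functions.
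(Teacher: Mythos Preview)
Your proposal is correct and follows essentially the same approach as the paper: backward induction on $h$, using the compatibility property $\iota^\sigma_{\tau\oplus(a,s')}=g_h(s_h,a,\iota^\sigma_\tau,s')$ from Algorithm~\ref{alg:convert_promise_to_hist} together with $\phi^\sigma_\tau(\theta)=\varphi_h(s_h,\iota^\sigma_\tau,\theta)$ to match the two recursions. The only cosmetic differences are that the paper takes the base case at $h=H$ rather than $h=H+1$, and it states the compatibility fact inline rather than as a separate preliminary claim.
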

Intuitively, Lemma~\ref{lem:correctness_promise} establishes that the functions $\V_h^{\rec,\sigma}$, $\V_h^{\rec,\sigma}$, and $\V_h^{\s,\sigma}$ ``correctly'' encode the value functions of the sender and the receiver for a promise-form signaling scheme $\sigma$, when it is implemented according to the procedure described at the beginning of the section.

Finally, given how a promise-form signaling scheme $\sigma$ is implemented, in the following we say that $\sigma$ is $\epsilon$-persuasive (for some $\epsilon \geq 0$) if the induced history-dependent signaling scheme $\phi^\sigma$ is $\epsilon$-persuasive according to Definition~\ref{def:eps_persuasive}.
However, using such a definition to check whether $\sigma$ is $\epsilon$-persuasive is clearly computationally inefficient, since it would require working with exponentially-many histories.
In Section~\ref{sec:keep_promises}, we introduce an easy way to ensure that a promise-form signaling scheme ``keeps its promises'', and we show that this allows to encode $\epsilon$-persuasiveness constraints in an efficient way.
%
%
%
%
%

\subsection{The power of honesty}\label{sec:keep_promises}

Next, we introduce a particular class of promise-form signaling schemes which always guarantee that the sender \emph{honestly} (approximately) assures promised rewards to the receiver.
%
%
We call $\eta$-\emph{honest} the promise-form signaling schemes with such a property, which are formally defined as follows:
%
%
\begin{definition}[$\eta$-honesty]
	Let $\eta \geq 0$.
	A promise-form signaling scheme $\sigma := \{(I_h, \varphi_h, g_h)\}_{h\in\Hi}$ is \emph{$\eta$-honest} if, for every step $h\in\Hi$, state $s\in\Si$, and promise $\iota\in I_h(s)$, the following holds:
	\begin{align}
		\sum_{a \in \A} \sum_{\theta \in \Theta} \mu_h(\theta|s) \varphi_h(a|s,\iota,\theta) \left( r_h^\rec (s,a,\theta) + \sum_{s' \in \Si} p_h(s' | s,a,\theta)  g_h(s,a,\iota,s') \right) \geq \iota-\eta . \label{eq:pers_compact_1}
	\end{align}
	%
\end{definition}
Then, we can prove the following result on $\eta$-honest promise-form signaling schemes:
\begin{restatable}{lemma}{lemmadue}\label{lem:mantaining_promises}
	Let $\sigma := \{(I_h, \varphi_h, g_h)\}_{h\in\Hi}$ be a promise-form signaling scheme.
	If $\sigma$ is $\eta$-honest, then, for every step $h\in\Hi$ and state $s\in\Si$, it holds that $\V_h^{\textnormal{\rec},\sigma}(s,\iota)\ge\iota-\eta(H-h+1) $ for all $  \iota\in I_h(s)$.
\end{restatable}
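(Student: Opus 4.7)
The plan is to prove the bound by backward induction on $h$, moving from the terminal step back to step $1$. The inductive hypothesis at step $h$ states that $\V_h^{\rec,\sigma}(s,\iota) \geq \iota - \eta(H-h+1)$ for every $s \in \Si$ and $\iota \in I_h(s)$.

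For the base case I would use the conventions $I_{H+1}(s) = \{0\}$ and $\V_{H+1}^{\rec,\sigma} \equiv 0$, so that the claim at level $H+1$ reduces to $0 \geq 0$. The inductive step is where the actual argument lives. Assuming the bound at step $h+1$, I would expand $\V_h^{\rec,\sigma}(s,\iota) = \sum_{a}\V_h^{\rec,\sigma}(a,s,\iota)$ using the recursive definition, and then bound each term $\V_{h+1}^{\rec,\sigma}(s', g_h(s,a,\iota,s'))$ from below by $g_h(s,a,\iota,s') - \eta(H-h)$ via the inductive hypothesis. This step is well-posed because $g_h(s,a,\iota,s') \in I_{h+1}(s')$ by definition of a promise-form signaling scheme, so the hypothesis applies directly.

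The resulting lower bound on $\V_h^{\rec,\sigma}(s,\iota)$ splits naturally into two pieces. The first piece coincides exactly with the left-hand side of the $\eta$-honesty inequality~\eqref{eq:pers_compact_1}, hence is at least $\iota - \eta$ by assumption. The second piece equals $-\eta(H-h)\sum_{a,\theta,s'}\mu_h(\theta|s)\varphi_h(a|s,\iota,\theta)p_h(s'|s,a,\theta)$; since $\mu_h(\cdot|s)$, $\varphi_h(\cdot|s,\iota,\theta)$, and $p_h(\cdot|s,a,\theta)$ are each probability distributions, the sum collapses to $1$, leaving $-\eta(H-h)$. Combining the two pieces yields $\V_h^{\rec,\sigma}(s,\iota) \geq (\iota-\eta) - \eta(H-h) = \iota-\eta(H-h+1)$, which closes the induction.

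There is essentially no substantive obstacle. The only care needed is to recognize the precise algebraic match between the decomposition obtained after applying the inductive hypothesis and the form of the $\eta$-honesty condition, and to use the normalization of the three distributions $\mu_h$, $\varphi_h$, $p_h$ to collapse the remainder exactly into the additive loss $-\eta(H-h)$. This is what produces the clean additive loss of $\eta$ per step and ultimately the factor $(H-h+1)$ in the statement.
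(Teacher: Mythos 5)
Your proposal is correct and follows essentially the same argument as the paper's proof: backward induction, bounding $\V_{h+1}^{\rec,\sigma}(s',g_h(s,a,\iota,s'))$ below by $g_h(s,a,\iota,s')-\eta(H-h)$ via the inductive hypothesis (well-posed since $g_h(s,a,\iota,s')\in I_{h+1}(s')$), then invoking the $\eta$-honesty constraint and the normalization of $\mu_h$, $\varphi_h$, $p_h$ to collapse the remainder into $-\eta(H-h)$. The only cosmetic difference is that you anchor the induction at the trivial level $H+1$ using the conventions $I_{H+1}(s)=\{0\}$ and $\V_{H+1}^{\rec,\sigma}\equiv 0$, whereas the paper treats $h=H$ directly as the base case via honesty; the two are equivalent.
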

Since by Lemma~\ref{lem:correctness_promise} the function $\V_h^{\rec,\sigma}$ encodes the receiver's value function when $\sigma$ is implemented by the sender, Lemma~\ref{lem:mantaining_promises} intuitively establishes that, at each step $h \in \Hi$ and state $s \in \Si$, the signaling scheme actually ``keeps the promise'' of giving at least $\iota \in I_h(s)$ future rewards to the receiver, up to an error depending on $\eta$. 
%
%
Checking whether a promise-form signaling scheme $\sigma := \{(I_h, \varphi_h, g_h)\}_{h\in\Hi}$ is $\eta$-honest or not can be done in time polynomial in the instance size and $|\I|$.

Now, we are ready to prove the following crucial lemma:
\begin{restatable}{lemma}{lemmatre}\label{lem:cinsistent_arepersuasive}
	Let ${\sigma} := \{(I_h,\varphi_h,g_h)\}_{h \in \Hi}$ be an $\eta$-honest promise-form signaling scheme such that, for every $h \in \Hi$, $s \in \Si$, $\iota \in I_h(s)$, and $a , a' \in \A$, the following constraint is satisfied:
	\textnormal{
	\begin{align}
			\sum_{\theta \in \Theta} \mu_h(\theta|s) \varphi_h(a|s,\iota,\theta) \left( r_h^\rec (s,a,\theta) + \sum_{s' \in \Si} p_h(s' | s,a,\theta)  g_h(s,a,\iota,s') \right) \geq \nonumber\\
			\sum_{\theta \in \Theta} \mu_h(\theta|s) \varphi_h(a|s,\iota,\theta) \left( r_h^\rec (s,a',\theta) + \sum_{s' \in \Si} p_h(s' | s,a',\theta)  \widehat V_{h+1}^{\rec}(s') \right) . \label{eq:pers_compact_2} 
	\end{align}}%
	Then, we can conclude that $\sigma$ is $(\eta H)$-persuasive.
\end{restatable}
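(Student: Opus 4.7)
The plan is to leverage Lemmas~\ref{lem:correctness_promise}~and~\ref{lem:mantaining_promises} together with the local constraint~\eqref{eq:pers_compact_2} to translate the hypotheses on $\sigma$ into a persuasiveness inequality for the induced history-dependent signaling scheme $\phi^\sigma$. I would fix an arbitrary step $h \in \Hi$, history $\tau = (s_1, a_1, \ldots, s_h) \in \T_h$, and pair of actions $a, a' \in \A$, and denote by $\iota := \iota_\tau^\sigma \in I_h(s_h)$ the promise attached to $\tau$. By Lemma~\ref{lem:correctness_promise}, $V_h^{\rec, \phi^\sigma}(a, \tau) = \V_h^{\rec, \sigma}(a, s_h, \iota)$ and $\phi_\tau^\sigma(a|\theta) = \varphi_h(a|s_h, \iota, \theta)$. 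Hence, to establish $(\eta H)$-persuasiveness, it suffices to prove that $\V_h^{\rec, \sigma}(a, s_h, \iota)$ is at least the right-hand side of~\eqref{eq:pers_compact_2} minus $\eta H$, for every $h$, $s_h$, $\iota \in I_h(s_h)$, and $a, a' \in \A$.

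The core step is to unfold the recursive definition of $\V_h^{\rec, \sigma}(a, s_h, \iota)$: it coincides with the left-hand side of~\eqref{eq:pers_compact_2}, except that each promise value $g_h(s_h, a, \iota, s')$ is replaced by $\V_{h+1}^{\rec, \sigma}(s', g_h(s_h, a, \iota, s'))$. The resulting difference is a non-negative-weighted combination of terms of the form $\V_{h+1}^{\rec, \sigma}(s', \iota') - \iota'$ with $\iota' := g_h(s_h, a, \iota, s') \in I_{h+1}(s')$. By Lemma~\ref{lem:mantaining_promises} applied at step $h+1$, each such term is lower-bounded by $-\eta(H-h)$, and since $\sum_{\theta, s'} \mu_h(\theta|s_h)\varphi_h(a|s_h, \iota, \theta) p_h(s'|s_h, a, \theta) \le 1$, the whole discrepancy is itself at least $-\eta(H-h)$. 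Consequently, $\V_h^{\rec, \sigma}(a, s_h, \iota)$ dominates the LHS of~\eqref{eq:pers_compact_2} up to an additive slack of $\eta(H-h)$.

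Chaining this lower bound with the hypothesis~\eqref{eq:pers_compact_2} and using the crude estimate $\eta(H-h) \le \eta H$ then yields the desired $(\eta H)$-persuasiveness inequality for $\phi^\sigma$, completing the proof. The only delicate point is the bookkeeping of constants: Lemma~\ref{lem:mantaining_promises} must be invoked at step $h+1$, producing a slack of exactly $\eta(H-h)$, and crucially this slack is paid only \emph{once} rather than accumulating across steps, because the recursive definition of $\V_h^{\rec, \sigma}$ has already folded in all contributions from later steps through the promise values $g_h(s_h, a, \iota, s')$. No induction on $h$ is thus required; the argument reduces to a single chain of inequalities combining the local constraint~\eqref{eq:pers_compact_2} with the honest-promise bound of Lemma~\ref{lem:mantaining_promises}.
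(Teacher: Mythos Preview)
Your proposal is correct and follows essentially the same route as the paper: invoke Lemma~\ref{lem:correctness_promise} to pass from $V_h^{\rec,\phi^\sigma}(a,\tau)$ to $\V_h^{\rec,\sigma}(a,s_h,\iota_\tau^\sigma)$, unfold the recursion once, apply Lemma~\ref{lem:mantaining_promises} at step $h+1$ to bound $\V_{h+1}^{\rec,\sigma}(s',g_h(\cdot))$ below by $g_h(\cdot)-\eta(H-h)$, and then use the local constraint~\eqref{eq:pers_compact_2}. Your observation that the coefficient sum $\sum_{\theta,s'}\mu_h(\theta|s_h)\varphi_h(a|s_h,\iota,\theta)p_h(s'|s_h,a,\theta)$ is at most $1$ (rather than exactly $1$) is in fact slightly more careful than the paper's own write-up, but the argument is otherwise identical.
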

Intuitively, Lemma~\ref{lem:cinsistent_arepersuasive} states that for an $\eta$-honest promise-form signaling scheme, the constraint in Equation~\eqref{eq:pers_compact_2} is equivalent to the one in Definition~\ref{def:eps_persuasive}. 
The crucial advantage of Equation~\eqref{eq:pers_compact_2} is that it allows to express persuasiveness conditions as ``local'' constraints which do \emph{not} require recursion. 
%


\subsection{Promise-form signaling schemes are sufficient}\label{sec:promises_sufficient}

Finally, by exploiting Lemma~\ref{lem:cinsistent_arepersuasive}, we can prove the main result of this section: promise-form signaling schemes represent a sufficient subclass of history-dependent ones.
Formally:
\begin{restatable}{theorem}{theoremuno}\label{th:promise_are_sufficient}
	There is always a persuasive promise-form signaling scheme $\sigma:=  \{ (I_h, \varphi_h, g_h)\}_{h \in \Hi}$ with sender's expected reward equal to $\OPT$.
	More formally, it holds that $V^{\s, \phi^\sigma} = \OPT$ for the history-dependent signaling scheme $\phi^\sigma$ induced by $\sigma$.
	%
	%
\end{restatable}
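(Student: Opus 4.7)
The plan is to construct the desired promise-form scheme $\sigma$ explicitly from an optimal persuasive history-dependent scheme $\phi^\star \in \Phi$ attaining $\OPT$ (which exists because $\Phi$ is compact and $V^{\s,\phi}$ is continuous in $\phi$). The guiding principle is to let the promise carried along each history be \emph{exactly} the receiver's continuation value under $\phi^\star$ at that history. Concretely, for each $\tau \in \T_h$ I would define $\iota_\tau := V_h^{\rec,\phi^\star}(\tau)$, set $I_h(s) := \{\iota_\tau : \tau \in \T_h \text{ ends in } s\}$ for $h \geq 2$ and $I_1(s) := \{0\}$, and select for each reachable $(s,\iota)$ with $h \geq 2$ a representative $\tau^\star(s,\iota) \in \argmax\{V_h^{\s,\phi^\star}(\tau) : \tau \in \T_h \text{ ends in } s,\ \iota_\tau = \iota\}$, breaking ties arbitrarily. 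I would then set $\varphi_h(s,\iota,\cdot) := \phi^\star_{\tau^\star(s,\iota)}(\cdot)$ and $g_h(s,a,\iota,s') := \iota_{\tau^\star(s,\iota)\oplus(a,s')}$ for $h \geq 2$, and handle the degenerate step $h=1$ by using the singleton history $(s_1)$ in place of $\tau^\star$. By construction $g_h$ lands in $I_{h+1}$, so $\sigma$ is a well-defined promise-form signaling scheme.

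The verification then proceeds via two coupled backward inductions on $h$. For the receiver, I would show $\V_h^{\rec,\sigma}(s,\iota) = \iota$ exactly for every $h \geq 2$ and every $(s,\iota)$: the continuation term $\V_{h+1}^{\rec,\sigma}(s', g_h(s,a,\iota,s'))$ equals $g_h(s,a,\iota,s') = V_{h+1}^{\rec,\phi^\star}(\tau^\star(s,\iota)\oplus(a,s'))$ by the inductive hypothesis, so the whole recursion collapses telescopically into $V_h^{\rec,\phi^\star}(\tau^\star(s,\iota)) = \iota$ by definition of $V^{\rec,\phi^\star}$. This yields $0$-honesty immediately, and at $h=1$ the honesty inequality reduces to $V_1^{\rec,\phi^\star}((s_1)) \geq 0$, which is trivial. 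Persuasiveness then follows from Lemma~\ref{lem:cinsistent_arepersuasive} once I verify Equation~\eqref{eq:pers_compact_2} pointwise: substituting $\varphi_h$ and $g_h$ into its left-hand side yields $V_h^{\rec,\phi^\star}(a,\tau^\star(s,\iota))$, while its right-hand side becomes exactly the deviation term that Definition~\ref{def:eps_persuasive} imposes on $\phi^\star$ at the history $\tau^\star(s,\iota)$ with alternative action $a'$. Since $\phi^\star$ is persuasive, this inequality holds, and Lemma~\ref{lem:cinsistent_arepersuasive} with $\eta = 0$ gives that $\sigma$ is persuasive.

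For the sender, a parallel backward induction establishes $\V_h^{\s,\sigma}(s,\iota) \geq V_h^{\s,\phi^\star}(\tau)$ for every history $\tau$ ending in $s$ with $\iota_\tau = \iota$. Indeed the inductive hypothesis gives $\V_{h+1}^{\s,\sigma}(s',\iota') \geq V_{h+1}^{\s,\phi^\star}(\tau^\star(s,\iota)\oplus(a,s'))$, which upon substitution into the recursion shows $\V_h^{\s,\sigma}(s,\iota) \geq V_h^{\s,\phi^\star}(\tau^\star(s,\iota))$, and the maximality of $\tau^\star$ within the $(s,\iota)$-fiber then dominates $V_h^{\s,\phi^\star}(\tau)$ for every $\tau$ in that fiber. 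Averaging over $\beta$ at $h=1$ and invoking Lemma~\ref{lem:correctness_promise} yields $V^{\s,\phi^\sigma} \geq \OPT$, while the reverse inequality is automatic from persuasiveness of $\sigma$ and the definition of $\OPT$. The main subtlety, and the step I expect to need the most care, is that collapsing all histories sharing a common $(s,\iota)$ onto a single representative irreversibly changes the continuation strategies used by $\phi^\sigma$ at later steps; this is safe for the receiver only because the promise was \emph{set to be exactly} $V_h^{\rec,\phi^\star}$ rather than just a lower bound, and it is safe for the sender only because $\tau^\star$ maximizes $V_h^{\s,\phi^\star}$ within each fiber, so the two ``matching'' choices must be made in tandem.
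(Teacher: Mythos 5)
Your proposal is correct and follows essentially the same route as the paper's proof: promises are set to the receiver's continuation values under an optimal history-dependent scheme, a sender-optimal representative history is chosen in each $(s,\iota)$ fiber to define $\varphi_h$ and $g_h$, $0$-honesty and the local constraint of Equation~\eqref{eq:pers_compact_2} are verified and combined with Lemma~\ref{lem:cinsistent_arepersuasive}, and a backward induction plus Lemma~\ref{lem:correctness_promise} gives the sender-value comparison. Your phrasing of the sender induction over $(s,\iota)$ fibers is only a cosmetic repackaging of the paper's induction over histories.
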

\section{Approximation scheme}\label{sec:DP}

Theorem~\ref{th:promise_are_sufficient} shows that, in order to find an optimal signaling scheme, one can focus on promise-form signaling schemes that have the nice property of being polynomially representable in the instance size and the cardinality $|\I|$ of the set of promises.
%
%
In this section, we show how to compute an $\epsilon$-persuasive promise-form signaling scheme with sender's  expected reward at least $\OPT$ in polynomial time.

We design an algorithm working with sets $I_h(s)$ defined on a suitable grid, whose size can be properly controlled by a discretization step $\delta$.
The algorithm solves a recursively-defined optimization problem for each $h \in \Hi$, $s \in \Si$, and $\iota \in I_h(s)$, by starting from step $H$ and proceeding in bottom up fashion.
%

For every step $h\in\Hi$ and state $s\in\Si$, the set $I_h(s)$ of promises is defined as a suitable subset of a grid
%
\(
\mathcal{D}_\delta\coloneqq \{ k  \, \delta \mid k\in\mathbb{N} \wedge k \le \lfloor \nicefrac{H}{\delta}\rfloor \},
\)
where $\delta>0$ is a discretization step to be set depending on the desired relaxation $\epsilon$ of the persuasiveness constraints. 
This also allows us to control the representation size of promise-form signaling schemes, as well as the running time of the algorithm.
In particular, it holds $|\D_\delta|=O(1/\delta)$ and, thus, $|\I|=O(1/\delta)$ since $\I \subseteq \D_\delta$ by definition.
In the following, for ease of notation, we let $\lceil x\rceil_\delta :=\min_{k\in\mathbb{N}: k\delta\ge x}k\delta $ be the smallest multiple of $\delta$ greater than $x$, while $\lfloor x\rfloor_\delta :=\max_{k\in\mathbb{N}: k\delta\le x}k\delta$ is the greatest multiple of $\delta$ smaller than $x$.
%
%

The algorithm keeps track of recursively-computed values in a set of tables, one for each step. 
The table at step $h \in \Hi$ is encoded by means of a function $M^\delta_h:\Si\times\D_\delta\to\mathbb{R}\cup\{-\infty\}$.
Intuitively, for every $s\in\Si$ and $\iota\in\D_\delta$, the entry $M^\delta_h(s,\iota)$ is related to the expected rewards achieved by the sender when ``promising'' the receiver expected rewards ``approximately equal'' to $\iota$ in state $s$ at step $h$. We also admit the functions $M^\delta_h$ to take value $-\infty$, which semantically corresponds to the case in which it is impossible to guarantee the promise $\iota$ to the receiver in state $s$ at step $h$.
%
%
The entry $M_h^\delta(s,\iota)$ of the table $M_h^\delta$ at step $h$ is computed recursively by solving a problem $\Pical_{h,s,\iota}(M_{h+1}^\delta)$ that we define in the following, where $M_{h+1}^\delta$ is the (previously-computed) table at step $h+1$.
%

By letting $M:\Si\times \D_{\delta}\to\Reals\cup\{-\infty\}$ be a function encoding a generic table over $\Si\times \D_{\delta}$, for every $h \in \Hi$, $s \in \Si$, and $\iota \in I_h(s)$, we define the value $\Pi_{h,s,\iota}(M)$ of the optimization  problem $\Pical_{h,s,\iota}(M)$ as:
\begin{subequations}
	\begin{align}
		\Pi_{h,s,\iota}(M)&:=\max\limits_{\substack{\kappa:\Theta\to\Delta(\A)\\ q:\A\times\Si\to\D_\delta}}F_{h,s,M}(\kappa, q) \quad \text{s.t.} \quad (\kappa,q)\in \Psi^{h,s}_\iota,\nonumber
	\end{align}
\end{subequations}
where problem variables are encoded by the functions $\kappa:\Theta\to\Delta(\A)$ and $q:\A\times\Si\to\D_\delta$, which represent an action-recommendation strategy and a promise function, respectively.\footnote{The optimization problem over functions $\kappa$ and $q$ can be rewritten as an equivalent program with tabular variables, since the functions $\kappa$ and $q$ map discrete sets to discrete sets (or a randomization over them).}
The objective function $F_{h,s,M}(\kappa, q)$ of the optimization problem is defined as:
\[
F_{h,s,M}(\kappa, q):=\sum\limits_{\theta\in\Theta}\sum\limits_{a\in\A} \mu_h(\theta| s) \kappa(a|\theta)\left(r_h^\s(s,a,\theta)+\sum\limits_{s'\in\Si}p_h(s'|s,a,\theta)M(s', q(a,s'))\right),
\]
which encodes the sender's expected reward when their values for the next step $h+1$ are those specified by the table $M$.
We assume that $0 \cdot (-\infty)=-\infty$.
%
%
Moreover, the set $\Psi_\iota^{h,s}$ is comprised of the functions $\kappa:\Theta\to\Delta(\A)$ and $ q:\A\times\Si\to\D_\delta$ that satisfy the following constraints:
\begin{subequations}\label{eq:constrint_Pi}
	\begin{align}
		&\sum\limits_{a\in\A}\sum\limits_{\theta\in\Theta} \mu_h(\theta|s)\kappa(a|\theta)\left(r_h^\rec(s,a,\theta)+\sum\limits_{s'\in\Si}p_h(s'|s,a,\theta)q(a,s')\right)\ge\iota\label{eq:constrint_Pi1}\\
		&\sum\limits_{\theta\in\Theta} \mu_h(\theta|s)\kappa(a|\theta)\left(r_h^\rec(s,a,\theta)+\sum\limits_{s'\in\Si}p_h(s'|s,a,\theta)q(a,s')\right)\ge\nonumber\\
		&\quad\quad \sum\limits_{\theta\in\Theta} \mu_h(\theta|s)\kappa(a|\theta)\left(r_h^\rec(s,a',\theta)+\sum\limits_{s'\in\Si}p_h(s'|s,a',\theta)\widehat V_{h+1}^\rec(s')\right) &\forall a,a'\in\A , \label{eq:constrint_Pi2}
	\end{align}
\end{subequations}
where Equation~\eqref{eq:constrint_Pi1} and Equation~\eqref{eq:constrint_Pi2} play the role of the honesty and the persuasiveness constraints, respectively.
Notice that relaxing the honesty constraint yields larger feasible sets. Formally, for any $\iota\ge\iota'\ge 0$ we have that the following holds: $\Psi_\iota^{h,s}\subseteq \Psi_{\iota'}^{h,s}$.

If $F_{h,s,M}(\kappa, q)=-\infty$ for all $(\kappa,q)\in\Psi_\iota^{h,s}$, we have that $\Pi_{h,s,\iota}(M)=-\infty$.
%
This intuitively comes from the fact that, if $\Pi_{h,s,\iota}(M)=-\infty$, then the value $\iota$ promised to the receiver is not realizable.
%

The optimization problem $\Pical_{h,s,\iota}(M)$ could be easily cast as a mixed-integer quadratic program, which are too general to be solved efficiently.
Thus, we need specifically-tailored procedures to find an approximate solution to it. This discussion is deferred to Section~\ref{sec:oracle}.
In the following, we assume to have access to an oracle $\Ocal_{h,s,\iota}(M)$ that provides a suitable approximate solution to $\Pical_{h,s,\iota}(M)$.
In particular, $\Ocal_{h,s,\iota}(M)$ must satisfy the requirements introduced by the following definition:
\begin{definition}[Approximate Oracle]\label{def:approxoracle}
	An algorithm $\Ocal_{h,s,\iota}(M)$ is an approximate oracle for $\Pical_{h,s,\iota}(M)$ if it returns a tuple $(\kappa,q,v)$ such that 
	\( 
	\Pi_{h,s,\iota}(M)\le v\le F_{h,s,M}(\kappa, q) 
	\) 
	and $(\kappa,q)\in\Psi^{h,s}_{\iota-\delta}$. 
\end{definition}
\begin{wrapfigure}[17]{R}{0.48\textwidth}
	\vspace{-0.6cm}
	\begin{minipage}{0.48\textwidth}
		\begin{algorithm}[H]
			\caption{Approximation scheme}
			\label{alg:DP}
			\begin{algorithmic}[1]
				\Require $\delta \in (0,1)$
				\State $M^\delta_{H+1}(s,0) \gets 0$ for all $s \in \Si$ \State $M_{H+1}^\delta(s,\iota) \gets -\infty$ for $s \in \Si$, $\iota \in \mathcal{D}_\delta \setminus \{0\}$
				\For{$h = H,\ldots, 1 $}
				\For{$s \in \Si $}
				\State $I_h(s)= \{\emptyset\}$ 
				\For{$\iota \in \mathcal{D}_\delta$}
				\State $(\kappa,q, v)\gets\Ocal_{h,s,\iota-\delta}(M_{h+1}^\delta)$
				\State $M_h^\delta(s,\iota)\gets v$\label{line:DP_nextM}
				\State $\varphi_{h}(a|s,\iota,\theta)\gets \kappa(a|\theta)$
				\State $g_{h}(s,a,\iota,s')\gets q(a,s')$
				\If  {$v>-\infty$}
				\State $I_h(s)\gets  I_h(s)\cup \{\iota\}$
				\EndIf
				\EndFor
				\EndFor
				\EndFor
				\State \Return $\sigma:=\{(I_h, \varphi_h, g_h)\}_{h\in\Hi}$
			\end{algorithmic}
		\end{algorithm}
	\end{minipage}
\end{wrapfigure}
Intuitively, we ask that an approximate {oracle} finds a solution $(\kappa, q)$ in a slightly larger set $\Psi^{h,s}_{\iota-\delta}$. The oracle also returns a value $v$ to be inserted into $M^\delta_h(s,\iota)$, where $v$ is possibly different from the value $F_{h,s,M}(\kappa, q)$ of the objective function. This is needed for technical reasons in order to recover some concavity properties of the functions defining the tables used by the algorithm, which may be lost due to approximations. {A complete discussion on this last aspect can be found in Section~\ref{sec:oracle}.}

Equipped with an approximate oracle as in Definition~\ref{def:approxoracle}, we are ready to design our approximation scheme that computes an $\epsilon$-persuasive promise-form signaling scheme attaining sender's expected reward at least $\OPT$ (Algorithm~\ref{alg:DP}).
%
%
The algorithm iteratively builds each table $M_h^\delta$ by filling it with the values $v$ returned by the approximate oracle.
Moreover, it sets action-recommendation strategies $\varphi_{h}$ and promise functions $g_h$ of the signaling scheme $\sigma := \{ (I_h, \varphi_h, g_h) \}_{h \in \Hi}$ to be equal to the functions $\kappa$, $q$ returned by the approximate oracle.

%

In order to clarify the semantic of the tables $M_h^\delta$ built by Algorithm~\ref{alg:DP}, we prove the following preliminary result. Specifically, we have that the sender's expected rewards for the signaling scheme returned by Algorithm~\ref{alg:DP} constitute an upper bound on the entries of the tables $M_h^\delta$. Formally:
\begin{restatable}{lemma}{lemmaapproxoracletabledue}\label{lem:lemmaapproxoracletabledue}
	%
	Let $\sigma:=\{(I_h, \varphi_h, g_h)\}_{h\in\Hi}$ be 
	returned by Algorithm~\ref{alg:DP} instantiated with any oracle $\Ocal_{h, s, \iota}$ as in Definition~\ref{def:approxoracle}. For every $h \in \Hi$, $s \in \Si$, and $\iota \in I_h (s)$, it holds that 
	\(
	\V_h^{\textnormal{\s},\sigma}(s,\iota)\ge M_h^\delta(s,\iota).
	\)
\end{restatable}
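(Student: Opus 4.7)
The plan is to prove the inequality by backward induction on $h$, combining the oracle's guarantee (Definition~\ref{def:approxoracle}) with the recursive definition of $\V_h^{\s,\sigma}$ established in Section~\ref{sec:from_promise_to_history}. The base case at step $h=H+1$ is immediate: by construction $I_{H+1}(s)=\{0\}$ and $M_{H+1}^\delta(s,0)=0$, while $\V_{H+1}^{\s,\sigma}(s,0)=0$ since the interaction has already terminated; hence equality holds on the only element of $I_{H+1}(s)$.

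For the inductive step, I would fix $s\in\Si$ and $\iota\in I_h(s)$ and unpack how Algorithm~\ref{alg:DP} treats this triple. The oracle call $(\kappa,q,v)\gets\Ocal_{h,s,\iota-\delta}(M_{h+1}^\delta)$ sets $M_h^\delta(s,\iota)=v$ and, critically, $\varphi_h(\cdot|s,\iota,\cdot)=\kappa$ and $g_h(s,\cdot,\iota,\cdot)=q$; moreover, the membership $\iota\in I_h(s)$ means that the algorithm took the branch $v>-\infty$. By Definition~\ref{def:approxoracle}, $v\le F_{h,s,M_{h+1}^\delta}(\kappa,q)$, so the latter is finite as well. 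Substituting $\varphi_h=\kappa$ and $g_h=q$ into the recursive formula for $\V_h^{\s,\sigma}(s,\iota)$ rewrites it in exactly the same functional form as $F_{h,s,\cdot}(\kappa,q)$, but with every $M_{h+1}^\delta(s',q(a,s'))$ replaced by $\V_{h+1}^{\s,\sigma}(s',q(a,s'))$. Hence the proof reduces to a pointwise comparison of these two functions on the pairs $(s',q(a,s'))$.

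The main obstacle is ensuring that $q(a,s')\in I_{h+1}(s')$ for every $a\in\A$ and $s'\in\Si$, since otherwise the inductive hypothesis cannot even be invoked: the value $\V_{h+1}^{\s,\sigma}$ is defined only on pairs whose second coordinate lies in $I_{h+1}(s')$. The paper's convention $0\cdot(-\infty)=-\infty$ resolves this: since $F_{h,s,M_{h+1}^\delta}(\kappa,q)\ge v>-\infty$, no summand in $F$ involving a factor $M_{h+1}^\delta(s',q(a,s'))$ can equal $-\infty$; by the rule that adds a promise to $I_{h+1}(s')$ exactly when the stored value is finite, this forces $q(a,s')\in I_{h+1}(s')$. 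With the inductive hypothesis now applicable, $\V_{h+1}^{\s,\sigma}(s',q(a,s'))\ge M_{h+1}^\delta(s',q(a,s'))$ holds pointwise, and monotonicity of $F_{h,s,\cdot}(\kappa,q)$ in its next-step argument yields
\[
\V_h^{\s,\sigma}(s,\iota)\ge F_{h,s,M_{h+1}^\delta}(\kappa,q)\ge v=M_h^\delta(s,\iota),
\]
closing the induction.
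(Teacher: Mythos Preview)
Your proof is correct and follows essentially the same backward-induction argument as the paper: expand $\V_h^{\s,\sigma}(s,\iota)$ using $\varphi_h=\kappa$ and $g_h=q$, apply the inductive hypothesis to replace $\V_{h+1}^{\s,\sigma}$ by $M_{h+1}^\delta$, recognize the resulting expression as $F_{h,s,M_{h+1}^\delta}(\kappa,q)$, and conclude via the oracle guarantee $F\ge v=M_h^\delta(s,\iota)$. You are in fact more explicit than the paper on one point: the paper's proof of this lemma tacitly assumes $g_h(s,a,\iota,s')\in I_{h+1}(s')$ (it establishes this only later, in the proof of Theorem~5.3), whereas you justify it directly from $v>-\infty$ and the $0\cdot(-\infty)=-\infty$ convention.
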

Moreover, the entries of the tables $M_h^\delta$ built by Algorithm~\ref{alg:DP} are upper bounds on the sender's expected rewards provided by an optimal promise-form signaling scheme. Formally:
\begin{restatable}{lemma}{lemmaapproxoracletable}\label{lem:lemmaapproxoracletable}
	Let $M_h^\delta$, $I_h$ (for $h \in \Hi$) be computed by Algorithm~\ref{alg:DP} instantiated with any oracle $\Ocal_{h,s,\iota}$ as in Definition~\ref{def:approxoracle}, and let $\sigma^\star=\{(I^\star_h, \varphi_h^\star, g_h^\star)\}_{h\in\Hi}$ be an optimal promise-form signaling scheme. For every $h\in\Hi$, $s\in\Si$, and $\iota\in I^\star_h(s)$, it holds that $\lceil\iota\rceil_\delta\in I_h(s)$ and
	$
	M_h^{\delta}(s,\lceil\iota\rceil_\delta)\ge \V_h^{\textnormal{\s},{\sigma^\star}}(s,\iota).
	$
\end{restatable}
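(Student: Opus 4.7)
My plan is to prove both claims simultaneously by backward induction on $h$, from $h=H+1$ down to $h=1$. The base case is immediate from the algorithm's initialization and the convention $I_{H+1}(s)=I^\star_{H+1}(s)=\{0\}$: we have $\V_{H+1}^{\s,\sigma^\star}(s,0)=0=M^\delta_{H+1}(s,0)$, and $\lceil 0 \rceil_\delta=0$. For the inductive step, I will assume the statement at step $h+1$ and prove it at step $h$. Fix $s \in \Si$ and $\iota \in I^\star_h(s)$. By Algorithm~\ref{alg:DP}, the entry $M^\delta_h(s,\lceil\iota\rceil_\delta)$ equals the value $v$ returned by $\Ocal_{h,s,\lceil\iota\rceil_\delta-\delta}(M^\delta_{h+1})$, which, by Definition~\ref{def:approxoracle}, satisfies $v \geq \Pi_{h,s,\lceil\iota\rceil_\delta-\delta}(M^\delta_{h+1})$. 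Thus it suffices to exhibit a feasible point of $\Pical_{h,s,\lceil\iota\rceil_\delta-\delta}(M^\delta_{h+1})$ whose objective value upper-bounds $\V_h^{\s,\sigma^\star}(s,\iota)$.

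The natural candidate is the point built directly from $\sigma^\star$: take $\kappa(a|\theta) \defeq \varphi_h^\star(a|s,\iota,\theta)$ and $q(a,s') \defeq \lceil g_h^\star(s,a,\iota,s')\rceil_\delta$. Since $g_h^\star(s,a,\iota,s') \in I^\star_{h+1}(s')$, the inductive hypothesis ensures $q(a,s') \in I_{h+1}(s')$, so $M^\delta_{h+1}(s',q(a,s'))$ is finite and the objective $F_{h,s,M^\delta_{h+1}}(\kappa,q)$ is well defined. Feasibility then boils down to verifying the two constraints of $\Psi^{h,s}_{\lceil\iota\rceil_\delta-\delta}$: the honesty constraint~\eqref{eq:constrint_Pi1} follows from the $0$-honesty of $\sigma^\star$ (which we may invoke thanks to Theorem~\ref{th:promise_are_sufficient}, whose construction yields a $0$-honest optimum) together with $q(a,s')\geq g_h^\star(s,a,\iota,s')$ and $\lceil\iota\rceil_\delta-\delta < \iota$; the persuasiveness constraint~\eqref{eq:constrint_Pi2} holds because $\sigma^\star$ satisfies it with $g_h^\star$ in place of $q$, and replacing $g_h^\star$ by the larger $q$ only strengthens the left-hand side while leaving the right-hand side unchanged.

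It remains to compare the objective with $\V_h^{\s,\sigma^\star}(s,\iota)$. Expanding $F_{h,s,M^\delta_{h+1}}(\kappa,q)$ and applying the inductive hypothesis, $M^\delta_{h+1}(s',\lceil g_h^\star(s,a,\iota,s')\rceil_\delta) \geq \V_{h+1}^{\s,\sigma^\star}(s',g_h^\star(s,a,\iota,s'))$, which, substituted termwise, yields
\[
F_{h,s,M^\delta_{h+1}}(\kappa,q) \,\geq\, \sum_{a,\theta}\mu_h(\theta|s)\varphi_h^\star(a|s,\iota,\theta)\!\left(r_h^\s(s,a,\theta)+\sum_{s'}p_h(s'|s,a,\theta)\V_{h+1}^{\s,\sigma^\star}(s',g_h^\star(s,a,\iota,s'))\right) = \V_h^{\s,\sigma^\star}(s,\iota).
\]
Chaining this with $v \geq \Pi_{h,s,\lceil\iota\rceil_\delta-\delta}(M^\delta_{h+1}) \geq F_{h,s,M^\delta_{h+1}}(\kappa,q)$ gives $M^\delta_h(s,\lceil\iota\rceil_\delta) \geq \V_h^{\s,\sigma^\star}(s,\iota)$; in particular the value is finite, so $\lceil\iota\rceil_\delta \in I_h(s)$ by construction of the algorithm, closing the induction.

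The main obstacle I anticipate is the careful accounting of the discretization: the oracle at entry $\lceil\iota\rceil_\delta$ is invoked with tolerance $\lceil\iota\rceil_\delta-\delta$, so the honesty constraint must hold with this relaxed threshold, and the mismatch $\lceil\iota\rceil_\delta-\delta < \iota \leq \lceil\iota\rceil_\delta$ has to be exploited in exactly the right direction. This relies crucially on being able to assume that the optimal promise-form $\sigma^\star$ is \emph{$0$-honest} and satisfies the local persuasiveness condition~\eqref{eq:constrint_Pi2} with $g_h^\star$ rather than with the recursively defined $\V_{h+1}^{\rec,\sigma^\star}$; if the proof of Theorem~\ref{th:promise_are_sufficient} only provides this under $\eta$-honesty for some $\eta>0$, an extra argument (absorbing $\eta$ into $\delta$, or first re-expressing $\sigma^\star$ with $g_h^\star\equiv \V_{h+1}^{\rec,\sigma^\star}$) will be needed before the inductive step goes through cleanly.
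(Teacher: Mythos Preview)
Your proposal is correct and follows essentially the same route as the paper: backward induction, with the candidate feasible point $\kappa=\varphi_h^\star(\cdot\,|\,s,\iota,\cdot)$ and $q=\lceil g_h^\star(s,\cdot,\iota,\cdot)\rceil_\delta$, feasibility coming from $0$-honesty and the local persuasiveness of $\sigma^\star$, and the objective bound from the inductive hypothesis on $M_{h+1}^\delta$. The only cosmetic differences are that the paper anchors the base case at $h=H$ rather than $h=H+1$, and that your inequality $\lceil\iota\rceil_\delta-\delta<\iota$ should be $\le$ (harmless here); your explicit appeal to Theorem~\ref{th:promise_are_sufficient} for the $0$-honesty of $\sigma^\star$ is in fact more careful than the paper, which uses that property tacitly.
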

By combining Lemma~\ref{lem:lemmaapproxoracletable} and Lemma~\ref{lem:lemmaapproxoracletabledue} we can promptly state and prove the main result of this section.

\begin{restatable}{theorem}{theoremapproxoracle}
	For any $\epsilon>0$, given an approximate oracle as in Definition~\ref{def:approxoracle}, Algorithm~\ref{alg:DP} instantiated with $\delta=\nicefrac{\epsilon}{2H}$ runs in time polynomial in $1/\epsilon$ and the instance size, while it finds an $\epsilon$-persuasive promise-form signaling scheme that guarantees expected reward at least $\OPT$ to the sender.
\end{restatable}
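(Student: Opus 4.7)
The plan is to verify three properties: (i) Algorithm~\ref{alg:DP} runs in time polynomial in $1/\epsilon$ and in the instance size, (ii) the returned signaling scheme $\sigma$ attains sender's expected reward at least $\OPT$, and (iii) $\sigma$ is $\epsilon$-persuasive. The choice $\delta = \nicefrac{\epsilon}{2H}$ is forced by the slack bookkeeping in part (iii), and it simultaneously fixes the grid size to $|\D_\delta| = O(H/\epsilon)$ needed in part (i).

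For (i), the three nested outer loops iterate over $h \in \Hi$, $s \in \Si$, and $\iota \in \D_\delta$, for a total of $O(H |\Si|/\delta) = O(H^2 |\Si|/\epsilon)$ iterations. Each iteration invokes the oracle $\Ocal_{h,s,\iota-\delta}$ exactly once (which runs in polynomial time, as argued in Section~\ref{sec:oracle}) and performs only polynomially many bookkeeping operations to populate the entries of $\varphi_h$ and $g_h$. Summing yields a running time polynomial in $1/\epsilon$ and in the instance size.

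For (ii), I would apply Theorem~\ref{th:promise_are_sufficient} to select an optimal persuasive promise-form signaling scheme $\sigma^\star := \{(I_h^\star, \varphi_h^\star, g_h^\star)\}_{h \in \Hi}$ with $V^{\s, \phi^{\sigma^\star}} = \OPT$. Since $0 \in I_1^\star(s)$ for every $s \in \Si$ by definition of a promise-form signaling scheme, Lemma~\ref{lem:lemmaapproxoracletable} applied at step $h = 1$ and promise $\iota = 0$ (so that $\lceil 0 \rceil_\delta = 0$) guarantees $0 \in I_1(s)$ and $M_1^\delta(s, 0) \ge \V_1^{\s, \sigma^\star}(s, 0)$ for all $s$. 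Chaining with Lemma~\ref{lem:lemmaapproxoracletabledue} yields $\V_1^{\s, \sigma}(s, 0) \ge M_1^\delta(s, 0) \ge \V_1^{\s, \sigma^\star}(s, 0)$ for every $s \in \Si$. Taking the expectation over $s \sim \beta$ and rewriting each side through Lemma~\ref{lem:correctness_promise}, I conclude $V^{\s, \phi^\sigma} \ge V^{\s, \phi^{\sigma^\star}} = \OPT$.

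For (iii), which is the most delicate step, I would carefully propagate the slacks introduced by the discretization through the oracle. The algorithm defines $\varphi_h(\cdot \mid s, \iota, \cdot)$ and $g_h(s, \cdot, \iota, \cdot)$ from the tuple returned by $\Ocal_{h, s, \iota - \delta}(M_{h+1}^\delta)$, which by Definition~\ref{def:approxoracle} belongs to $\Psi^{h,s}_{(\iota - \delta) - \delta} = \Psi^{h,s}_{\iota - 2\delta}$. Unpacking this membership through Equations~\eqref{eq:constrint_Pi1} and~\eqref{eq:constrint_Pi2} shows that $\sigma$ is simultaneously $2\delta$-honest and satisfies the local persuasiveness constraint in Equation~\eqref{eq:pers_compact_2} at every $(h, s, \iota)$. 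Invoking Lemma~\ref{lem:cinsistent_arepersuasive} with $\eta = 2\delta$ then promotes these local conditions into the global guarantee that $\sigma$ is $(2\delta H)$-persuasive, which coincides with $\epsilon$-persuasive under the choice $\delta = \nicefrac{\epsilon}{2H}$. The main obstacle is precisely this double-slack accounting: the oracle is queried with argument $\iota - \delta$ \emph{and} is itself allowed to return solutions that overshoot the honesty threshold by another $\delta$, which is exactly what forces halving $\epsilon$ between the two sources of slack rather than using the more naive $\delta = \nicefrac{\epsilon}{H}$.
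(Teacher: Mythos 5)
Your decomposition is essentially the paper's own: polynomial running time from the $O(H|\Si||\D_\delta|)$ oracle calls, optimality by chaining Lemma~\ref{lem:lemmaapproxoracletable} (at $h=1$, $\iota=0$) with Lemma~\ref{lem:lemmaapproxoracletabledue} and Lemma~\ref{lem:correctness_promise}, and persuasiveness by observing that the oracle call $\Ocal_{h,s,\iota-\delta}(M^\delta_{h+1})$ returns a pair in $\Psi^{h,s}_{\iota-2\delta}$, so that $\sigma$ is $2\delta$-honest and satisfies Equation~\eqref{eq:pers_compact_2}, whence Lemma~\ref{lem:cinsistent_arepersuasive} with $\eta=2\delta$ gives $(2\delta H)$-persuasiveness and forces $\delta=\nicefrac{\epsilon}{2H}$. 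All of this matches the paper, including the double-slack accounting.

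The one step you skip, and which the paper devotes the first part of its proof to, is verifying that the returned $\sigma$ is a \emph{well-formed} promise-form signaling scheme, namely that $g_h(s,a,\iota,s') \in I_{h+1}(s')$ for every $h\le H$, $s,s'\in\Si$, $a\in\A$, and $\iota\in I_h(s)$ (you do recover the other requirement, $0\in I_1(s)$, from Lemma~\ref{lem:lemmaapproxoracletable}). This is not mere bookkeeping: the induction in Lemma~\ref{lem:mantaining_promises}, on which Lemma~\ref{lem:cinsistent_arepersuasive} rests, only yields $\V^{\textnormal{\rec},\sigma}_{h+1}(s',\iota') \ge \iota' - \eta(H-h)$ for promises $\iota' \in I_{h+1}(s')$, i.e., at entries where the honesty and persuasiveness constraints were actually enforced by the algorithm; if $g_h$ could map a valid promise to a value outside $I_{h+1}(s')$ (an entry with $M^\delta_{h+1}(s',\cdot)=-\infty$), the inductive chain — and hence your invocation of Lemma~\ref{lem:cinsistent_arepersuasive} and of the $\V$-function recursions — would not be justified. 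The paper closes this by noting that $\iota\in I_h(s)$ means the oracle returned $v>-\infty$, and since $v\le F_{h,s,M^\delta_{h+1}}(\kappa,q)$ under the convention $0\cdot(-\infty)=-\infty$, every referenced entry $M^\delta_{h+1}(s',q(a,s'))$ is finite, which by the construction of Algorithm~\ref{alg:DP} (treating $h=H$ separately, where only $q(a,s')=0$ has a finite entry) gives $q(a,s')\in I_{h+1}(s')$. With this verification added, your argument coincides with the paper's proof.
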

\section{Building a polynomial-time approximate oracle}\label{sec:oracle}

\begin{wrapfigure}[12]{R}{0.53\textwidth}
	\vspace{-0.6cm}
	\begin{minipage}{0.53\textwidth}
\begin{algorithm}[H]
	\begin{algorithmic}[1]
		\Require{$\delta \in (0,1)$, $h\in\Hi$, $s\in\Si$, $\iota\in\D_{\delta}$, $M:\Si\times\D_\delta\to\mathbb{R}$}
		\If{$\Rcal_{h,s, \iota}(M)$ is feasible }
		\State $( \kappa, \tilde q) \gets \textnormal{Solution to } \Rcal_{h,s, \iota}(M)$
		\State $v\gets \widetilde F_{h,s,M}(\kappa,\tilde q)$
		\State $q \gets \lfloor \tilde q_{\mathbb{E}} \rfloor_\delta$
		\Else
		\State $(\kappa,q)$ arbitrary
		\State $v\gets-\infty$
		\EndIf
		\State \Return $(\kappa,q,v)$
	\end{algorithmic}
	\caption{Approximate oracle}
	\label{alg:oracle}
\end{algorithm}
\end{minipage}
\end{wrapfigure}
In the previous section, we provided an approximation scheme (Algorithm~\ref{alg:DP}) that works provided it has access to an oracle that approximately solves the problem $\Pical_{h,s,\iota} (M)$ while satisfying the requirements of Definition~\ref{def:approxoracle}.
%
%
In this section, we exploit the specific structure of the problem to design an algorithm implementing such an oracle.

First, we relax the problem $\Pical_{h,s,\iota}(M)$ into $\Rcal_{h,s,\iota}(M)$, which optimizes the expected value of the sender's rewards over \emph{randomizations} of promises $\iota \in \D_\delta$. 
Moreover, we define for all $s'\in\Si$ the set $\D_\delta(s',M)=\{ \iota \in \D_\delta: M(s',\iota)>-\infty\}$ as the set of \emph{realizable promises}, which semantically means that those promises can be realized, conditioned on what is stored in the table $M$. \footnote{Note that set is always not empty when instantiating $M=M_h^\delta$ for some $h\in\Hi$, as it always contains $0$ for all $s'\in\Si$. This can be easily seen from Lemma~\ref{lem:lemmaapproxoracletable} with $\iota=0$.} 
Specifically, the variables of $\Rcal_{h,s,\iota}(M)$ are encoded by a function $\kappa:\Theta\to\Delta(\A)$ and a randomized function $\tilde q:\Si\times\A\to\Delta(\D_\delta)$. 
The objective function is
\[
\widetilde F_{h,s,M}(\kappa,\tilde q):=\sum\limits_{\theta\in\Theta}\sum\limits_{a\in\A} \mu_h(\theta| s) \kappa(a|\theta)\left(r_h^\s(s,a,\theta)+\sum\limits_{s'\in\Si}p_h(s'|s,a,\theta)\sum_{\iota' \in \D_\delta(s',M)}\tilde q(\iota'|a,s')M(s', \iota')\right),
\]
which modifies the objective function $F_{h,s,M} (\kappa, q)$ of $\Pical_{h,s,\iota}(M)$ by introducing the expectation over the possible promises $\iota' \in \D_\delta$, which are randomized through $\tilde q (\iota'|a, s')$.

To simplify notation, for any set $\X$, table $M$, and randomized function $\lambda:\X\times\Si\to\Delta(\D_\delta)$, we denote by $\lambda_{\mathbb{E}}:\X\times\Si\to\textnormal{co}(\D_\delta)$ a quantity related to its average, where $\textnormal{co}$ denotes the convex hull of a set.
%
Formally,
$\lambda_{\mathbb{E}}(x,s')=\sum_{\iota'\in\D_\delta(s',M)}\iota'\cdot \lambda(\iota'|x)$ 
for all $x\in\X$.
Moreover, we denote by $\lfloor\lambda_{\mathbb{E}}\rfloor_\delta:\X\to\D_\delta$ its discrete average such that
$
\lfloor\lambda_{\mathbb{E}}\rfloor_\delta(x,s')=\left\lfloor\lambda_{\mathbb{E}}(x,s')\right\rfloor_\delta
$
for all $x\in\X$ and $s'\in\Si$.\footnote{For ease of notation we drop the dependence of the operator $\mathbb E$ from $M$, as it always clear from the context.}

By exploiting the notation introduced above, the relaxed optimization problem $\Rcal_{h,s,\iota}(M)$ reads as:
\begin{subequations}
	\begin{align}
		\Omega_{h,s,\iota}(M)&:=\max\limits_{\substack{\kappa:\Theta\to\Delta(\A)\\ \tilde q:\A\times\Si\to\Delta(D_\delta)}}\widetilde F_{h,s,M}(\kappa, \tilde q) \quad \text{s.t.} \quad (\kappa,\tilde q_{\mathbb{E}})\in \Psi^{h,s}_\iota,\nonumber
	\end{align}
\end{subequations}
where we relax the functions $\tilde q$ to be distributions over $\D_\delta$, rather than deterministic.
%
%
Then, we can prove the following:
\begin{restatable}{lemma}{quadraticlemma}\label{lem:quadratic}
	The problem $\Rcal_{h,s,\iota}(M)$ can be solved in time polynomial in $1/\delta$ and the instance size.
\end{restatable}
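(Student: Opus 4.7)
The plan is to cast $\Rcal_{h,s,\iota}(M)$ as a polynomially-sized linear program by introducing two joint-distribution variables that absorb all bilinear interactions in the formulation. First, for each $\theta\in\Theta$ and $a\in\A$, I would define $y(\theta,a):=\mu_h(\theta|s)\,\kappa(a|\theta)$, the joint probability of private observation $\theta$ and recommended action $a$ in state $s$. Then, for each $a\in\A$, $s'\in\Si$, and $\iota'\in\D_\delta$, I would define
\[
z(a,s',\iota'):=\tilde q(\iota'|a,s')\sum_{\theta\in\Theta} y(\theta,a)\,p_h(s'|s,a,\theta),
\]
the joint probability that $a$ is recommended, the next state is $s'$, and the continuation promise $\iota'$ is sampled. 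I would additionally impose $z(a,s',\iota')=0$ for every $\iota'\notin\D_\delta(s',M)$, which mirrors the requirement that $\tilde q(\cdot|a,s')$ be supported on realizable promises so that the objective stays finite under the convention $0\cdot(-\infty)=-\infty$.

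Under these substitutions every bilinear term becomes linear in $(y,z)$. The objective $\widetilde F_{h,s,M}$ rewrites as $\sum_{\theta,a} y(\theta,a)\,r_h^\s(s,a,\theta)+\sum_{a,s',\iota'} z(a,s',\iota')\,M(s',\iota')$, and the honesty constraint~\eqref{eq:constrint_Pi1} becomes $\sum_{\theta,a} y(\theta,a)\,r_h^\rec(s,a,\theta)+\sum_{a,s',\iota'}\iota'\,z(a,s',\iota')\geq \iota$, because the product $y(\theta,a)\,p_h(s'|s,a,\theta)\,\tilde q_{\mathbb{E}}(a,s')$, after marginalizing $\theta$, collapses to $\sum_{\iota'}\iota'\,z(a,s',\iota')$. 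Each persuasiveness constraint~\eqref{eq:constrint_Pi2} rewrites in the same way, with its right-hand side being already linear in $y$ (it does not involve $\tilde q$). I would finally add the polyhedral consistency conditions $y(\theta,a)\geq 0$, $\sum_a y(\theta,a)=\mu_h(\theta|s)$, $z(a,s',\iota')\geq 0$, and $\sum_{\iota'\in\D_\delta(s',M)} z(a,s',\iota')=\sum_\theta y(\theta,a)\,p_h(s'|s,a,\theta)$, encoding that $\kappa(\cdot|\theta)\in\Delta(\A)$, $\tilde q(\cdot|a,s')\in\Delta(\D_\delta)$, and the defining marginal relation between $z$ and $(y,\tilde q)$.

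The resulting LP has $O(|\Theta||\A|+|\A||\Si|/\delta)$ variables and polynomially many constraints, hence it is solvable in time polynomial in $1/\delta$ and the instance size by any standard polynomial-time LP algorithm; infeasibility of the LP corresponds exactly to infeasibility of $\Rcal_{h,s,\iota}(M)$. From an optimal $(y^\star,z^\star)$, I would recover $(\kappa,\tilde q)$ by normalizing marginals: set $\kappa(a|\theta)=y^\star(\theta,a)/\mu_h(\theta|s)$ whenever $\mu_h(\theta|s)>0$, and $\tilde q(\iota'|a,s')=z^\star(a,s',\iota')/\sum_{\iota''} z^\star(a,s',\iota'')$ whenever the denominator is positive, filling any remaining entries---all of which multiply zero everywhere---with an arbitrary valid distribution.

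The main step to verify carefully is that $(\kappa,\tilde q)\mapsto(y,z)$ is a value-preserving bijection (up to those inconsequential choices at zero marginals) between feasible points of $\Rcal_{h,s,\iota}(M)$ and feasible points of the LP. One direction is immediate from the definitions; for the converse, any LP-feasible $(y,z)$ respecting the marginal identities can be divided by those marginals to produce valid $\kappa(\cdot|\theta)\in\Delta(\A)$ and $\tilde q(\cdot|a,s')\in\Delta(\D_\delta)$. The key structural observation enabling this lossless linearization is that $\tilde q$ only multiplies quantities which, after summing over $\theta$, depend solely on $(a,s')$---so collapsing $\theta$ into the marginals absorbed by $z$ loses no information that any constraint or the objective needs to preserve.
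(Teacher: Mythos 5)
Your proposal is correct and follows essentially the same route as the paper: linearize $\Rcal_{h,s,\iota}(M)$ by introducing the product variables $z(a,s',\iota')$ together with the marginal-consistency constraint, solve the resulting polynomially-sized LP, and recover $(\kappa,\tilde q)$ by normalizing marginals, with the two-way value-preserving correspondence between feasible points (the paper's Lemmas~\ref{lem:RgraterLP} and~\ref{lem:LPgreaterR}). Your use of the joint variables $y(\theta,a)=\mu_h(\theta|s)\kappa(a|\theta)$ instead of the paper's $\xi_{a,\theta}=\kappa(a|\theta)$ is only a trivial rescaling, and your ``arbitrary distribution at zero marginals'' is harmless for this lemma (the paper's specific choice of placing that mass on the promise $0$ only matters later, in the oracle analysis).
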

By relying on a solution $(\kappa,\tilde{q})$ to $\Rcal_{h,s,\iota}(M)$, Algorithm~\ref{alg:oracle} is an approximate oracle for $\Pical_{h,s,\iota}(M)$.
In particular, when the relaxed problem is feasible, the algorithm returns $k$ and the function $q$ obtained by de-randomizing the randomized function $\tilde q$ with its discrete average.
%
%
%
%
Formally:
\begin{restatable}{theorem}{existenceOracle}\label{thm:existence}
	For every $h\in\Hi$, $s\in\Si$, and $\iota\in \D_\delta$, if Algorithm~\ref{alg:oracle} is used for all $h'>h$ as approximate oracle $\Ocal_{h', s, \iota}$ in Algorithm~\ref{alg:DP}, then it implements an approximate oracle as in Definition~\ref{def:approxoracle}.
%
\end{restatable}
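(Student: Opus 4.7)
The plan is to verify the three requirements of Definition~\ref{def:approxoracle} for the output of Algorithm~\ref{alg:oracle}, arguing by backward induction on $h$. The induction hypothesis enters only through structural properties of the input table $M = M_{h+1}^\delta$, which is itself the result of applying Algorithm~\ref{alg:DP} with Algorithm~\ref{alg:oracle} at step $h+1$.

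The bound $\Pi_{h,s,\iota}(M) \le v$ is the easiest of the three. The problem $\Rcal_{h,s,\iota}(M)$ is a proper relaxation of $\Pical_{h,s,\iota}(M)$: any feasible $(\kappa, q)$ of the latter lifts to a Dirac $\tilde q$ with $\tilde q_{\mathbb{E}} = q$ that is feasible for the former and attains the same objective. Hence $\Omega_{h,s,\iota}(M) \ge \Pi_{h,s,\iota}(M)$, and since Algorithm~\ref{alg:oracle} sets $v = \widetilde F_{h,s,M}(\kappa, \tilde q) = \Omega_{h,s,\iota}(M)$, the inequality is immediate. The infeasibility branch sets $v = -\infty$, which matches $\Pi_{h,s,\iota}(M) = -\infty$.

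The main obstacle is the inequality $v \le F_{h,s,M}(\kappa, q)$ with $q = \lfloor \tilde q_{\mathbb{E}} \rfloor_\delta$. To handle it, I would isolate two structural properties of the table $M_{h+1}^\delta$ and carry them along with the induction: (i) for every $s'$, the map $\iota' \mapsto M_{h+1}^\delta(s', \iota')$, extended appropriately to $\textnormal{co}(\D_\delta(s', M_{h+1}^\delta))$, is concave; (ii) the same map is non-increasing in $\iota'$. Property (i) should hold because $M_{h+1}^\delta(s', \iota')$ is the optimum of a problem ($\Rcal_{h+1, s', \iota'}$) whose feasible set is convex and whose only dependence on $\iota'$ is the linear honesty threshold in \eqref{eq:constrint_Pi1}; property (ii) is clear since increasing $\iota'$ can only shrink the feasible set. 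Granted (i) and (ii), for every $(a, s')$ Jensen's inequality followed by monotonicity yields
\begin{equation*}
\sum_{\iota' \in \D_\delta(s', M)} \tilde q(\iota'|a, s') \, M(s', \iota') \le M(s', \tilde q_{\mathbb{E}}(a, s')) \le M(s', \lfloor \tilde q_{\mathbb{E}}(a, s') \rfloor_\delta),
\end{equation*}
and averaging against $\mu_h$, $\kappa$ and $p_h$ turns this pointwise estimate into $\widetilde F_{h,s,M}(\kappa, \tilde q) \le F_{h,s,M}(\kappa, q)$, i.e.\ $v \le F_{h,s,M}(\kappa, q)$.

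It remains to verify $(\kappa, q) \in \Psi^{h,s}_{\iota - \delta}$. Since $0 \le \tilde q_{\mathbb{E}}(a, s') - q(a, s') \le \delta$ entrywise and $\mu_h$, $\kappa$, $p_h$ are probability distributions, rounding decreases the left-hand side of the honesty constraint \eqref{eq:constrint_Pi1} by at most $\delta$, leaving it $\ge \iota - \delta$. The same entrywise rounding bound applied action-wise to the persuasiveness constraint \eqref{eq:constrint_Pi2} preserves it to the appropriate order, the right-hand side being unaffected because it depends only on $\widehat V^\rec_{h+1}$. Putting the three points together, Algorithm~\ref{alg:oracle} meets Definition~\ref{def:approxoracle} at step $h$, closing the induction.
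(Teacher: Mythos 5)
Your plan follows essentially the same route as the paper: the paper also proves $v \le F_{h,s,M}(\kappa,q)$ by showing that $\iota \mapsto \Omega_{h,s,\iota}(M_{h+1}^\delta)$ is concave (as the value of an LP parameterized in the constraints' right-hand side) and non-increasing---which is precisely where the hypothesis that Algorithm~\ref{alg:oracle} is used at all later steps enters, since then the table entries coincide with these LP values---and then combines Jensen's inequality with monotonicity to absorb the floor, while the relaxation bound $\Pi_{h,s,\iota}(M)\le \Omega_{h,s,\iota}(M)=v$ and the feasibility of the rounded pair in $\Psi^{h,s}_{\iota-\delta}$ are handled just as you propose. The two details you gloss over (reallocating the probability mass placed on non-realizable promises onto the promise $0$, and the effect of rounding on constraint~\eqref{eq:constrint_Pi2}) are treated equally lightly in the paper's own proof, so your proposal matches it in both substance and level of rigor.
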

Notice that the value $v$ returned by Algorithm~\ref{alg:oracle} is the optimal value of the relaxed problem $\Rcal_{h,s,\iota}(M)$. It is easy to prove (see the proof of Theorem~\ref{thm:existence}) that the tables built by Algorithm~\ref{alg:DP} through Algorithm~\ref{alg:oracle} are encoded by concave functions.
%
%
This is the key feature that allows us to go from a randomized solution to its discrete average without loss in sender's expected rewards, and it is the reason why we need to assume that Algorithm~\ref{alg:oracle} is employed at every step $h$ in Theorem~\ref{thm:existence}.


\clearpage

\bibliographystyle{plainnat}
\bibliography{biblio}

\clearpage

\appendix
\section{History-dependent signaling schemes are necessary}
\label{apx:example}

In this section, we provide an illustrative MDP instance in which an history-dependent signaling scheme is required in order to optimally solve a Bayesian persuasion problem with farsighted receiver. 
This instance also proves Theorem~\ref{th:historyisnecessary}.
The instance is depicted in Figure~\ref{fig:illustrative_example}. It is easy to check that the receiver can achieve expected rewards 
\begin{align*}
	\widehat{V}^{\rec}_1 (s_0) = 5, \quad \widehat{V}^{\rec}_2 (s_1) = 10, \quad \widehat{V}^{\rec}_2 (s_2) = 0, \quad \widehat{V}^{\rec}_3 (s_3) = 0, \quad \widehat{V}^{\rec}_3 (s_4) = 0,
\end{align*}
without sender's recommendations. If the sender commits to a \emph{non-stationary Markovian} signaling scheme $\phi^\star = \{ \phi_h \}_{h \in \Hi}$, where $\phi_h : \Si \times \Theta \to \Delta(\A)$, they can achieve an expected reward  of $V^{\s, \phi^\star} = V^{\s, \phi^\star}_1 (s_0) = 25$ while being persuasive. The latter is obtained by recommending actions deterministically as follows:\footnote{States where action selection is irrelevant are omitted for brevity.}
\begin{align*}
	\phi^\star_2 (s_2) = a_1,  \quad \phi^\star_3 (s_3, \theta_0) = a_0, \quad \phi^\star_3 (s_3, \theta_1) = a_0.
\end{align*}
Instead, an \emph{history-dependent} signaling scheme $\phi^\dagger := \{ \phi_\tau \}_{\tau \in \T}$ can provide the sender with an expected reward of $V^{\s, \phi^\dagger} = V^{\s, \phi^\dagger}_1 (s_0) = 30$ while being persuasive. This can only be obtained by adapting the action recommendation strategy in state $s_3$ according to whether the receiver passed through state $s_1$ or state $s_2$, as follows:
\begin{align*}
	\phi^\dagger_{(s_0, s_1, s_3)} (\theta_0) = a_0, \quad \phi^\dagger_{(s_0, s_1, s_3)} (\theta_1) = a_0, \quad \phi^\dagger_{(s_0, s_2, s_3)} (\theta_0) = a_0, \quad \phi^\dagger_{(s_0, s_2, s_3)} (\theta_1) = a_1,
\end{align*}
which makes the recommendation $\phi_{(s_0, s_2)}^{\dagger} = a_0$ persuasive as well.

\begin{figure}[H]
	\centering
	\includegraphics[scale=1.5]{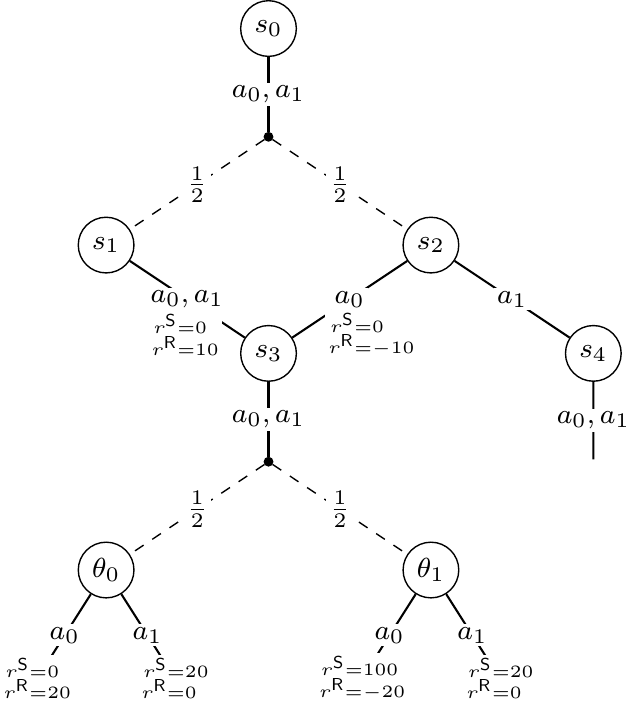}
	\caption{Visual representation of a Bayesian persuasion problem defined over an MDP with $\Si = \{s_0, s_1, s_2, s_3 \}, \A = \{ a_0, a_1 \}, \Theta = \{ \theta_0, \theta_1 \}, \Hi = [1, 2, 3],$ and $\beta(s_0) = 1$.
	The probabilities of the state transitions $p_h$ and the private observations $\mu_h$ are reported on the dashed edges.
	The reward functions $r^\s_h, r^\rec_h$ for the sender and the receiver, respectively, are reported on the solid edges, when different from $0$.
	For the sake of clarity, the visualization omits some irrelevant information, such as private observations in $s_0, s_1, s_2, s_4$ and deterministic transitions. }
	\label{fig:illustrative_example}
\end{figure}

The latter example proves the existence of an MDP instance where history-dependent signaling schemes are necessary. The reported instance does \emph{not} require bizarre constructions, and we believe that exploiting history to optimize action recommendations is crucial in practical scenarios as well. 

\paragraph{Intuition of Figure~\ref{fig:illustrative_example}.}In such a scenario, the MDP in Figure~\ref{fig:illustrative_example} can model a ride-sharing platform, where the drivers act as receivers and the platform is the sender. The platform can recommend routes and matching orders with the drivers, which then share with the platform a portion of the profit originating from the trips.
In this example, $s_3$ can be thought as a location with high demand that is costly to reach, such as an airport terminal far from the city center. We can further think of the path $s_1 \to s_3$ as a driver's trip to the airport while serving an order, and $s_2 \to s_3$ as an empty trip instead. The platform can then adapt its order matching strategy according to whether the driver suffered empty costs or got a profit to come to the airport, such as guaranteeing a quick yet cheap match to the former, while waiting a more lucrative trip for the latter. This adaptive strategy can only be executed through an history-dependent signaling scheme. Instead, a Markovian signaling scheme cannot lure a driver into coming to the airport without serving an order, with diminishing profits for the platform.
\section{Proofs omitted from Section~\ref{sec:necessary}}

\hardness*
\begin{proof}
	We reduce from a promise version of vertex cover in cubic graphs.
	In particular, given a cubic graph $(V,E)$, there exists a $\gamma\in(0,2]$ such that it is $\NPHARD$ to decide whether there exists a vertex cover of size $k$ or all the vertex covers have size at least $(1+\gamma) k$~\citep{APXAlimonti}.
	
	We design an instance such that the set of states $\Si$ includes $s_0$, $s_1$, $s_2$ and $s_3$.
	Moreover,  it includes a state $s_e$ for each $e \in E$, and a state $s_v$ for each $v \in V$.
	The time horizon is set as $H=3$.
	All the transition function and reward are independent from the time step $h \in [H]$. Hence, we will remove the subscript $h$ from the notation.

	The rewards and transition in the different states are as follows:
	\begin{itemize}[leftmargin=8mm]
			\item In state $s_0$ there is a single state of nature $\theta_0$ and two actions $a_{0,1}$ and $a_{0,2}$.
			The transition function is such that  $p(s_3|s_0,a_{0,1}, \theta_0)=1$ while $p(s_e|s_0,a_{0,2}, \theta_0)=\frac{1}{|E|}$ for each $e \in E$. 
			The rewards of the receiver in state $s_0$ are $r^\rec(a_{0,1},s_0,\theta_0)=1$ and  $r^\rec(a_{0,2},s_0,\theta_0)=0$. The rewards of the sender in state $s_0$ are $r^\s(a_{0,1},s_0,\theta_0)=0$ and  $r^\s(a_{0,2},s_0,\theta_0)=1$.
		\item In state $s_1$, there is a single state of nature $\theta_1$ and a single actions $a_{1}$. The transition function is $p(s_2|s_1,a_{1}, \theta_1)=1$, while the rewards of the sender and the receiver in $s_1$ are always $0$.

		\item In state $s_2$, there is a single state of nature $\theta_2$ and a single actions $a_{2}$. The transition function is $p(s_v|s_2,a_{2}, \theta_2)=\frac{1}{|V|}$ for each $v \in V$. The sender's and receiver's rewards in $s_2$ are $0$.

		\item In state $s_3$, there is a single state of nature $\theta_3$ and a single actions $a_{3}$. The transition function is such that  $p(s_3|s_3,a_{3}, \theta_3)=1$. The rewards of the sender and the receiver in $s_3$ are always $0$.
		
		\item For each $e=(v,u)\in E$, in the state $s_e$, there is a single state of nature $\theta_e$, and two actions  $a_{e,v}$ and $a_{e,u}$.
		The transition function is such that  $p(s_v|s_e,a_{e,v}, \theta_e)=1$ and $p(s_u|s_e,a_{e,u}, \theta_e)=1$. The rewards of the sender and the receiver in state $s_e$ are $0$. 

		\item For each $v \in V$ in the state $s_v$, there are two states of nature $\theta_{v,1}$ and $\theta_{v,2}$. There are three actions $a_{v,1}$, $a_{v,2}$, and $a_{v,3}$.
		The transition function is such that  $p(s_3|s_v,a,\theta)=1$ for each $a$ and $\theta$. The rewards of the receiver in $s_v$ are $r^\rec(s_v,a_{v,1},\theta_{v,1})=1$, $r^\rec(s_v,a_{v,2},\theta_{v,2})=1$, $r^\rec(s_v,a_{v,3},\theta_{v,1})=r^\rec(s_v,a_{v,3},\theta_{v,2})=\frac{1}{2}$ and $0$ otherwise. 
		The rewards of the sender in $s_v$ is $\frac{1}{2}$ if the receiver plays $a_{v,3}$, \emph{i.e.}, $r^\s(s_v,a_{v,3},\theta_{v,1})=r^\s(s_v,a_{v,3},\theta_{v,2})=\frac{1}{2}$, and $0$ otherwise.
		\item All nature's states are drawn uniformly on the nature's outcome at each node.
		\item Finally, the initial distribution over the states is $\beta$ such that $\beta(s_0)=\beta(s_1)=\frac{1}{2}$.
		\end{itemize}
		
		\begin{figure}
			\centering
			\includegraphics[width=0.25\textwidth]{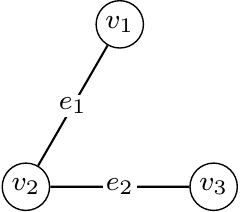}
			\caption{Illustrative instance of a cubic graph $(V,E)$.}
			\label{fig:cubicgraph}
		\end{figure}
		
		In~\Cref{fig:cubicgraph} we have a simple instance of a cubic graph (undirected graph with degree bounded by $3$), and in~\Cref{fig:reduction} we have the instance the persuasion MDP built from the cubic graph in~\Cref{fig:cubicgraph}.
		
		We recall that a \emph{non-stationary Markovian} signaling scheme is defined by a set of functions $ \{ \phi_h \}_{h \in \Hi}$, where $\phi_h : \Si \times \Theta \to \Delta(\A)$.
		However, we use both states $s_1$ and $s_2$ so that it takes two steps to arrive in states $s_v$ both when starting from $s_0$ and when starting from $s_1$.
		Hence, each state $s$ (excluding the sink state $s_3$) can be reached only with a specific time step $h$.
		This allows us to work only with stationary signaling scheme, \emph{i.e.}, such that the signals do not depend on the time step. Hence, we can remove the subscript $h$ from the signaling scheme notation.
			
		\begin{figure}[!h]
			\centering
			\begin{subfigure}[b]{0.49\textwidth}
				\centering
				\includegraphics[width=\textwidth]{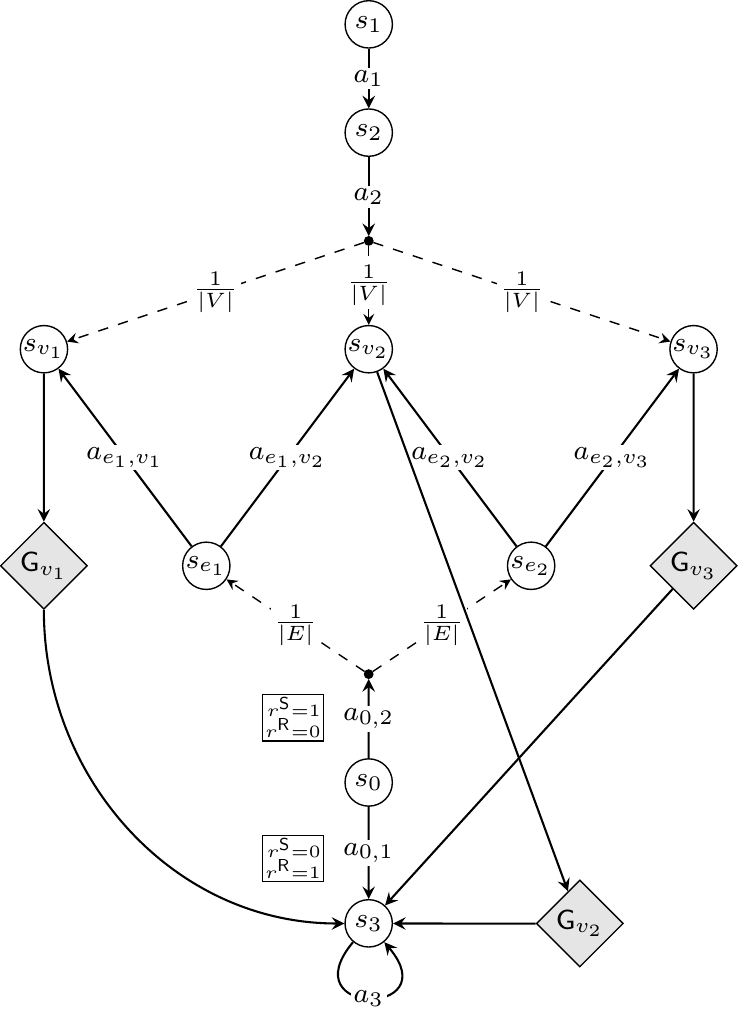}
				\caption{Instance of Bayesian Persuasion in MDP}
				\label{fig:MDPInstanceReduction}
			\end{subfigure}
			\hfill
			\begin{subfigure}[b]{0.49\textwidth}
				\centering
				\includegraphics[width=\textwidth]{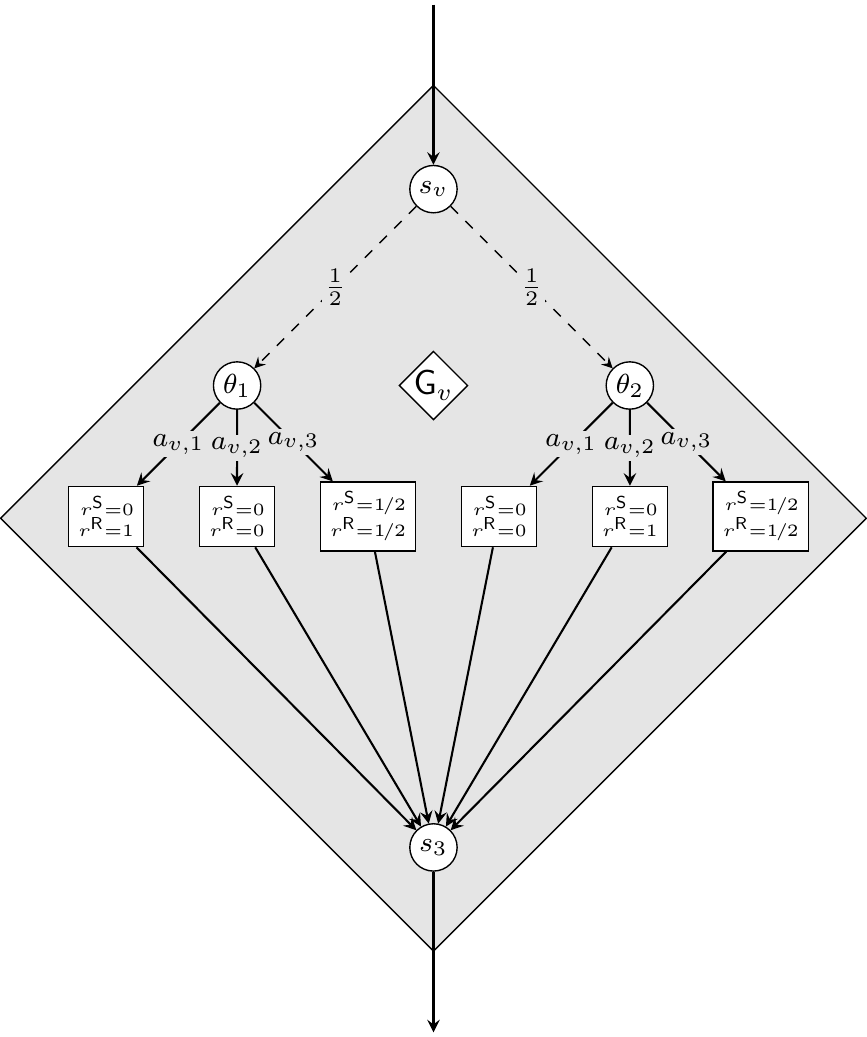}
				\caption{Gadget $\G_v$}
				\label{fig:gadget}
			\end{subfigure}
			\caption{Instance of Bayesian persuasion in MDP constructed from the cubic graph $(V,E)$ of~\Cref{fig:cubicgraph}. Sender's and receiver's rewards are reported only when different from $0$. Solid lines represents actions (that are reported on the corresponding edges) while dashed lines represents stochastic outcomes (either of the transition model or of the prior). For each $v\in V$ the gadget $\G_v$ is reported~\Cref{fig:gadget}.}
			\label{fig:reduction}
		\end{figure}

		We will show that if there exists a vertex cover of size $k$ then there exists a persuasive signaling scheme such that the sender's expected rewards is at least $\frac{3}{4}-\frac{k}{4|V|}$, while if all the vertex covers have size at least $(1+\gamma) k$ then all the $\epsilon$-persuasive signaling schemes have sender's expected rewards strictly less than $\alpha\left(\frac{3}{4}-\frac{k}{4|V|}\right)$, where $\epsilon=\frac13\cdot10^{-4} \gamma^2$ and $\alpha=1-\frac{\gamma}{100}$. This will conclude the proof as we could use an algorithm for our problem with approximation factor less or equal to $\alpha$ to distinguish between the two cases of the original problem.
		
		\paragraph{Completeness.}
		Suppose there exists a vertex cover $V^*$ of size $k$.

		Consider the signaling scheme $\phi^*$ such that $\phi^*(a_{0,2}|s_0,\theta_0)=1$, $\phi^*(a_{1}|{s_1},\theta_1)=1$, $\phi^*(a_{2}|{s_2},\theta_2)=1$, and $\phi^*(a_{3}|{s_3},\theta_3)=1$.
		Moreover, by the definition of vertex cover , for each $e \in E$ there exists a vertex $v_e \in V^*$ such that $v_e \in e=(v_e,u)$.  We set $\phi^*(a_{e,v_e}|{s_e},\theta_e)=1$.
		Finally, for each $v \in V^*$, we set $\phi^*(a_{v,1}|{s_v},\theta_{v,1})=1$, $\phi^*(a_{v,2}|{s_v},\theta_{v,2})=1$, and $\phi^*(a_{v,3}|{s_v},\theta_{v,1})= \phi^*(a_{v,3}|{s_v},\theta_{v,2})=1$ for all $v\notin V^*$.
		
		First, we show that the signaling scheme is persuasive. Consider the state $s_0$. The receiver following the recommendations will always reach a state $v \in V^*$ and in this state will get utility $1$. It is easy to see that playing action $a_{0,1}$ the utility is $1$. Hence, in state $s_0$ the receiver will not deviate.
		In states $s_1$, $s_2$, and $s_3$, there is only one action available, hence the signaling scheme is trivially persuasive in these states.
		In a state $s_e$, $e \in E$, the receiver has no incentive to deviate. Indeed, following the recommendation they will transition to a state $s_v$, $v \in V^*$, and will get $1$, while not following the recommendation they can get at most $1$.
		In all the states $s_v$, for $v \in V^*$, the receiver gets $1$, while deviating can get at most $1$.
		Finally, in a state  $s_v$, $v \notin V^*$, the receiver gets $\frac{1}{2}$, while deviating can get at most $\frac{1}{2}$.
		Hence, the signaling scheme is persuasive.
		
		Moreover, the sender's expected rewards for playing according to $\phi^*$ is $1$ if the initial state is $s_0$, as in this state the signaling scheme recommends $a_{0,2}$. On the other hand, if the initial state is $s_1$, then a state $s_v$ with $v\in V^*$ is reached with probability $k / |V|$ and a state $s_v$ with $v\notin V^*$ is reached with probability $1-k / |V|$. Then, it is easy to see that the sender's expected rewards in reaching a state $s_v$ with $v\in V^*$ is $0$ and $\frac{1}{2}$ otherwise. Thus the total sender's expected rewards of the signaling scheme $\phi^*$ is:
		\[
		\frac{1}{2}+ \frac{1}{4} \left( 1-\frac{k}{|V|}\right)=\frac{3}{4}-\frac{k}{4|V|}.
		\]

		\paragraph{Soundness.}
		Consider a signaling scheme $\phi$, and suppose that all the vertex covers have size at least $(1+\gamma) k$. 
		We show that the sender's expected rewards is strictly less than $\alpha\left(\frac{3}{4}-\frac{k}{4|V|}\right)$.
		First, notice that if the sender's expected rewards is at least $\alpha\left(\frac{3}{4}-\frac{k}{4|V|}\right)$ it has to recommend $a_{0,2}$ with probability at least $\ell\coloneqq 3/10$, \emph{i.e.}, $\phi(a_{0,2}|{s_0},\theta_0)\ge\frac{3}{10}$. Indeed, if this is not the case the sender's expected rewards is at most 
		\[
		\frac{1}{2} \frac{3}{10}\left(1+\frac12\right)+ \frac{1}{2}\frac{1}{2} = \frac{19}{50}< \frac\alpha2 \le \alpha\left(\frac{3}{4}-\frac{k}{4|V|}\right),
		\]
		as it gives at most $1+\frac{1}{2}$ from $s_0$ when $a_{0,2}$ is played (which happens with probability less then $3 / 10$) and $\frac{1}{2}$ from $s_1$. Thus we can only consider the cases in which $\phi(a_{0,2}|{s_0},\theta_0)\ge\frac{3}{10}$.

		Then, suppose that for $n$ states $s_e$, $e \in E$, the receiver's expected rewards following the recommendations from there on is at most $1-\delta$, where $\delta= \gamma /100$. 
		Denote with $\widehat E\subseteq E$ be this set, where $|\widehat E|=n$, and with $\widetilde E\coloneqq E \setminus \widehat{E}$.
		The $\epsilon$-persuasiveness constraint in state $s_0$ when the recommended action is $a_{0,2}$ implies:
		\[ 
		\phi(a_{0,2}|{s_0},\theta_0) \mleft[\frac{n}{|E|}\left(1-\delta\right)+ \left(1-\frac{n}{|E|}\right)\mright] \ge  \phi(a_{0,2}|{s_0},\theta_0) r^\s(a_{0,1},s_0,\theta_0)   -\epsilon,
		\]
		as the left hand side is an upper bound on the receiver's expected rewards starting from $s_0$. 
		The equation above implies that,
		\[    
		\frac{n}{|E|}\left(1-\delta\right)+ \left(1-\frac{n}{|E|}\right) \ge 1  -\frac\epsilon\ell, 
		\]
		and by rearranging it:
		\[n \le \frac{|E| \epsilon}{\ell\delta} =  \frac{\frac13|E| \gamma^2 10^{-4}}{3\gamma/1000}  =\frac{\gamma|E|}{90} .\]
		Let $\widetilde V$ be the set of states $s_v$, $v \in V$, such that the receiver's expected rewards from there on is at least $1-\delta$. 
		Notice that for each state $s_e$, with $e \in \widetilde E$, there exists a state $s_v$, with $v \in \widetilde V$, such that  $v \in e$, \emph{i.e.}, $v$ covers $e$. This implies that $\widetilde V$ covers $\widetilde E$.
		
		By contradiction, we  show that $|\widetilde V|\ge (1+\gamma) k-n$.
		Suppose otherwise. Then, the set $\widetilde V$ is such that $|\widetilde V|< (1+\gamma) k-n$ and covers at least $|\widetilde E|$ edges. Hence, there exists a super-set of $\widetilde V$ of size strictly smaller than $(1+\gamma) k$ that covers all the edges, reaching a contradiction. This holds as adding a vertex $v$ for each $e=(v,u) \in \widehat E$  to $\widetilde V$, we obtain a vertex cover of $E$.
		%
		%
		
		Then, we show that in all the states $s_v$, $v \in \widetilde V$ the sender's expected rewards from there on is at most $\delta$. It is easy to see that action $a_3$ is recommended with probability at most $2 / \delta$ in each state $s_v$, $ v \in \widetilde V$ and the sender collects $\frac{1}{2}$. If this is not the case, the receiver's expected rewards is strictly smaller than $2\delta \frac{1}{2} + (1-2\delta) 1=1-\delta$. This implies that the sender's expected rewards from a state $s_v$, $v \in \widetilde V$ is at most $\delta$.
		A similar argument shows that the sender's expected rewards from a state $s_e$, $e \in \widetilde E$, is at most $\delta$.
		%
		
		We conclude the proof by showing an upper bound on the sender's expected rewards.
		In the following, we will exploit that the graph is cubic and $|V| \ge k\ge \frac{|E|}{3}$. Moreover, we assume w.l.o.g.~that there are no unconnected vertices, which implies that $|E|\ge \frac{|V|}{2}$ and thus that $k\ge \frac{|V|}{6}$. 
		Hence, the sender's expected rewards when the initial state is $s_0$ is at most:
		\[  
		1+ \frac{1}{2}  \frac{|\widehat E|}{|E|} + \delta\frac{|\widetilde E|}{|E|}= 1 + \frac{1}{2} \frac{n}{|E|} + \delta \left(1-\frac{n}{|E|}\right)< 1+ \frac{\gamma}{180}+ \delta.
		\]
		Moreover, the sender's expected rewards when the initial state is $s_1$ is at most:
		\begin{align*}
		   \delta \frac{|\widetilde V|}{|V|} + \frac{1}{2} \left(\frac{|V|-|\widetilde V|}{|V|}\right)&\le \delta \frac{(1+\gamma)k-n}{|V|} + \frac{1}{2} \left(1 -\frac{(1+\gamma)k-n}{|V|}\right)\\
		   &= \frac{1}{2}- \left(\frac{1}{2}-\delta\right) \frac{(1+\gamma)k}{|V|} + \frac{1}{2} \frac{n}{|V|}-\frac{\delta n}{|V|}\\
		   &< \frac{1}{2}- \left(\frac{1}{2}-\delta\right) \frac{(1+\gamma)k}{|V|} + \frac{1}{2} \frac{n}{|V|}\\
		   & \le  \frac{1}{2}- \left(\frac{1}{2}-\delta\right) \frac{(1+\gamma)k}{|V|} + \frac{1}{2} \frac{\gamma |E|}{90|V|}\\
		   & \le  \frac{1}{2}- \left(\frac{1}{2}-\delta\right) \frac{(1+\gamma)k}{|V|} + \frac{\gamma}{60},
		 \end{align*}
		where we used that $|\widetilde V|\ge (1+\gamma)k-n$, $n\le \gamma|E|/90$ and that $|E|/|V|\le3$.
		
		Thus, since we start from $s_0$ or $s_1$ with probability $1/2$, the sender's expected rewards is at most:
			\begin{align*}
		\frac{1}{2} \left(1+ \frac{\gamma}{180}+ \delta + \frac{1}{2}- \left(\frac{1}{2}-\delta\right) \frac{(1+\gamma)k}{|V|} + \frac{\gamma}{60} \right)& < \frac{3}{4} - \frac{k}{4|V|} + \frac{\gamma}{30}+ \frac{\delta}{2} \left( 1+ \frac{(1+\gamma)k}{|V|}\right) - \frac{\gamma k}{2|V|}   \\
		& \le \frac{3}{4} - \frac{k}{4|V|} + \frac{\gamma}{30}+ \frac{\delta}{2} \left( 1+ \frac{3k}{|V|}\right)- \frac{\gamma k}{2|V|}\\  
		& \le \frac{3}{4} - \frac{k}{4|V|} + \frac{\gamma}{30}+ \frac{\delta}{2} \left( 1+ 3\right)- \frac{\gamma k}{2|V|}\\  
		& \le \frac{3}{4} - \frac{k}{4|V|} + \frac{\gamma}{30}+ \frac{\gamma}{50} - \frac{\gamma k}{2|V|}\\  
		& \le \frac{3}{4} - \frac{k}{4|V|} + \gamma \frac{8}{150} - \frac{\gamma k}{2|V|}\\  
		& \le \frac{3}{4} - \frac{k}{4|V|} + \gamma \frac{8}{150} - \frac{\gamma }{12}\\ 
		& \le \frac{3}{4} - \frac{k}{4|V|} - \frac{3}{100} \gamma\\  
		& \le \frac{3}{4} - \frac{k}{4|V|} - \frac{3}{100} \gamma \left(\frac{3}{4} - \frac{k}{4|V|}\right)\\ 
		&\le \left(1-\frac{\gamma}{100}\right) \left(\frac{3}{4} - \frac{k}{4|V|}\right),
		\end{align*}
		where we used that $k\ge |V|/6$ and $\delta:=\gamma/100$.
		
		This concludes the proof.
		
\end{proof}

\section{Proofs omitted from Section~\ref{sec:promise}}\label{sec:app_promise}

\lemmauno*

\begin{proof}
	%
	%
	The statement can be easily proved by induction on the steps $\Hi$.

	The base case of the induction is $h=H$.
	By definition of $V_H^{\rec,\phi^\sigma}(a,\tau)$, for every action $a \in \A$ and history $\tau = (s_1,a_1,\ldots,s_{H-1}, a_{H-1}, s_H) \in \T_H$, the following holds:
	\[
	V^{\rec,\phi^\sigma}_H(a,\tau) = \sum\limits_{\theta\in\Theta}\mu_H(\theta| s_H)\phi^\sigma_{\tau}(a | \theta)r^\rec_H(s_H,a, \theta).
	\]
	Since by construction $\phi_\tau^\sigma(\theta)=\varphi_H(s_H,\iota_\tau^\sigma,\theta)$ for every $\theta \in \Theta$, we have that:
	\begin{align*}
		V^{\rec,\phi^\sigma}_H(a,\tau)&= \sum_{\theta\in\Theta}\mu_H(\theta| s_H)\phi^\sigma_{\tau}(a| \theta)r^\rec_H(s_H, a,\theta)\\
		&=\sum_{\theta\in\Theta}\mu_H(\theta| s_H)\varphi_{h}(a|s_H,\iota_\tau^\sigma, \theta)r^\rec_H(s_H, a, \theta)\\
		&=\V^{\rec,\sigma}_H(a, s_H,\iota_\tau^\sigma).
	\end{align*}

	Now, take $h<H$ and assume that the statement holds for $h+1$.
	By definition of $V^{\rec,\phi^\sigma}_h(a,\tau)$, for every $a \in \A$ and $\tau = (s_1,a_1,\ldots,s_{h-1}, a_{h-1}, s_h) \in \T_h$, it holds:
	%
	\[
	V^{\rec,\phi^\sigma}_h(a,\tau) = \sum\limits_{\theta\in\Theta}\mu_h(\theta| s_h)\phi^\sigma_{\tau}(a| \theta)\left(r^\rec_h(s_h,a, \theta)+\sum\limits_{s^\prime\in\Si}p_{h}(s^\prime|s_h,a,\theta)V_{h+1}^{\rec,\phi^\sigma}(\tau\oplus (a, s^\prime))\right).
	\]
	Moreover, for every $s' \in \Si$, the relation $V_{h+1}^{\rec,\phi^\sigma}(\tau\oplus (a, s^\prime)) = \V_{h+1}^{\rec,\sigma}(s^\prime, \iota^\sigma_{\tau\oplus (a, s^\prime)})$ holds by induction since $\tau\oplus (a, s^\prime)$ belongs to $\T_{h+1}$.
	Given how Algorithm~\ref{alg:convert_promise_to_hist} is designed, it holds:
	\[
		\iota_{\tau\oplus(a,s')}^\sigma = g_h(s_h,a,\iota_\tau^\sigma,s'),
	\]
	which, together with $\phi_\tau^\sigma(\theta)=\varphi_h(s_h,\iota_\tau^\sigma,\theta)$ for every $\theta \in \Theta$, gives $V^{\rec,\phi^\sigma}_h(a,\tau)=\V_h^{\rec,\sigma}(a,s_h,\iota_\tau^\sigma)$.
	%
	%
	%
	
	Similar inductive arguments show that $V_h^{\s,\phi^\sigma}(\tau)=\V_h^{\s,\sigma}(s_h,\iota_\tau^\sigma)$ for every step $h \in \Hi$ and history $\tau = (s_1,a_1,\ldots,s_{h-1}, a_{h-1}, s_h) \in \T_h$ up to $h$, concluding the proof. 
\end{proof}

\lemmadue*
\begin{proof}
	We prove the statement by induction.

	The base case of the induction is $h=H$.
	Since $\sigma$ is $\eta$-honest, for every $s \in \Si$ and $\iota \in I_H(s)$:
	\[
	\V_H^{\rec,\sigma}(s,\iota)=\sum\limits_{a\in\A}\sum_{\theta\in\Theta}\mu_H(\theta|s)\varphi_H(a|s,\iota,\theta)r_H^\rec(s,a,\theta)\ge \iota-\eta.
	\]
	%
	Now, take $h < H$ and assume that the statement holds for $h+1$.
	For every $s \in \Si$ and $\iota \in I_h(s)$:
	\begin{align*}
		\V&_h^{\rec,\sigma}(s,\iota) \nonumber \\
		&= \sum\limits_{a\in\A}\sum_{\theta \in \Theta} \mu_h(\theta|s) \varphi_h(a|s,\iota,\theta) \left( r_h^\rec (s,a,\theta) + \sum_{s' \in \Si} p_h(s' | s,a,\theta) \V_{h+1}^{\rec, \sigma}(s', g_h(s,a,\iota, s'))\right)\\
		&\ge \sum\limits_{a\in\A}\sum_{\theta \in \Theta} \mu_h(\theta|s) \varphi_h(a|s,\iota,\theta) \left( r_h^\rec (s,a,\theta) + \sum_{s' \in \Si} p_h(s' | s,a,\theta)(g_h(s,a,\iota,s')-\eta(H-h))\right)\\
		&\ge \iota-\eta-\eta(H-h)\sum_{a\in\A}\sum_{\theta\in\Theta}\mu_h(\theta|s)\varphi_h(a|s,\iota,\theta)\sum_{s'\in\Si}p_h(s'|s,a,\theta)\\
		&= \iota-\eta(H-h+1),
	\end{align*}
	where the first inequality holds by induction, the second one by $\eta$-honesty, while the last equality holds since $\sum_{\theta\in\Theta}\mu_h(\theta|s)\varphi_h(a|s,\iota,\theta)\sum_{s'\in\Si}p_h(s'|s,a,\theta)=1$.
	This concludes the proof.
\end{proof}

\lemmatre*

\begin{proof}
	We recall that, for $\epsilon \geq 0$, the promise-form signaling scheme $\sigma$ is $\epsilon$-persuasive if the history-dependent signaling scheme $\phi^\sigma := \{  \phi_{\tau}^\sigma \}_{\tau \in \T}$ induced by $\sigma$ is $\epsilon$-persuasive according to Definition~\ref{def:eps_persuasive}.

	As a consequence, in order to prove the statement, we have to show that, for every step $h \in \Hi$, history $\tau = (s_1,a_1,\ldots,s_{h-1},a_{h-1},s_h) \in \T_h$ up to step $h$, and pair of actions $a, a' \in \A$, it holds:
	\[
	V_h^{\rec,\phi^\sigma}(a,\tau)\ge \sum_{\theta\in\Theta}\mu_h(\theta|s_h) \phi_\tau^\sigma(a|\theta)\left(r_h^\rec(s_h, a', \theta)+\sum\limits_{s'\in\Si}p_h(s'|s_h,a',\theta)\widehat V_{h+1}^\rec(s')\right)-\epsilon.
	\]
	By Lemma~\ref{lem:correctness_promise}, we have that $V_h^{\rec,\phi^\sigma}(a, \tau)=\V^{\rec,\sigma}_h(a, s_h,\iota_\tau^\sigma)$.
	Moreover, the following holds:
	\begin{align*}
		\V_h^{\rec,\sigma}(a|s_h,\iota_\tau^\sigma)
		&=\sum\limits_{\theta\in\Theta}\mu_h(\theta|s_h)\varphi_h(a|s_h,\iota_\tau^\sigma,\theta)\left(r_h^\rec(s_h,a,\theta)+\sum\limits_{s' \in \Si}p_h(s'|s_h,a,\theta)\V_{h+1}^{\rec,\sigma}(s', g_h(s_h,a,\iota_\tau^\sigma,s'))\right)\\
		&\ge\sum\limits_{\theta\in\Theta}\mu_h(\theta|s_h)\varphi_h(a|s_h,\iota_\tau^\sigma,\theta) \hspace{-0.5mm}  \left( \hspace{-0.5mm}  r_h^\rec(s_h,a,\theta)+ \hspace{-0.5mm}  \sum\limits_{s' \in \Si}p_h(s'|s_h,a,\theta)\left(g_h(s_h,a,\iota_\tau^\sigma,s') \hspace{-0.5mm}  -\eta(H \hspace{-0.5mm} -\hspace{-0.5mm}  h)\right) \hspace{-1mm} \right)\\ 
		&\ge  \sum_{\theta \in \Theta} \mu_h(\theta|s) \varphi_h(a|s_h,\iota_\tau^\sigma,\theta) \left( r_h^\rec (s_h,a',\theta) + \sum_{s' \in \Si} p_h(s' | s_h,a',\theta)  \widehat V_{h+1}^{\rec}(s') \right)-\eta(H-h)
	\end{align*}
	where the first inequality holds by Lemma~\ref{lem:mantaining_promises}, while the second one holds thanks to Equation~\ref{eq:pers_compact_2} and the fact that $\sum_{\theta\in\Theta}\mu_h(\theta|s)\varphi_h(a|s,\iota,\theta)\sum_{s'\in\Si}p_h(s'|s,a,\theta)=1$.
	%
	%
	Finally, the statement is readily proved by noticing that $\eta(H-h)\le\eta H$.
\end{proof}

\theoremuno*

\begin{proof}
	The proof works by showing that, given any persuasive history-dependent signaling scheme, one can always build an $\eta$-honest and persuasive promise-form signaling scheme whose sender's expected reward is at least as good.
	This, together with Lemma~\ref{lem:correctness_promise} clearly proves the statement.
	
	Given any persuasive history-dependent signaling scheme $\phi=\{\phi_\tau\}_{\tau\in \T}$, we build a promise-form signaling scheme $\sigma=\{(I_h,\varphi_h, g_h)\}_{h\in\Hi}$ as follows:
	\begin{itemize}[leftmargin=8mm]
		\item $I_h(s):= \left\{V_h^{\rec,\phi}(\tau) \mid \tau=(s_1,a_1, \ldots, s_{h-1},a_{h-1}, s_h)\in\T_h \wedge s_h=s \right\}$ for all $h > 1, s\in\Si$.
		\item $\varphi_h(s,\iota,\theta)=\phi_{\tau^\star_{h,s,\iota}}(\theta)$ for all $h \in \Hi$, $s\in\Si$, $\iota\in I_h(s)$, and $\theta\in\Theta$, where:
		\[
		\tau^\star_{h,s,\iota}\in\argmax_{\substack{\tau = (s_1,a_1, \ldots, s_{h-1}, a_{h-1}, s_h) \in \T_h : \\  s_h = s \wedge V_h^{\rec, \phi} (\tau)=\iota }}\left\{V_h^{\s,\phi}(\tau)\right\},
		\]
		which is guaranteed to exist given how $I_h(s)$ is defined.
		\item $g_h(s,a,\iota,s')=V_{h+1}^{\rec,\phi}(\tau_{h,s,\iota}^\star\oplus(a,s'))$ for all $h \in \Hi$, $s\in\Si$, $a \in \A$, $\iota\in I_h(s)$, and $s' \in \Si$.
	\end{itemize}

	As a first step, we prove that, if $\phi$ is persuasive, then the promise-form signaling scheme $\sigma$ that we have just built is persuasive as well.
	This can be easily proved by exploiting Lemma~\ref{lem:cinsistent_arepersuasive}.

	First, we prove that $\sigma$ is an $\eta$-honest for $\eta=0$.
	Formally, for every $h \in \Hi$, $s \in \Si$, and $\iota \in I_h(s)$:
	\begin{align}
		\iota&=V_h^{\rec,\phi}(\tau_{h,s,\iota}^\star)\label{eq:th1_1}\\
		&=\sum\limits_{a\in\A}\sum\limits_{\theta\in\Theta}\mu_h(\theta|s)\phi_{\tau_{h,s,\iota}^\star}(a|\theta)\left(r_h^\rec(s,a,\theta)+\sum\limits_{s'\in\Si}p_h(s'|s,a,\theta)V_{h+1}^{\rec,\phi}(\tau_{h,s,\iota}^\star\oplus(a,s'))\right)\label{eq:th1_2}\\
		&=\sum\limits_{a\in\A}\sum\limits_{\theta\in\Theta}\mu_h(\theta|s)\phi_{\tau_{h,s,\iota}^\star}(a|\theta)\left(r_h^\rec(s,a,\theta)+\sum\limits_{s'\in\Si}p_h(s'|s,a,\theta)g_h(s,a,\iota,s')\right)\label{eq:th1_3}\\
		&=\sum\limits_{a\in\A}\sum\limits_{\theta\in\Theta}\mu_h(\theta|s)\varphi_h(a|s,\iota,\theta)\left(r_h^\rec(s,a,\theta)+\sum\limits_{s'\in\Si}p_h(s'|s,a,\theta)g_h(s,a,\iota,s')\right)\label{eq:th1_4},
	\end{align}
	where Equation~\eqref{eq:th1_1} holds by definition of $\tau_{h,s,\iota}^\star$, Equation~\eqref{eq:th1_2} holds by definition of $V_h^{\rec,\phi}(\tau_{h,s,\iota}^\star)$, Equation~\eqref{eq:th1_3} holds by definition of $g_h(s,a,\iota,s')$, while Equation~\eqref{eq:th1_4} holds by definition of $\varphi_h(a|s,\iota,\theta)$.
	This proves that $\sigma$ is an $\eta$-honest promise-form signaling scheme, for $\eta = 0$.
	
	In order to apply Lemma~\ref{lem:cinsistent_arepersuasive}, we also need to to prove that $\sigma$ satisfies the conditions in Equation~\eqref{eq:pers_compact_2}.
	By applying definitions, it is easy to check that, for every $h \in \Hi$, $s \in \Si$, $\iota \in I_h(s)$, and $a \in \A$:
	\begin{align}
		V_h^{\rec,\phi}&(a, \tau_{h,s,\iota}^\star)
		=\sum\limits_{\theta\in\Theta}\mu_h(\theta|s)\phi_{\tau_{h,s,\iota}^\star}(a|\theta)\left(r_h^\rec(s,a,\theta)+\sum\limits_{s'\in\Si}p_h(s'|s,a,\theta)V_{h+1}^{\rec,\phi}(\tau_{h,s,\iota}^\star\oplus(a,s'))\right) \nonumber\\
		&=\sum\limits_{\theta\in\Theta}\mu_h(\theta|s)\phi_{\tau_{h,s,\iota}^\star}(a|\theta)\left(r_h^\rec(s,a,\theta)+\sum\limits_{s'\in\Si}p_h(s'|s,a,\theta)g_h(s,a,\iota,s')\right) \nonumber\\
		&=\sum\limits_{\theta\in\Theta}\mu_h(\theta|s)\varphi_h(a|s,\iota,\theta)\left(r_h^\rec(s,a,\theta)+\sum\limits_{s'\in\Si}p_h(s'|s,a,\theta)g_h(s,a,\iota,s')\right).\label{eq:th1_5}
	\end{align}
	Moreover, since $\phi$ is persuasive we have that, for every $h \in \Hi$, $s \in \Si$, $\iota \in I_h(s)$, and $a, a' \in \A$:
	\begin{align}
		V_h^{\rec,\phi}(a, \tau_{h,s,\iota}^\star)&
		=\sum\limits_{\theta\in\Theta}\mu_h(\theta|s)\phi_{\tau_{h,s,\iota}^\star}(a|\theta)\left(r_h^\rec(s,a,\theta)+\sum\limits_{s'\in\Si}p_h(s'|s,a,\theta)V_{h+1}^{\rec,\phi}(\tau_{h,s,\iota}^\star\oplus(a,s'))\right) \nonumber \\
		&\ge\sum\limits_{\theta\in\Theta}\mu_h(\theta|s)\phi_{\tau_{h,s,\iota}^\star}(a|\theta)\left(r_h^\rec(s,a',\theta)+\sum\limits_{s'\in\Si}p_h(s'|s,a',\theta)\widehat V_{h+1}^{\rec,\phi}(s')\right) \nonumber\\
		&=\sum\limits_{\theta\in\Theta}\mu_h(\theta|s)\varphi_h(a|s,\iota,\theta)\left(r_h^\rec(s,a',\theta)+\sum\limits_{s'\in\Si}p_h(s'|s,a',\theta)\widehat V_{h+1}^{\rec,\phi}(s')\right),\label{eq:th1_6}
	\end{align}
	where the last equality holds by definition of $\varphi_h(a|s,\iota,\theta)$.
	Then, by combining Equation~\eqref{eq:th1_5} and Equation~\eqref{eq:th1_6}, we get that, for every $h \in \Hi$, $s \in \Si$, $\iota \in I_h(s)$, and $a,a' \in \A$:
	\begin{align*}
		\sum\limits_{\theta\in\Theta}\mu_h(\theta|s)\varphi_h(a|s,\iota,\theta)\left(r_h^\rec(s,a,\theta)+\sum\limits_{s'\in\Si}p_h(s'|s,a,\theta)g_h(s,a,\iota,s')\right)\ge\\
		\sum\limits_{\theta\in\Theta}\mu_h(\theta|s)\varphi_h(a|s,\iota,\theta)\left(r_h^\rec(s,a',\theta)+\sum\limits_{s'\in\Si}p_h(s'|s,a',\theta)\widehat V_{h+1}^{\rec,\phi}(s')\right),
	\end{align*}
	which means that $\sigma$ satisfies Equation~\eqref{eq:pers_compact_2}.
	Thus, by Lemma~\ref{lem:cinsistent_arepersuasive} we can conclude that $\sigma$ is persuasive.

	In order to conclude the proof, it remains to show that $\sigma$ achieves a sender's expected reward at least as good as that obtained by $\phi$.
	Formally, we prove that $V^{\s,\phi^\sigma}_h(\tau)\ge V_h^{\s,\phi}(\tau)$ for every step $h \in \Hi$ and history $\tau\in\T_h$ up to $h$.
	We will prove such a result by induction.
	
	
	The base case of the induction is $h=H$.
	For every $\tau=(s_1,a_1,\ldots,s_{H-1}, a_{H-1}, s_H) \in \T_H$:
	\begin{align*}
		V_H^{\s,\phi}(\tau)&\le V_H^{\s,\phi}(\tau^\star_{H,s_H,\iota_\tau^\sigma})\\
		&=\sum_{a \in \A} \sum_{\theta\in\Theta}\mu_H(\theta |s_H)\phi_{\tau^\star_{H,s_H,\iota_\tau^\sigma}} (a | \theta) r_H^\s(s_H,a,\theta)\\
		&=\sum_{a \in \A}  \sum_{\theta\in\Theta}\mu_H(\theta | s_H)\varphi_H(a|s_H,\iota_\tau^\sigma,\theta) r_H^\s(s_H,a,\theta)\\
		&=\V_H^{\s,\sigma}(s_H,\iota_\tau^\sigma)\\
		&=V_H^{\s,\phi^\sigma}(\tau),
	\end{align*}
	where the last equality holds by Lemma~\ref{lem:correctness_promise}.
	
	Now, let us take $h < H$ and assume that the statement that we want to prove holds for $h+1$.
	Then, for every $\tau=(s_1,a_1,\ldots,s_{h-1}, a_{h-1}, s_h) \in \T_h$, it holds:
	\begin{align}
		V&_h^{\s,\phi}(\tau)\le V_h^{\s,\phi}(\tau^\star_{h,s_h,\iota_\tau^\sigma})\label{eq:th1_7}\\
		&=\sum_{a \in \A} \sum\limits_{\theta\in\Theta}\mu_h(\theta|s_h)\phi_{\tau_{h,s_h,\iota_\tau^\sigma}^\star}(a|\theta)\left(r_h^\s(s,a,\theta)+\sum\limits_{s'\in\Si}p_h(s'|s,a,\theta)V_{h+1}^{\s,\phi}(\tau_{h,s_h,\iota_\tau^\sigma}^\star\oplus(a,s'))\right)\nonumber\\
		&=\sum_{a \in \A} \sum\limits_{\theta\in\Theta}\mu_h(\theta|s_h)\varphi_h(a|s_h,\iota_\tau^\sigma,\theta)\left(r_h^\s(s_h,a,\theta)+\sum\limits_{s'\in\Si}p_h(s'|s_h,a,\theta)V_{h+1}^{\s,\phi}(\tau_{h,s_h,\iota_\tau^\sigma}^\star\oplus(a,s'))\right)\label{eq:th1_9}\\
		&\le\sum_{a \in \A} \sum\limits_{\theta\in\Theta}\mu_h(\theta|s_h)\varphi_h(a|s_h,\iota_\tau^\sigma,\theta)\left(r_h^\s(s_h,a,\theta)+\sum\limits_{s'\in\Si}p_h(s'|s_h,a,\theta)V_{h+1}^{\s,\phi}(\tau_{h+1,s',g_h(s_h,a,\iota_\tau^\sigma,s')}^\star)\right)\label{eq:th1_10}\\
		&\le \sum_{a \in \A} \sum\limits_{\theta\in\Theta}\mu_h(\theta|s_h)\varphi_h(a|s_h,\iota_\tau^\sigma,\theta)\left(r_h^\s(s_h,a,\theta)+\sum\limits_{s'\in\Si}p_h(s'|s_h,a,\theta)V_{h+1}^{\s,\phi^\sigma}(\tau_{h+1,s',g_h(s_h,a,\iota_\tau^\sigma,s')}^\star)\right)\label{eq:th1_12}\\
		&\le \sum_{a \in \A} \sum\limits_{\theta\in\Theta}\mu_h(\theta|s_h)\varphi_h(a|s_h,\iota_\tau^\sigma,\theta)\left(r_h^\s(s_h,a,\theta)+\sum\limits_{s'\in\Si}p_h(s'|s_h,a,\theta)\V_{h+1}^{\s,\sigma}(s', g_h(s_h,a,\iota_\tau^\sigma,s'))\right)\label{eq:th1_13}\\
		&=\V_h^{\s,\sigma}(s_h,\iota_\tau^\sigma)\nonumber\\
		&=V_h^{\s,\phi^\sigma}(\tau)\label{eq:th1_14}.
	\end{align}
	where Equation~\eqref{eq:th1_7} holds by definition of $\tau^\star_{h,s_h,\iota_\tau^\sigma}$ for the promise $\iota_\tau^\sigma \in I_h(s_h)$, Equation~\eqref{eq:th1_9} holds by definition of $\varphi_h(s_h,\iota_\tau^\sigma,\theta)$, Equation~\eqref{eq:th1_10} holds by definition of $\tau^\star_{h+1,s',g_h(s_h,a,\iota_\tau^\sigma,s')}$ together with the fact that $g_h(s_h,a,\iota_\tau^\sigma,s')=V^{\rec,\phi}_{h+1}(\tau^\star_{h,s_h,\iota_\tau^\sigma}\oplus(a,s'))$, Equation~\eqref{eq:th1_12} holds by induction, while Lemma~\ref{lem:correctness_promise} proves Equation~\eqref{eq:th1_13} and Equation~\eqref{eq:th1_14}.
	
	The proof is completed by applying the definition of $\OPT$.
	%
\end{proof}
\section{Proofs omitted from Section~\ref{sec:DP}}\label{sec:app_dp}

\lemmaapproxoracletabledue*
\begin{proof}
	Let $\sigma=\{(I_h, \varphi_h, g_h)\}_{h\in\Hi}$ be the promise-form signaling scheme returned by Algorithm~\ref{alg:DP} instantiated with an approximate oracle $\Ocal_{h,s,\iota}$ as in Definition~\ref{def:approxoracle},
	and let $(\kappa,q,v)\gets \Ocal_{h,s,\iota-\delta}(M_{h+1}^\delta)$.
	We prove the statement by induction.
	The base case for $h=H$ of the induction holds trivially.
	Then, for every $h < H$, $s \in \Si$, and $\iota \in I_h(s)$, assuming the statement holds for $h+1$ we have that:
	\begin{align*}
		\V_h^{\s, \sigma}(s, \iota)
		& = \sum_{a \in \A}\sum_{\theta \in \Theta} \mu_h(\theta|s) \varphi_h(a|s,\iota,\theta)  \left( r_h^\s (s,a,\theta) + \sum_{s' \in \Si} p_h(s' | s,a,\theta) \V_{h+1}^{\s, \sigma}(s', g_h(s,a,\iota,s'))\right) \\
		&\ge \sum_{a \in \A}\sum_{\theta \in \Theta} \mu_h(\theta|s) \varphi_h(a|s,\iota,\theta) \left( r_h^\s (s,a,\theta) + \sum_{s' \in \Si} p_h(s' | s,a,\theta) M_{h+1}^{\delta}(s', g_h(s,a,\iota,s'))\right)\\
		&= \sum_{a \in \A}\sum_{\theta \in \Theta} \mu_h(\theta|s)\kappa(a|\theta) \left( r_h^\s (s,a,\theta) + \sum_{s' \in \Si} p_h(s' | s,a,\theta) M_{h+1}^{\delta}(s', q(a,s')))\right)\\
		&=F_{h,s,M_{h+1}^\delta}(\kappa,q)\\
		&\ge v = M_h^\delta(s,\iota)
	\end{align*}
	where the first inequality holds by the the inductive assumption, while the second inequality holds thanks to Definition~\ref{def:approxoracle}.
	%
\end{proof}

\lemmaapproxoracletable*
\begin{proof}

	We prove by induction that $M^\delta_h(s,\lceil\iota\rceil_\delta)\ge\V_h^{\sigma^\star}(s,\iota)$ for all $h\in\Hi, s\in\Si$,and $\iota\in I^\star_h(s)$.
	
	As a base for the induction, we consider $h=H$.
	Notice that $M_H(s,\lceil\iota\rceil_\delta)$ is an upper bound on the solution of $\Pical_{H,s,\lfloor\iota\rfloor_\delta}(M_{H+1}^\delta)$ (this holds by Definition~\ref{def:approxoracle}). Moreover, $\hat\kappa(a|\theta)=\varphi^\star_H(a|s,\iota,\theta)$ and $ \hat q(a,s')=0$ is a feasible solution to $\Pical_{H,s,\lfloor\iota\rfloor_\delta}(M_{H+1}^\delta)$ and so $(\hat\kappa, \hat q)\in\Psi^{h,s}_{\lfloor\iota\rfloor_\delta}\subseteq \Psi^{h,s}_{\iota-\delta}$. This readily implies that $M^\delta_H(s,\lceil\iota\rceil_\delta)\ge \V_H^{\s,\sigma^\star}(s,\iota)$, since:
	\begin{align*}
		M_H^{\delta}(s,\lceil\iota\rceil_\delta)&=v\\
		&\ge\Pi_{h,s,\iota-\delta}(M_{H+1}^\delta)\\
		&= \max\limits_{(\kappa,q)\in \Psi_{\iota-\delta}}F_{h,s,{M_{H+1}^\delta}}(\kappa, q)\\
		&\ge F_{h,s,M_{H+1}^\delta}(\hat \kappa, \hat q) = \V_H^{\sigma^\star}(s,\iota).
	\end{align*}
	
	As for the inductive step, consider any $h<H$. First, we show that $\hat\kappa(a|\theta)=\varphi^\star_h(a|s,\iota,\theta)$ and $\hat q(a,s')=\lceil g_h^\star(s,a,\iota,s')\rceil_\delta$ is a feasible solution to $\Pical_{h,s,\lfloor\iota\rfloor_\delta}(M_{h+1}^\delta)$.
	Then, we show that it gets a utility larger than $\V_h^{\s,\sigma^\star}(s,\iota)$.
	
	First we prove feasibility of $(\hat\kappa,\hat q)$. Consider first the constraint of Equation~\eqref{eq:constrint_Pi1}.
	\begin{align*}
		\sum\limits_{a\in\A} & \sum\limits_{\theta\in\Theta} \mu_h(\theta|s)\hat\kappa(a|\theta)\left(r_h^\rec(s,a,\theta)+\sum\limits_{s'\in\Si}p_h(s'|s,a,\theta)\hat q(a,s')\right)\\
		&=\sum\limits_{a\in\A}\sum\limits_{\theta\in\Theta} \mu_h(\theta|s)\varphi^\star_h(a|s,\iota,\theta)\left(r_h^\rec(s,a,\theta)+\sum\limits_{s'\in\Si}p_h(s'|s,a,\theta)\lceil g_h^\star(s,a,\iota,s')\rceil_\delta\right)\\
		&\ge \sum\limits_{a\in\A}\sum\limits_{\theta\in\Theta} \mu_h(\theta|s)\varphi^\star_h(a|s,\iota,\theta)\left(r_h^\rec(s,a,\theta)+\sum\limits_{s'\in\Si}p_h(s'|s,a,\theta) g_h^\star(s,a,\iota,s')\right)\\
		&\ge \iota \ge \lfloor\iota\rfloor_\delta,
	\end{align*}
	which proves that $(\hat\kappa,\hat q)$ satisfies the constraint of Equation~\eqref{eq:constrint_Pi1}.
	
	Now, let us turn our attention to the constraint of Equation~\eqref{eq:constrint_Pi2}:
	\begin{align*}
		\sum\limits_{\theta\in\Theta} \mu_h(\theta|s)\hat\kappa(a|\theta) & \left(r_h^\rec(s,a,\theta)+\sum\limits_{s'\in\Si}p_h(s'|s,a,\theta)\hat q(a,s')\right)\\
		&=\sum\limits_{\theta\in\Theta} \mu_h(\theta|s)\varphi^\star_h(a|s,\iota,\theta)\left(r_h^\rec(s,a,\theta)+\sum\limits_{s'\in\Si}p_h(s'|s,a,\theta)\lceil g_h^\star(s,a,\iota,s')\rceil_\delta\right)\\
		&\ge\sum\limits_{\theta\in\Theta} \mu_h(\theta|s)\varphi^\star_h(a|s,\iota,\theta)\left(r_h^\rec(s,a,\theta)+\sum\limits_{s'\in\Si}p_h(s'|s,a,\theta) g_h^\star(s,a,\iota,s')\right)\\
		&\ge \sum\limits_{\theta\in\Theta} \mu_h(\theta|s)\varphi^\star_h(a|s,\iota,\theta)\left(r_h^\rec(s,a,\theta)+\sum\limits_{s'\in\Si}p_h(s'|s,a,\theta) \widehat V_{h+1}^{\rec}(s')\right),
	\end{align*}
	which proves $(\hat\kappa,\hat q)$ satisfies the constraint of Equation~\eqref{eq:constrint_Pi2} and thus $\Pical_{h,s,\lfloor\iota\rfloor_\delta}(M_{h+1}^\delta)$ is feasible, \emph{i.e.}, $(\hat\kappa, \hat q)\in\Psi_{\lfloor\iota\rfloor_\delta}\subset \Psi_{\iota-\delta}$.
	
	Now, we prove that it gets a larger utility than $\sigma^\star$.
	\begin{align*}
		M_h(s,\lceil\iota\rceil_\delta)
		& =v\\
		&\ge \Pi_{H,s,\iota-\delta}(M_{h+1}^\delta)\\
		&=\max\limits_{(\kappa,q)\in\Psi_{\iota-\delta}} F_{h,s,M_{h+1}^\delta(\kappa,q)}\\
		&\ge F_{h,s,M_{h+1}^\delta}(\hat\kappa,\hat q)\\
		&= \sum\limits_{\theta\in\Theta}\sum\limits_{a\in\A} \mu_h(\theta| s) \hat\kappa(a|\theta)\left(r_h^\s(s,a,\theta)+\sum\limits_{s'\in\Si}p_h(s'|s,a,\theta)M_{h+1}^\delta(s', \hat q(a,s'))\right)\\
		&=\sum\limits_{\theta\in\Theta}\sum\limits_{a\in\A} \mu_h(\theta| s) \varphi^\star_{h}(a|\theta,\iota,\theta)\left(r_h^\s(s,a,\theta)+\sum\limits_{s'\in\Si}p_h(s'|s,a,\theta)M_{h+1}^\delta(s', \lceil g_h^\star(s,a,\iota,s')\rceil_\delta)\right)\\
		&\ge \sum\limits_{\theta\in\Theta}\sum\limits_{a\in\A} \mu_h(\theta| s) \varphi^\star_{h}(a|\theta,\iota,\theta)\left(r_h^\s(s,a,\theta)+\sum\limits_{s'\in\Si}p_h(s'|s,a,\theta)\V_{h+1}^{s,\sigma^\star}(s', g_h^\star(s,a,\iota,s'))\right)\\
		&=\V^{\s,\sigma^\star}(s',\iota)
	\end{align*}
	
	where we used the induction assumption in the second inequality.
	%
	
	In conclusion, for every $\iota \in I^\star_h(s)$, it holds $\V_h^{\sigma^\star}(s,\iota) \geq 0$, and, thus, $M_h^\delta(s,\lceil\iota\rceil_\delta)   \geq 0$, implying that $\Pical_{h,s, \iota-\delta}$ is feasible and thus $\iota \in I_h(s)$ 
\end{proof}

\theoremapproxoracle*

\begin{proof}
	
	First of all we need to prove that the $\sigma$ returned by Algorithm~\ref{alg:DP} is indeed a promise-form signaling scheme. All the properties are trivially satisfied except for i) $g_h (s, a, \iota, s') \in I_{h + 1}(s')$ for each $s\in\Si,a\in \A,s'\in\Si$, and $\iota \in I_{h}(s)$, and ii) $0 \in I_{1}(s)$ for each $s$.
	
	We start proving the first property.
	Let  $s\in\Si,a\in \A,s'\in\Si$, and $\iota \in I_{h}(s)$.
	Since $\iota \in  I_{h}(s)$, by the definition of Algorithm~\ref{alg:DP} it holds that $v >-\infty$, where $(\kappa,q,v)\gets \Ocal_{h, s, \iota-\delta}(M_{h+1}^\delta)$ is the solution returned by the call to the oracle.
	It is easy to see that this implies that $M_{h+1}^\delta(s',q(a,s'))>-\infty$.
	We consider two cases.
	\begin{itemize}
	\item If $h=H$, then we have by construction that $M_{h+1}^\delta(s',q(a,s'))>-\infty$ if and only if $q(a,s')=0$. Thus,  $g_h (s, a, \iota, s')=q(a,s')=0 \in I_{H+1}(s')$.
	\item If $h<H$, then we $M_{h+1}^\delta(s',q(a,s'))>-\infty$ implying that $v'>-\infty$, where $(\kappa',q',v')\gets \Ocal_{h+1, s', q(a,s')-\delta}(M_{h+2}^\delta)$ is the solution returned by the call to the oracle at step $h+1$.  By the definition of Algorithm~\ref{alg:DP}, this implies that $q(a,s') \in I_{h+1}(s')$. Thus,  $g_h (s, a, \iota, s')=q(a,s') \in I_{H+1}(s')$.
	\end{itemize}

%
%
	Now, we prove the second property.
	We only need to prove that for all $s\in\Si$ we have $0\in I_1(s)$ but this is easily proved by Lemma~\ref{lem:lemmaapproxoracletable} instantiated for $h=1$ and $\iota=0$. Indeed we have that $M_1^\delta(s,0)\ge\V_h^{\s,\sigma^\star}(s,0)$, where we recall that  $M_1^\delta(s,0)=v$ and $(\kappa,q,v)\gets \Ocal_{h, s, -\delta}(M_{2}^\delta)$. By the definition of Algorithm~\ref{alg:DP} this implies that $0\in I_1(s)$.

	Then, we prove the persuasiveness. By Lemma~\ref{lem:cinsistent_arepersuasive} we only need to prove that $\sigma:=\{(I_h,\varphi_h,g_h)\}_{h\in\Hi}$ satisfies the constraint of Equation~\ref{eq:pers_compact_2} and that are $\eta$-honest for some $\eta$.
	This holds trivially for $\eta=2\delta$ as for all $h\in\Hi, s\in\Si$ and $\iota\in I_h(s)$ we have by construction that for all $h\in\Hi, s\in\Si$ and $\iota\in I_h(s)$, and $(\kappa,q,v)\gets \Ocal_{h,s,\iota-\delta}(M_{h+1}^\delta)$ we have
	\[
	\varphi_h(a|s,\iota,\theta)=\kappa(a|\theta), \quad g_h(s,a,\iota,s')=q(a,s')
	\]
	and by construction $(\kappa,q)\in \Psi_{\iota-2\delta}^{h,s}$.
	This implies that the constraints of Equation~\eqref{eq:constrint_Pi1} and Equation~\eqref{eq:constrint_Pi2} are verified and we can apply Lemma~\ref{lem:cinsistent_arepersuasive} which let us conclude that $\sigma$ is $(2\delta H)$-persuasive.
	
	We now prove the optimality of the signaling scheme $\sigma$ returned by Algorithm~\ref{alg:DP}.
	By combining Lemma~\ref{lem:lemmaapproxoracletabledue} and Lemma~\ref{lem:lemmaapproxoracletable} we have that for all $s\in\Si, h\in\Hi$ and $\iota\in I^\star_h(s)$
	\[
	\V^{\s,\sigma}_h(s,\lceil\iota\rceil_\delta)\ge \V^{\s,\sigma^\star}_h(s,\iota).
	\]
	Then:
	\begin{align*}
		\OPT&\le V^{\s,\phi^{\sigma^\star}}\\
		&=\sum\limits_{s\in\Si} \beta(s) V_1^{\s,\phi^{\sigma^\star}}((s))\\
		&= \sum\limits_{s\in\Si} \beta(s) \V_1^{\s,\phi^{\sigma^\star}}(s,0)\\
		&\le \sum\limits_{s\in\Si}\beta(s) \V_1^{\s,\sigma}(s,0)\\
		&= \sum\limits_{s\in\Si}\beta(s) V_1^{\s,\phi^\sigma}(s,0)\\
		&= V^{\s,\phi^\sigma},
	\end{align*}
	which proves our statement.
\end{proof}

\section{Proofs omitted from Section~\ref{sec:oracle}}\label{sec:app_oracle}

In this section we prove that the optimization problem $\Rcal_{h,s, \iota}(M)$ admits a polynomial time algorithm that finds an exact solution. We rewrite here for clarity the problem $\Rcal_{h,s, \iota}(M)$:
\begin{subequations}
	\begin{align}\label{prob:Relax2}
		\max\limits_{\substack{ \kappa:\Theta\to\Delta(\A)\\ \tilde q:\A\times\Si\to\Delta(\D_\delta)}}&\sum\limits_{\theta\in\Theta}\sum\limits_{a\in\A} \mu_h(\theta| s) \kappa(a|\theta) \left(r_h^\s(s,a,\theta)+\sum\limits_{s'\in\Si}p_h(s'|s,a,\theta)\left(\sum\limits_{\iota'\in\D_\delta(s',M)}\tilde q(\iota'|a,s')M(s', \iota')\right)\right)  \\
		\textnormal{s.t.}\quad&\sum\limits_{a\in\A}\sum\limits_{\theta\in\Theta} \mu_h(\theta|s)\kappa(a|\theta) \left(r_h^\rec(s,a,\theta)+\sum\limits_{s'\in\Si}p_h(s'|s,a,\theta) \left(\sum\limits_{\iota'\in\D_\delta(s',M)}\iota'\cdot\tilde q(\iota'|a,s')\right)\right)\ge\iota\label{eq:constrint_Pi12}\\
		&\sum\limits_{\theta\in\Theta} \mu_h(\theta|s)\kappa(a|\theta)\left(r_h^\rec(s,a,\theta)+\sum\limits_{s'\in\Si}p_h(s'|s,a,\theta)\left(\sum\limits_{\iota'\in\D_\delta(s',M)}\iota'\cdot\tilde q(\iota'|a,s')\right)\right)\ge\nonumber\\
		&\sum\limits_{\theta\in\Theta} \mu_h(\theta|s)\kappa(a|\theta)\left(r_h^\rec(s,a',\theta)+\sum\limits_{s'\in\Si}p_h(s'|s,a',\theta)\widehat V_{h+1}^\rec(s')\right) \hspace{1.9cm}\forall a,a'\in\A. \label{eq:constrint_Pi22}
	\end{align}
\end{subequations}

Even if the optimization problem is defined over functions, clearly we can represent the functions $\kappa$ and $q$ with finite number of variables with linear constraints (to assure that the the function's outputs are distributions). 
In order to handle the quadratic terms of $\Rcal_{h,s, \iota}(M)$ we define a ``product'' variable $z_{a,s',\iota}$ for every $a\in\A,s'\in\Si$, and $\iota\in\D_{\delta}(s',M)$, which is used in place of
\[
\sum\limits_{\theta\in\Theta} \mu_h(\theta,s) \kappa(a|\theta) p_h(s'|s,a,\theta) \tilde q(\iota'|a,s')
\]
in Program~\ref{prob:Relax2} and has to satisfy the constraint for all $a\in\A$ and $s'\in\Si$:
\[
\sum\limits_{\iota'\in\D_\delta(s',M)}z_{a,s',\iota'} := \sum\limits_{\theta\in\Theta} \mu_h(\theta,s) \kappa(a|\theta) p_h(s'|s,a,\theta).
\]
For the linear variable we can introduce the non-negative variables $\xi_{a, \theta}=\kappa(a|\theta)$ for each $\theta\in\Theta$ and $a\in\A$, that need to satisfy the simplex constraint for all $\theta\in\Theta$, \emph{i.e.},
\(
\sum_{a\in\A}\xi_{a, \theta} = 1.
\)
By using these variables we can write the problem as an $\LP_{h,s,\iota}(M)$:
\begin{subequations}\label{eq:LP}
	\begin{align}
		\max\limits_{\substack{ \xi_{a,\theta}\in\mathbb{R}_+,\\ z_{a,s',\iota'}\in\mathbb{R}_+}}&\sum\limits_{a\in\A}\sum\limits_{\theta\in\Theta} \xi_{a,\theta}\mu_h(\theta| s) r_h^\s(s,a,\theta)+\sum\limits_{s'\in\Si}\sum\limits_{a\in\A}\sum\limits_{\iota'\in\D_\delta(s',M)}z_{a,s',\iota'}M(s', \iota') \\
		\textnormal{s.t.}\quad&\sum\limits_{a\in\A}\sum\limits_{\theta\in\Theta} \xi_{a,\theta}\mu_h(\theta| s) r_h^\rec(s,a,\theta)+\sum\limits_{s'\in \Si}\sum\limits_{a\in\A}\sum\limits_{\iota'\in \D_\delta(s',M)}\iota'\cdot z_{a,s',\iota'}\ge\iota\label{eq:constrint_Pi13}\\
		&\sum\limits_{\theta\in\Theta} \xi_{a,\theta}\mu_h(\theta| s) r_h^\rec(s,a,\theta)+\sum\limits_{s'\in \Si}\sum\limits_{\iota'\in \D_\delta(s',M)}\iota'\cdot z_{a,s',\iota'}\ge\nonumber\\
		&\hspace{0.5cm}\sum\limits_{\theta\in\Theta} \xi_{a,\theta}\mu_h(\theta|s)\left(r_h^\rec(s,a',\theta)+\sum\limits_{s'\in\Si}p_h(s'|s,a',\theta)\widehat V_{h+1}^\rec(s')\right) \,\,\,\, \forall a,a'\in\A\label{eq:constrint_Pi23}\\
		&\sum\limits_{\iota'\in\D_\delta(s',M)}z_{a,s',\iota'} = \sum\limits_{\theta\in\Theta} \mu_h(\theta,s) \xi_{a,\theta}p_h(s'|s,a,\theta) \hspace{2.3cm} \forall s'\in\Si,\forall a\in\A\label{eq:constraintzLP}\\
		&\sum\limits_{a\in\A}\xi_{a, \theta} = 1 \hspace{8.1cm} \forall\theta\in\Theta.
	\end{align}
\end{subequations}

Since $\LP_{h,s,\iota}(M)$ has polynomial number of variable and constraints one can find a solution in polynomial time.

The next two lemmas show that one can use $\LP_{h,s,\iota}(M)$. The first shows that any solution to $\Rcal_{h,s,\iota}(M)$ can be used to find a solution to $\LP_{h,s,\iota}(M)$.

\begin{lemma}\label{lem:RgraterLP}
	Let $(\kappa,q)\in\Psi_{\iota}^{h,s}$ be a feasible solution to $\Rcal_{h,s,\iota}(M)$, then there exists a feasible solution to $\LP_{h,s,\iota}(M)$ with the same value.
\end{lemma}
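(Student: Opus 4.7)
The plan is to take the feasible solution $(\kappa, \tilde q) \in \Psi_\iota^{h,s}$ to $\Rcal_{h,s,\iota}(M)$ and construct an explicit feasible point of $\LP_{h,s,\iota}(M)$ via the natural substitutions suggested by the linearization: set
\[
    \xi_{a,\theta} \defeq \kappa(a|\theta), \qquad z_{a,s',\iota'} \defeq \sum_{\theta \in \Theta} \mu_h(\theta|s)\,\kappa(a|\theta)\,p_h(s'|s,a,\theta)\,\tilde q(\iota'|a,s'),
\]
for every $a \in \A$, $\theta \in \Theta$, $s' \in \Si$, and $\iota' \in \D_\delta(s',M)$. Non-negativity of $\xi_{a,\theta}$ and $z_{a,s',\iota'}$ is immediate from the fact that $\kappa$, $\tilde q$, $\mu_h$, and $p_h$ all take values in $[0,1]$.

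Next, I would verify the LP constraints one at a time. The simplex constraint $\sum_{a \in \A} \xi_{a,\theta} = 1$ is exactly the fact that $\kappa(\cdot|\theta) \in \Delta(\A)$. The bookkeeping equality~\eqref{eq:constraintzLP} follows by summing $z_{a,s',\iota'}$ over $\iota' \in \D_\delta(s',M)$ and using $\sum_{\iota'} \tilde q(\iota'|a,s') = 1$, which holds because $\tilde q(\cdot|a,s') \in \Delta(\D_\delta)$ (note that if the support of $\tilde q$ escapes $\D_\delta(s',M)$ the objective would be $-\infty$, so w.l.o.g.~it is contained in $\D_\delta(s',M)$). For the honesty constraint~\eqref{eq:constrint_Pi13} and the persuasiveness constraints~\eqref{eq:constrint_Pi23}, the key observation is that, by construction,
\[
    \sum_{\iota' \in \D_\delta(s',M)} \iota'\cdot z_{a,s',\iota'} = \sum_{\theta \in \Theta} \mu_h(\theta|s)\,\kappa(a|\theta)\,p_h(s'|s,a,\theta)\,\tilde q_{\mathbb E}(a,s'),
\]
so each inner bilinear term in~\eqref{eq:constrint_Pi12} and~\eqref{eq:constrint_Pi22} is reproduced verbatim by the corresponding linear term in~\eqref{eq:constrint_Pi13} and~\eqref{eq:constrint_Pi23}. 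Feasibility of $(\kappa,\tilde q)$ in $\Psi_\iota^{h,s}$ then directly yields feasibility of $(\xi,z)$ in $\LP_{h,s,\iota}(M)$.

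Finally, an identical bilinear-to-linear rewriting shows that the objective values agree: the term $\sum_{\theta,a} \mu_h \kappa\, r^\s$ matches $\sum_{a,\theta} \xi_{a,\theta}\mu_h r^\s$, and the weighted transition term $\sum_{\theta,a}\mu_h \kappa \sum_{s'} p_h \sum_{\iota'}\tilde q M$ equals $\sum_{s',a,\iota'} z_{a,s',\iota'} M(s',\iota')$ by definition of $z$. Hence $(\xi,z)$ is feasible with the same objective value as $(\kappa,\tilde q)$. There is no real obstacle here: the statement is essentially a linearization-consistency check, and the only thing one has to be careful about is keeping track of the substitution and ensuring that $\tilde q$'s support stays inside $\D_\delta(s',M)$ so the constructed $z$ variables are well-defined and the objective remains finite.
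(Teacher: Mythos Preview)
Your proposal is correct and takes essentially the same approach as the paper: the paper defines the identical substitution $\xi_{a,\theta} := \kappa(a|\theta)$ and $z_{a,s',\iota'} := \sum_{\theta} \mu_h(\theta|s)\,\kappa(a|\theta)\,p_h(s'|s,a,\theta)\,\tilde q(\iota'|a,s')$, and then simply states that feasibility and equality of the objective values follow ``by direct calculation.'' Your write-up is in fact more detailed than the paper's, which neither spells out the constraint-by-constraint verification nor comments on the support of $\tilde q$.
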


\begin{proof}
	Let $(\kappa,\tilde q)$ be a feasible solution to $\Rcal_{h,s,\iota}(M)$. Then define:
	\[
	z_{a,s',\iota'}:=\sum\limits_{\theta\in\Theta} \mu_h(\theta,s) \kappa(a|\theta) p_h(s'|s,a,\theta) \tilde q(\iota'|a,s'),\forall a\in\A, s'\in\Si,\iota'\in\D_\delta(s',M),
	\]
	and 
	\[
	\xi_{a,\theta}:=\kappa(a|\theta),\forall a\in\A, \theta\in\Theta.
	\]
	Then one can show by direct calculation that all the constraints of $\LP_{h,s,\iota}(M)$ are satisfied and that the objective value of $\LP_{h,s,\iota}(M)$ is $\Omega_{h,s,\iota}(M)$ (which is the value of program $\Rcal_{h,s, \iota}(M)$).
\end{proof}

The next lemma show a result which is ``complementary'' to the one above.

\begin{lemma}\label{lem:LPgreaterR}
	Given a feasible solution of $\LP_{h,s,\iota}(M)$ one can find a solution to $\Rcal_{h,s,\iota}(M)$ with at least the same value.
\end{lemma}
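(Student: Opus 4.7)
The plan is essentially to invert the substitution used in the proof of Lemma~\ref{lem:RgraterLP}, recovering a pair $(\kappa,\tilde q)$ from the LP variables $(\xi,z)$. Concretely, I would set $\kappa(a|\theta):=\xi_{a,\theta}$, so that the simplex constraint $\sum_{a}\xi_{a,\theta}=1$ of $\LP_{h,s,\iota}(M)$ immediately yields $\kappa(\cdot|\theta)\in\Delta(\A)$. Then, for every $a\in\A$ and $s'\in\Si$, denote the marginal mass by
\[
m(a,s'):=\sum_{\iota'\in\D_\delta(s',M)}z_{a,s',\iota'},
\]
which, by the linking constraint~\eqref{eq:constraintzLP}, equals $\sum_{\theta}\mu_h(\theta|s)\xi_{a,\theta}p_h(s'|s,a,\theta)$.

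Next, define $\tilde q(\iota'|a,s'):=z_{a,s',\iota'}/m(a,s')$ whenever $m(a,s')>0$ and $\iota'\in\D_\delta(s',M)$, and set $\tilde q(0|a,s'):=1$ in the degenerate case $m(a,s')=0$ (which is legal because $0\in\D_\delta(s',M)$, as noted in the footnote); put $\tilde q(\iota'|a,s'):=0$ elsewhere. This gives $\tilde q(\cdot|a,s')\in\Delta(\D_\delta)$. The critical identity
\[
\tilde q(\iota'|a,s')\cdot\sum_{\theta}\mu_h(\theta|s)\kappa(a|\theta)\,p_h(s'|s,a,\theta)=z_{a,s',\iota'}
\]
holds for every $\iota'\in\D_\delta(s',M)$: when $m(a,s')>0$ it is true by construction; when $m(a,s')=0$, non-negativity of $z$ together with $\sum_{\iota'}z_{a,s',\iota'}=m(a,s')=0$ forces $z_{a,s',\iota'}=0$, so both sides vanish.

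Having this identity, every bilinear term of the form $\sum_{a,s',\iota'}(\cdots)\cdot z_{a,s',\iota'}$ in $\LP_{h,s,\iota}(M)$ rewrites, term by term, as the corresponding triple sum $\sum_{a,s',\iota'}(\cdots)\cdot\mu_h(\theta|s)\kappa(a|\theta)p_h(s'|s,a,\theta)\,\tilde q(\iota'|a,s')$ appearing in $\Rcal_{h,s,\iota}(M)$. Plugging this in, the objective of the LP at $(\xi,z)$ coincides with $\widetilde F_{h,s,M}(\kappa,\tilde q)$; constraint~\eqref{eq:constrint_Pi13} becomes~\eqref{eq:constrint_Pi12}, and the family of constraints~\eqref{eq:constrint_Pi23} becomes~\eqref{eq:constrint_Pi22}. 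Hence $(\kappa,\tilde q_{\mathbb{E}})\in\Psi^{h,s}_{\iota}$ and the objective value is preserved, which yields the claim.

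The only minor obstacle is the degenerate case $m(a,s')=0$, where the ratio defining $\tilde q$ is formally undefined; but as shown above the $z$-entries vanish automatically there, so any choice of a valid distribution in $\Delta(\D_\delta)$ works and the identity still holds. Everything else is a direct term-by-term substitution that mirrors (in reverse) the calculation in Lemma~\ref{lem:RgraterLP}.
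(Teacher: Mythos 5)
Your construction is exactly the paper's: set $\kappa(a|\theta)=\xi_{a,\theta}$, define $\tilde q$ by normalizing $z_{a,s',\iota'}$ by the marginal mass (with the indicator at $0$ in the degenerate case), use the linking constraint~\eqref{eq:constraintzLP} to get the identity $z_{a,s',\iota'}=\sum_{\theta}\mu_h(\theta|s)\kappa(a|\theta)p_h(s'|s,a,\theta)\tilde q(\iota'|a,s')$ even when the mass vanishes, and then transfer objective and constraints term by term. This matches the paper's proof, including the handling of the degenerate case, so the proposal is correct.
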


\begin{proof}
	Given a feasible solution $z$ and $\xi$ to $\LP_{h,s,\iota}(M)$. Construct a solution $(\kappa,\tilde q)$ to $\Rcal_{h,s,\iota}(M)$ as follow for each $\iota'\in\D_\delta(s',M), a\in\A$ and $s\in\Si$:
	\[
	\tilde q(\iota^\prime|a,s'):=
	\begin{cases}
	\frac{z_{a,s',\iota'}}{\sum\limits_{\theta\in\Theta} \mu_h(\theta,s) \xi_{a,\theta} p_h(s'|s,a,\theta)}&\text{if}\sum\limits_{\theta\in\Theta} \mu_h(\theta,s)\xi_{a,\theta} p_h(s'|s,a,\theta)>0\\
	\mathbb{I}(\iota'=0)&\text{otherwise}
	\end{cases}
	\]
	and for all $a\in\A$ and $\theta\in\Theta$:
	\[
	\kappa(a|\theta) = \xi_{a,\theta}.
	\]
	
	Notice that if for a specific $h\in\Hi,s\in\Si,s'\in\Si$ and $a\in\A$, the constraint of Equation~\eqref{eq:constraintzLP} impose that if $\sum_{\theta\in\Theta} \mu_h(\theta,s) \xi_{a,\theta}p_h(s'|s,a,\theta)=0$ then $z_{a,s',\iota'}=0$ for all $\iota'$. This means that in any case the following condition holds:
	\begin{align}\label{eq:cond_z_q}
		z_{a,s',\iota'}=\sum\limits_{\theta\in\Theta} \mu_h(\theta,s) \xi_{a,\theta}p_h(s'|s,a,\theta)\tilde q(\iota'|a,s').
	\end{align}
    We now fix any $h\in\Hi, s\in\Si$ and $\iota\in\D_\delta$ show that the constraints of $\Rcal_{h,s,\iota}(M)$ are satisfied, by using Equation~\eqref{eq:cond_z_q}.
	Let us start with the constraint of Equation~\eqref{eq:constrint_Pi12}:
	\begin{align*}
		&\sum\limits_{a\in\A}\sum\limits_{\theta\in\Theta} \mu_h(\theta|s)\kappa(a|\theta)\left(r_h^\rec(s,a,\theta)+\sum\limits_{s'\in\Si}p_h(s'|s,a,\theta) \left(\sum\limits_{\iota'\in\D_\delta(s',M)}\iota'\cdot\tilde q(\iota'|a,s')\right)\right)\\
		=&\sum\limits_{a\in\A}\sum\limits_{\theta\in\Theta} \xi_{a,\theta}\mu_h(\theta|s)r_h^\rec(s,a,\theta)+\sum\limits_{a\in\A}\sum\limits_{\theta\in\Theta}\sum\limits_{\iota\in\D_\delta(s',M)} \sum\limits_{s'\in\Si}\iota'\xi_{a,\theta}\mu_h(\theta|s)p_h(s'|s,a,\theta)\tilde q(\iota'|a,s')\\
		=&\sum\limits_{a\in\A}\sum\limits_{\theta\in\Theta} \xi_{a,\theta}\mu_h(\theta|s)r_h^\rec(s,a,\theta)+\sum\limits_{a\in\A}\sum\limits_{s\in\Si}\sum\limits_{\iota\in\D_\delta(s',M)} \iota'\cdot z_{a,s',\iota'}\\
		&\ge \iota,
	\end{align*}
where the last inequality hold as $(z,\xi)$ is a feasible solution to $\LP_{h,s,\iota}(M)$. This proves that the constraint of Equation~\eqref{eq:constrint_Pi12} is satisfied.

For Equation~\eqref{eq:constrint_Pi22} similarly:
\begin{align*}
	&\sum\limits_{\theta\in\Theta} \mu_h(\theta|s)\kappa(a|\theta)\left(r_h^\rec(s,a,\theta)+\sum\limits_{s'\in\Si}p_h(s'|s,a,\theta)\left(\sum\limits_{\iota'\in\D_\delta(s',M)}\iota'\cdot\tilde q(\iota'|a,s')\right)\right)\nonumber\\
	=&\sum\limits_{\theta\in\Theta} \xi_{a,\theta}\mu_h(\theta|s)r_h^\rec(s,a,\theta)+\sum\limits_{\theta\in\Theta}\sum\limits_{\iota\in\D_\delta(s',M)} \sum\limits_{s'\in\Si}\iota'\xi_{a,\theta}\mu_h(\theta|s)p_h(s'|s,a,\theta)\tilde q(\iota'|a,s')\\
    =&\sum\limits_{\theta\in\Theta} \xi_{a,\theta}\mu_h(\theta|s)r_h^\rec(s,a,\theta)+\sum\limits_{\theta\in\Theta}\sum\limits_{\iota\in\D_\delta(s',M)} \sum\limits_{s'\in\Si}\iota'\cdot z_{a,s',\iota'}\\
    \ge&\sum\limits_{\theta\in\Theta} \xi_{a,\theta}\mu_h(\theta|s)\left(r_h^\rec(s,a',\theta)+\sum\limits_{s'\in\Si}p_h(s'|s,a',\theta)\widehat V_{h+1}^\rec(s')\right),
\end{align*}
where in the last inequality we used that $(z,\xi)$ is a feasible solution to $\LP_{h,s,\iota}(M)$. Thus the constraint of Equation~\eqref{eq:constrint_Pi22} is verified.
This proves that $(\kappa,\tilde q)$ is feasible for $\Rcal_{h,s,\iota}(M)$.
Then, by plugging Equation~\eqref{eq:cond_z_q} into the objective of $\Rcal_{h,s,\iota}(M)$ one directly prove that the values of the two problems is the same.
\end{proof}

Now we are ready to prove the main result of this section:
\quadraticlemma*
\begin{proof}
    Take a solution $(z,\xi)$ to $\LP_{h,s,\iota}(M)$. This can be done in polynomial time as $\LP_{h,s,\iota}(M)$ is a linear program with polynomial many variables and constraints. Then apply to the solution $(z,\xi)$ the trasformation used in Lemma~\ref{lem:LPgreaterR} to obtain a feasible solution to $\Rcal_{h,s,\iota}(M)$. This gives an optimal solution to $\Rcal_{h,s,\iota}(M)$.
    To prove the last result, assume that there would exists a feasible solution $(\kappa',\tilde q')$ to $\Rcal_{h,s,\iota}(M)$ such that $\tilde F_{h,s,M}(\kappa', \tilde q')>\tilde F_{h,s,M}(\kappa, \tilde q)$, and apply the trasformation defined in Lemma~\ref{lem:RgraterLP}. This would find a solution $(z',\xi')$ strictly better then $(z,\xi)$ which contradicts the optimality of $(z,\xi)$ for $\LP_{h,s,\iota}(M)$.
\end{proof}

\existenceOracle*
\begin{proof}
	Fix any $\bar h\in\Hi, \bar s\in\Si$ and $\bar\iota\in\D_\delta$. We show that for  $h=\bar h+1,s\in\Si$ and for any distribution $\gamma\in\Delta(\D_\delta)$, it holds that:
	\begin{equation}\label{eq:th3_concavity}
	\textstyle{
		M_{h}^\delta\left(s,\lfloor\sum_{\iota\in\D_\delta(s,M_h^\delta)}\iota\gamma(\iota)\rfloor_\delta\right)\ge \sum_{\iota\in\D_\delta(s,M_h^\delta)} \gamma(\iota) M_h^\delta(s,\iota),
	}
	\end{equation}
	where the table $M_h^\delta$ is the one built by Algorithm~\ref{alg:DP} with oracle of  Algorithm~\ref{alg:oracle}.

	Define $\iota_{\mathbb{E}}:=\sum_{\iota\in\D_\delta(s,M_h^\delta)}\iota\cdot\gamma(\iota)$. Then for every $\iota,\iota'\in\D_\delta$ and $\iota'\le \iota$ we have that
	\begin{equation*}
		\Omega_{h,s,\iota'}(M_h^\delta)\ge 	\Omega_{h,s,\iota}(M_h^\delta),
	\end{equation*}
	as $\iota$ it only appears in the RHS of the constraint of the problem $\Rcal_{h,s,\iota}(M_h^\delta)$.
	
	By construction of Algorithm~\ref{alg:DP} we have that:
	\[
	M_h^\delta(s,\iota) = \Omega_{h,s, \iota}(M_{h+1}^\delta),
	\]
	and that by Lemma~\ref{lem:RgraterLP} and Lemma~\ref{lem:LPgreaterR}, $ \Omega_{h,s, \iota}(M_{h+1}^\delta)$ is equal to the value of $\LP_{h,s,\iota}(M)$ described by Equations~\eqref{eq:LP}, for each $\iota$.
	This means that the function $\iota\mapsto \Omega_{h,s,\iota}(M_{h+1}^\delta)$ is concave~\cite[Theorem~5.1]{bertsekas1998network}.
	
	Thus for every distribution $\gamma\in\Delta(\D_\delta)$ we have:
	\[
	\Omega_{h,s,(\sum_{\iota\in\D_\delta}\iota\cdot\gamma(\iota))}(M_{h+1}^\delta)\ge\sum\limits_{\iota\in\D_{\delta}}\gamma(\iota)\cdot \Omega_{h,s,\iota}(M_{h+1}^\delta).
	\]
	Observe that we cannot apply directly the above concavity property as, for every $s\in\Si$, in Equation~\eqref{eq:th3_concavity} we are only selecting the components of $\gamma$ such that $M_h^\delta>-\infty$.
	
	To solve this problem we can, for any distribution $\gamma\in\Delta(\D_\delta)$, table $M$ and $s\in\Si$. define the new distribution $\tilde \gamma$ on $\D_\delta(s,M)$ that puts all the mass on the $-\infty$ components of $M$ on $0$.  Formally $\tilde \gamma(\iota)=\gamma(\iota)$ for all $\iota \in \D_{\delta(s,M)}$, and $\tilde\gamma(0)=\gamma(0)+\sum_{\iota\in\D_\delta\setminus \D_\delta(s,M)}\gamma(\iota)$. Note that with this definition we have $\iota_\mathbb E = \sum_{\iota\in\D_\delta(s,M)}\iota\cdot \tilde \gamma(\iota)$ and $\sum_{\iota\in\D_\delta(s,M)}\gamma(\iota)\Omega_{h,s,\iota}(M)\le\sum_{\iota\in\D_\delta(s,M)}\tilde\gamma(\iota)\Omega_{h,s,\iota}(M)$.
%
%
	Combining these inequalities we can conclude that:
	\begin{align*}
		\textstyle{M_h^{\delta}(s,\lfloor\sum_{\iota\in\D_\delta(s,M_h^{\delta})}\iota\gamma(\iota)\rfloor_\delta)}&=\Omega_{h,s,\lfloor\iota_{\mathbb E}\rfloor_\delta}(M_{h+1}^\delta)\\
		&\ge\Omega_{h,s,\iota_{\mathbb E}}(M_{h+1}^\delta)\\
		&\ge \sum\limits_{\iota\in\D_{\delta}(s,M_h^{\delta})}\tilde \gamma(\iota)\cdot \Omega_{h,s,\iota}(M_{h+1}^\delta)\\
		&\ge \sum\limits_{\iota\in\D_{\delta}(s,M_h^{\delta})} \gamma(\iota)\cdot \Omega_{h,s,\iota}(M_{h+1}^\delta)\\
		&=\sum\limits_{\iota\in\D_{\delta}(s,M_h^{\delta})}\gamma(\iota)\cdot M^\delta_h(s,\iota),
	\end{align*} 
	where the last inequality follows since it is easy to see that $\Omega_{h,s, 0}(M_{h+1}^\delta)\ge0$ and the last equality follows from Lemma~\ref{lem:RgraterLP}. This proves Equation~\eqref{eq:th3_concavity}.
	
	Now assume that $(\kappa,\tilde q)$ is a the optimal solution to $\Rcal_{\bar h,\bar s,\bar \iota}(M_{\bar h+1}^\delta)$.
	Clearly if $(\kappa, \tilde q_{\mathbb{E}})\in\Psi_{\bar\iota}^{\bar h,\bar s}$ then $(\kappa, q)\in \Psi_{\bar\iota-\delta}^{\bar h,\bar s}$, where $q= \lfloor\tilde q_{\mathbb{E}}\rfloor_\delta$.
	
	Using the inequality of Equation~\eqref{eq:th3_concavity} we can readily perform the following inequalities:
	\begin{align*}
		F_{\bar h,\bar s,M_{\bar h+1}^\delta}(\kappa,  q) &= \sum\limits_{\theta\in\Theta}\sum\limits_{a\in\A} \mu_h(\theta| \bar s) \kappa(a|\theta)\left(r_h^\s( \bar s,a,\theta)+\sum\limits_{s'\in\Si}p_h(s'|\bar s,a,\theta)M_{\bar h+1}^\delta(s', q(a,s'))\right)\\
		&\ge \sum\limits_{\theta\in\Theta}\sum\limits_{a\in\A} \mu_h(\theta| \bar s) \kappa(a|\theta)\left(r_h^\s(\bar s,a,\theta)+\sum\limits_{s'\in\Si}p_h(s'|\bar  s,a,\theta)\sum\limits_{\iota'\in\D_\delta(\bar s,M_{\bar h+1}^\delta)}\tilde q(\iota'|a,s')M_{\bar h+1}^\delta(s', \iota')\right)\\
		&=\tilde F_{\bar h,\bar s,M_{\bar h+1}^\delta}(\kappa, \tilde q)\\
		&=v\\
		&=\Omega_{\bar h,\bar s,\bar \iota}(M_{\bar h+1}^\delta)\\
		&\ge \Pi_{\bar h,\bar s,\bar \iota}(M_{\bar h+1}^\delta)
	\end{align*}
	and thus Algorithm~\ref{alg:oracle} returns a tuple that satisfies the conditions required by Definition~\ref{def:approxoracle}.
\end{proof}

\end{document}